\providecommand{\U}[1]{\protect\rule{.1in}{.1in}}
\theoremstyle{plain}
\newtheorem{theorem}{Theorem}
\newtheorem{corollary}[theorem]{Corollary}
\newtheorem{example}[theorem]{Example}
\newtheorem{proposition}[theorem]{Proposition}
\newtheorem{lemma}[theorem]{Lemma}
\newtheorem{definition}[theorem]{Definition}
\begin{document}
\markboth{IEEE Transactions on Information Theory,~Vol.~XX, No.~X, October~2013}{Lim and Comon: Blind Multilinear Identification}

\title{Blind Multilinear Identification}
\author{Lek-Heng~Lim$^{\ast}$ and ~Pierre~Comon$^{\ddagger}$%
,~\IEEEmembership{Fellow,~IEEE}\thanks{$\ast$
Lek-Heng Lim is with the Computational and Applied Mathematics Initiative,
Department of Statistics, University of Chicago, 5734 South University Avenue,
Chicago, IL 60637, USA. $\ddagger$  \textit{Corresponding Author}: Pierre Comon is with GIPSA-Lab CNRS
UMR5216, Grenoble Campus, BP.46, F-38402 St Martin d'Heres cedex, France.{}\par 
\hfill \copyright 2013  IEEE}}
\maketitle

\begin{abstract}
We discuss a technique that allows blind recovery of signals or blind
identification of mixtures in instances where such recovery or identification
were previously thought to be impossible: (i) closely located or highly
correlated sources in antenna array processing, (ii) highly correlated
spreading codes in CDMA radio communication, (iii) nearly dependent spectra in
fluorescent spectroscopy. This has important implications --- in the case of
antenna array processing, it allows for joint localization and extraction of
multiple sources from the measurement of a noisy mixture recorded on multiple
sensors in an entirely deterministic manner. In the case of CDMA, it allows
the possibility of having a number of users larger than the spreading gain. In
the case of fluorescent spectroscopy, it allows for detection of nearly
identical chemical constituents. The proposed technique involves the solution
of a bounded coherence low-rank multilinear approximation problem. We show that
bounded coherence allows us to establish existence and uniqueness of the
recovered solution. We will provide some statistical motivation for the
approximation problem and discuss greedy approximation bounds. To provide the
theoretical underpinnings for this technique, we develop a corresponding
theory of sparse separable decompositions of functions, including notions of
rank and nuclear norm that specialize to the usual ones for matrices and
operators but apply to also hypermatrices and tensors.

\end{abstract}

\section{Introduction}\label{intro-sec}

\IEEEPARstart{T}{here} are two simple ideas for reducing the complexity or
dimension of a problem that are widely applicable because of their simplicity
and generality:

\begin{itemize}
\item \textbf{Sparsity:} resolving a complicated entity, represented by a
function $f$, into a sum of a small number of simple or elemental
constituents:
\[
f=\sum_{p=1}^{r}\alpha_{p}g_{p}.
\]

\item \textbf{Separability:} decoupling a complicated entity, represented by a
function $g$, that depends on multiple factors into a product of simpler
constituents, each depending only on one factor:
\[
g(\mathbf{x}_{1},\dots,\mathbf{x}_{d})=\prod_{k=1}^{d}\varphi_{k}%
(\mathbf{x}_{k}).
\]

\end{itemize}

The two ideas underlie some of the most useful techniques in engineering and
science --- Fourier, wavelets, and other orthogonal or sparse representations
of signals and images, singular value and eigenvalue decompositions of
matrices, separation-of-variables, Fast Fourier Transform, mean field
approximation, etc. This article examines the model that combines these two
simple ideas:%
\begin{equation}
f(\mathbf{x}_{1},\dots,\mathbf{x}_{d})=\sum_{p=1}^{r}\alpha_{p}\prod_{k=1}%
^{d}\varphi_{kp}(\mathbf{x}_{k}), \label{fund}%
\end{equation}
and we are primarily interested in its \textit{inverse problem},
i.e., identification of the factors $\varphi_{kp}$ based on noisy measurements
of $f$. We shall see that this is a surprisingly effective method for a wide
range of identification problems.

Here, $f$ is approximately encoded by $r$ scalars, $\boldsymbol{\alpha}%
=(\alpha_{1},\dots,\alpha_{r})\in\mathbb{C}^{r}$, and $dr$ functions,
$\varphi_{kp}$, $k=1,\dots,d$; $p=1,\dots,r$. Since $d$ and $r$ are both
assumed to be small, we expect \eqref{fund} to be a very compact, possibly
approximate, representation of $f$. We will assume that all these functions
live in a Hilbert space with inner product $\langle \cdot\, , \cdot \rangle$, and that $\varphi_{kp}$ are of unit norm (clearly
possible since the norm of $\varphi_{kp}$ can be `absorbed into' the
coefficient $\alpha_{p}$ in \eqref{fund}).

Let $\mu_{k}=\max_{p\neq q}\lvert\langle\varphi_{kp},\varphi_{kq}\rangle
\rvert$ and define the \textit{relative incoherence} $\omega_{k}=(1-\mu
_{k})/\mu_{k}$ for $k=1,\dots,d$. Note that $\mu_{k}\in[0,1]$ and $\omega
_{k}\in[0,\infty]$. We will show that if $d\geq3$, and
\begin{equation}
\sum_{k=1}^{d}\omega_{k}\geq2r-1, \label{unique}%
\end{equation}
then the decomposition in \eqref{fund} is essentially \textit{unique} and
\textit{sparsest} possible, i.e., $r$ is minimal. Hence we may in principle
identify $\varphi_{kp}$ based only on measurements of the mixture $f$.

One of the keys in the identifiability requirement is that $d\geq3$ or
otherwise (when $d=1$ or $2$) the result would not hold. We will show that the
condition $d\geq3$ however leads to a difficulty (that does not happen when
$d=1$ or $2$). Since it is rarely, if not never, the case that one has the
exact values of $f$, the decomposition \eqref{fund} is only useful in an
idealized scenario. In reality, one has $\hat{f}=f+\varepsilon$, an estimate
of $f$ corrupted by noise $\varepsilon$. Solving the inverse problem to
\eqref{fund} would require that we solve a best approximation problem. For
example, with the appropriate noise models (see Section~\ref{est}), the best
approximation problem often takes the form%
\begin{equation}
\operatorname*{argmin}_{\boldsymbol{\alpha}\in\mathbb{C}^{r},\;\lVert
\varphi_{kp}\rVert=1}\left\Vert \hat{f}-\sum_{p=1}^{r}\alpha_{p}\prod
_{k=1}^{d}\varphi_{kp}\right\Vert , \label{approx}%
\end{equation}
with $\lVert\,\cdot\,\rVert$ an $L^{2}$-norm. Now, the trouble is that when
$d\geq3$, this best approximation problem may not have a solution --- because
the infimum of the loss function is unattainable in general, as we will
discuss in Section~\ref{sec:exist}. In view of this, our next result is that
when
\begin{equation}
\prod_{k=1}^{d}(1+\omega_{k})>r, \label{exist}%
\end{equation}
the infimum in \eqref{approx} is always attainable, thereby alleviating the
aforementioned difficulty. A condition that meets both \eqref{unique} and
\eqref{exist} follows from the arithmetic-geometric mean
inequality
\[
\left[  \prod_{k=1}^{d}(1+\omega_{k})\right]  ^{1/d}\leq1+\frac{1}{d}%
\sum_{k=1}^{d}\omega_{k}.
\]

\section{Sparse separable decompositions\label{sepdecomp}}

The notion of \textit{sparsity} dates back to harmonic analysis \cite{Stein,
Young, Mallat} and approximation theory \cite{Temlyakov}, and has received a
lot of recent attention in compressive sensing
\cite{Dono06:it,DonoH01:ieeeit,CoheDD09:jams}. The notion of
\textit{separability} is also classical --- the basis behind the
separation-of-variables technique in partial differential equations
\cite{svpde} and special functions \cite{svbook}, fast Fourier transforms on
arbitrary groups \cite{svfft}, mean field approximations in statistical
physics \cite{StatPhy}, and the na\"{\i}ve Bayes model in machine learning
\cite{Bishop,LimC09:jchemo}. We describe a simple model that incorporates the
two notions.

The function $f:X\rightarrow\mathbb{C}$ or $\mathbb{R}$ to be resolved into
simpler entities will be referred to as our \textit{target function}. We will
treat the discrete ($X$ is finite or countably infinite) and continuous ($X$
is a continuum) cases on an equal footing. The discrete cases are when $f$ is
a vector (if $X=[n_{1}] :=\{1,\dots,n_{1}\}$), a matrix (if $X=[n_{1}]\times[
n_{2}]$), a hypermatrix (if $X=[n_{1}]\times[ n_{2}]\times\dots\times[ n_{d}%
]$), while the usual continuous cases are when $f$ is a function on some
domain $X=\Omega\subseteq\mathbb{R}^{m}$ or $\mathbb{C}^{m}$. In the discrete
cases, the set of target functions under consideration are identified with
$\mathbb{C}^{n_{1}}$, $\mathbb{C}^{n_{1}\times n_{2}}$, $\mathbb{C}%
^{n_{1}\times n_{2}\times\dots\times n_{d}}$ respectively whereas in the
continuous cases, we usually impose some additional regularity structures such
integrability or differentability, so that the set of target functions under
consideration are $L^{2}(\Omega)$ or $C^{\infty}(\Omega)$ or $H^{k}%
(\Omega)=W^{k,2}(\Omega)$, etc. We will only assume that the space of target
functions is a Hilbert space. Note that the requirement $d\geq3$ implies that
$f$ is at least a $3$-dimensional hypermatrix in the discrete case or a function
of at least three continuous variables, i.e., $m\geq3$, in the continuous
case. The identifiability does not work for (usual $2$-dimensional) matrices
or bivariate functions. With \eqref{fund} in mind, we will call $f$ a
$d$\textit{-partite} or \textit{multipartite} function if we wish to partition
its arguments into $d$ blocks of variables.

We will briefly examine the decompositions and approximations of our target
function into a sum or integral of separable functions, adopting a tripartite
notation for simplicity. There are three cases:

\begin{itemize}
\item \textbf{Continuous:}%
\begin{equation}
f(\mathbf{x},\mathbf{y},\mathbf{z})=\int_{T}\theta(\mathbf{x},\mathbf{t}%
)\varphi(\mathbf{y},\mathbf{t})\psi(\mathbf{z},\mathbf{t})\,d\nu(\mathbf{t}).
\label{cont}%
\end{equation}
Here, we assume that $\nu$ is some given Borel measure and that $T$ is compact.

\item \textbf{Semidiscrete:}%
\begin{equation}
f(\mathbf{x},\mathbf{y},\mathbf{z})=\sum_{p=1}^{r}\theta_{p}(\mathbf{x}%
)\varphi_{p}(\mathbf{y})\psi_{p}(\mathbf{z}). \label{semi}%
\end{equation}
This may be viewed as a discretization of the continuous case in the
$\mathbf{t}$ variable, i.e., $\theta_{p}(\mathbf{x})=\theta(\mathbf{x}%
,\mathbf{t}_{p})$, $\varphi_{p}(\mathbf{y})=\varphi(\mathbf{y},\mathbf{t}%
_{p})$, $\psi_{p}(\mathbf{z})=\psi(\mathbf{z},\mathbf{t}_{p})$.

\item \textbf{Discrete:}%
\begin{equation}
a_{ijk}=\sum_{p=1}^{r}u_{ip}v_{jp}w_{kp}. \label{disc}%
\end{equation}
This may be viewed as a further discretization of the semidiscrete case,
i.e.$~a_{ijk}=f(\mathbf{x}_{i},\mathbf{y}_{j},\mathbf{z}_{k})$, $u_{ip}%
=\theta_{p}(\mathbf{x}_{i})$, $v_{jp}=\varphi_{p}(\mathbf{y}_{j})$,
$w_{kp}=\psi_{p}(\mathbf{z}_{k})$.
\end{itemize}

It is clear that when $i,j,k$ take finitely many values, the discrete
decomposition \eqref{disc} is always possible with a finite $r$ since the
space is of finite dimension. If $i,j,k$ could take infinitely many values,
then the finiteness of $r$ requires that equality be replaced by approximation
to any arbitrary precision $\varepsilon>0$ in some suitable norm. This follows
from the following observation about the semidiscrete decomposition: The space
of functions with a semidiscrete representation as in \eqref{semi}, with $r$
finite, is dense in $C^{0}(\Omega)$, the space of continuous functions. This
is just a consequence of the Stone-Weierstrass theorem \cite{Cybe89:mcss}.
Discussion of the most general case \eqref{cont} would require us to go into
integral operators, which we will not do as in the present framework we are
interested in applications that rely on the inverse problems corresponding to
\eqref{semi} and \eqref{disc}. Nonetheless \eqref{cont} is expected to be
useful and we state it here for completeness. Henceforth, we will simply refer
to \eqref{semi} or \eqref{disc} as a \textit{multilinear decomposition}, by
which we mean a decomposition into a \textit{linear combination of separable
functions}. We note here that the finite-dimensional discrete version has been
studied under several different names --- see Section~\ref{applications-sec}.
Our emphasis in this paper is the semidiscrete version \eqref{semi} that
applies to multipartite functions on arbitrary domains and are not necessarily
finite-dimensional. As such, we will frame most of our discussions in
terms of the semidiscrete case, which of course includes the discrete
version \eqref{disc} as a special case (when $\mathbf{x},\mathbf{y}%
,\mathbf{z}$ take only finite discrete values).

\begin{example}[Mixture of Gaussians]
Multilinear decompositions arise in many contexts. In machine
learning or nonparametric statistics, a fact of note is that Gaussians are
separable
\[
\exp(x^{2}+y^{2}+z^{2})=\exp(x^{2})\exp(y^{2})\exp(z^{2}).
\]
More generally for symmetric positive-definite $A\in\mathbb{R}^{n\times n}$
with eigenvalues $\Lambda=\operatorname{diag}(\lambda_{1},\dots,\lambda_{n}%
)$,
\[
\exp(\mathbf{x}^{\mathsf{T}}A\mathbf{x})=\exp(\mathbf{z}^{\mathsf{T}}%
\Lambda\mathbf{z})=\prod_{i=1}^{n}\exp(\lambda_{i}z_{i}^{2}),
\]
under a linear change of coordinates $\mathbf{z}=Q^{\mathsf{T}}\mathbf{x}$
where $A=Q\Lambda Q^{\mathsf{T}}$. Hence Gaussian mixture models of the form
\[
f(\mathbf{x})=\sum_{j=1}^{m}\alpha_{j}\exp[(\mathbf{x}-\boldsymbol{\mu}%
_{j})^{\mathsf{T}}A_{j}(\mathbf{x}-\boldsymbol{\mu}_{j})],
\]
where $A_{i}A_{j}=A_{j}A_{i}$ for all $i\neq j$ (and therefore $A_{1}%
,\dots,A_{m}$ have a common eigenbasis) may likewise be transformed with a
suitable linear change of coordinates into a multilinear decomposition as in \eqref{semi}.
\end{example}

We will later see several more examples from signal processing,
telecommunications, and spectroscopy.

\subsection{Modeling}

The multilinear decomposition --- an \textit{additive} decomposition into
\textit{multiplicatively} decomposable components --- is extremely simple but
models a wide range of phenomena in signal processing and spectroscopy. The
main message of this article is that the corresponding inverse problem ---
recovering the factors $\theta_{p},\varphi_{p},\psi_{p}$ from noisy
measurements of $f$ --- can be solved under mild assumptions and yields a
class of techniques for a range of applications
(cf.\ Section~\ref{applications-sec}) that we shall collectively call
\textit{multilinear identification}. We wish to highlight in particular that
multilinear identification gives a deterministic approach for solving the
problem of joint localization and estimation of radiating sources with short
data lengths. This is superior to previous cumulants-based approaches \cite{ComoJ10}, which require
(i) longer data lengths; and (ii) statistically independent sources.

The experienced reader would probably guess that such a powerful technique
must be fraught with difficulties and he would be right. The inverse problem
to \eqref{semi}, like most other inverse problems, faces issues of existence,
uniqueness, and computability. The approximation problem involved can be
ill-posed in the worst possible way (cf.\ Section~\ref{problem-sec}).
Fortunately, in part prompted by recent work in
compressed sensing \cite{CandR07:ip,CoheDD09:jams, Dono06:it,
DonoE03:pnas, GribN03:it} and matrix completion \cite{CR1,CandT10:it, FHB, Gross}), we
show that mild assumptions on \textit{coherence} allows one to overcome
most of these difficulties (cf.\ Section~\ref{angular-sec}).

\section{Finite rank multipartite functions\label{problem-sec}}

In this section, we will discuss the notion of rank, which measures the
sparsity of a multilinear decomposition, and the notion of Kruskal rank, which
measures the uniqueness of a multilinear decomposition in a somewhat more
restrictive sense. Why is uniqueness important? It can be answered in one
word: Identifiability. More specifically, a unique decomposition means that we
may in principle identify the factors. To be completely precise, we will first
need to define the terms in the previous sentence, namely, `unique',
`decomposition', and `factor'. Before we do that, we will introduce the tensor
product notation. It is not necessary to know anything about tensor product of
Hilbert spaces to follow what we present below. We shall assume that all our
Hilbert spaces are separable and so there is no loss of generality in assuming
at the outset that they are just $L^{2}(X)$ for some $\sigma$-finite $X$.

Let $X_{1},\dots,X_{d}$ be $\sigma$-finite measurable spaces. There is a
natural Hilbert space isomorphism
\begin{equation}
L^{2}(X_{1}\times\dots\times X_{d})\cong L^{2}(X_{1})\otimes\dots\otimes
L^{2}(X_{d}). \label{iso}%
\end{equation}
In other words, every $d$-partite $L^{2}$-function $f:X_{1}\times\dots\times
X_{d}\rightarrow\mathbb{C}$ may be expressed as\footnote{Point values of
$L^{p}$-functions are undefined in general. So equations like these are taken
to implicitly mean \textit{almost everywhere}. Anyway all functions that arise
in our applications will at least be continuous and so this is really not a
point of great concern.}
\begin{equation}
f(\mathbf{x}_{1},\dots,\mathbf{x}_{d})=\sum_{p=1}^{\infty}\varphi
_{1p}(\mathbf{x}_{1})\cdots\varphi_{dp}(\mathbf{x}_{d}), \label{sep1}%
\end{equation}
with $\varphi_{kp}\in L^{2}(X_{k})$. The \textit{tensor product} of functions
$\varphi_{1}\in L^{2}(X_{1}),\dots,\varphi_{d}\in L^{2}(X_{d})$ is denoted by
$\varphi_{1}\otimes\dots\otimes\varphi_{d}$ and is the function in
$L^{2}(X_{1}\times\dots\times X_{d})$ defined by%
\[
\varphi_{1}\otimes\dots\otimes\varphi_{d}(\mathbf{x}_{1},\dots,\mathbf{x}%
_{d})=\varphi_{1}(\mathbf{x}_{1})\cdots\varphi_{d}(\mathbf{x}_{d}).
\]
With this notation, we may rewrite \eqref{sep1} as%
\[
f=\sum_{p=1}^{\infty}\varphi_{1p}\otimes\dots\otimes\varphi_{dp}.
\]
A point worth noting here is that:

\begin{quote}
``\textit{Multipartite functions are infinite-dimensional
tensors.}''
\end{quote}

Finite-dimensional tensors are simply the special case when $X_{1},\dots
,X_{d}$ are all finite sets (see Example~\ref{eg:hitch}). In particular, a
multivariate function\footnote{We clarify our terminologies: A multipartite
function is one for which the arguments $\mathbf{x}_{1},\dots,\mathbf{x}_{d}$
can come from any $X_{1},\dots,X_{d}$ but a multivariate function, in the
usual sense of the word, is one where $X_{1},\dots,X_{d}$ are (measurable)
subsets of $\mathbb{R}$.  For example, while
\[
g(u,v,w,x,y,z)=\varphi_{1}(u,v)\varphi_{2}(w)\varphi_{3}(x,y,z)
\]
is not separable in the multivariate sense, it is separable in the multipartite sense:
for $\mathbf{x}_{1}=(u,v)$, $\mathbf{x}_{2}=w$, $\mathbf{x}_{3}=(x,y,z)$,
\[
g(\mathbf{x}_{1},\mathbf{x}_{2},\mathbf{x}_{3})=\varphi_{1}(\mathbf{x}%
_{1})\varphi_{2}(\mathbf{x}_{2})\varphi_{3}(\mathbf{x}_{3}).
\]} $f\in
L^{2}(\mathbb{R}^{d})$ is a an infinite-dimensional tensor that can expressed
as an infinite sum of a tensor product of $\varphi_{1p},\dots,\varphi_{dp}\in
L^{2}(\mathbb{R})$ and $L^{2}(\mathbb{R}^{d})\cong L^{2}(\mathbb{R}%
)\otimes\dots\otimes L^{2}(\mathbb{R)}$. We shall have more to say about this
later in conjunction with Kolmogorov's superposition principle for
multivariate functions.

In this paper, functions having a \textit{finite} decomposition will play a
central role; for these we define
\begin{equation}
\operatorname*{rank}(f):=\min\left\{  r\in\mathbb{N}:f=\sum_{p=1}^{r}%
\varphi_{1p}\otimes\dots\otimes\varphi_{dp}\right\}  \label{rank-eq}%
\end{equation}
provided $f\neq0$. The zero function is defined to have rank $0$ and we say
$\operatorname{rank}(f)=\infty$ if such a decomposition is not possible.

We will call a function $f$ with $\operatorname{rank}(f)\leq r$ a
\textit{rank-}$r$\textit{ function}. Such a function may be written as a sum
of $r$ separable functions but possibly fewer. A decomposition of the form%
\begin{equation}
f=\sum_{p=1}^{r}\varphi_{1p}\otimes\dots\otimes\varphi_{dp} \label{add-eq}%
\end{equation}
will be called  a \textit{rank-}$r$\textit{ multilinear decomposition}. Note
that the qualificative `rank-$r$' will always mean `rank not more than $r$'.
If we wish to refer to a function $f$ with rank exactly $r$, we will just
specify that $\operatorname*{rank}(f)=r$. In this case, the rank-$r$
multilinear decomposition in \eqref{add-eq} is of mininum length and we call
it a \textit{rank-retaining multilinear decomposition} of $f$.

A rank-$1$ function is both non-zero and decomposable, i.e., of the form
$\varphi_{1}\otimes\dots\otimes\varphi_{d}$ where $\varphi_{k}\in L^{2}%
(X_{k})$. This agrees precisely with the notion of a separable function.
Observe that the inner product (and therefore the norm) on $L^{2}(X_{1}%
\times\dots\times X_{d})$ of a rank-$1$ function splits into a product%
\begin{equation}
\langle\varphi_{1}\otimes\dots\otimes\varphi_{d},\psi_{1}\otimes\dots
\otimes\psi_{d}\rangle=\langle\varphi_{1},\psi_{1}\rangle_{1}\cdots
\langle\varphi_{d},\psi_{d}\rangle_{d} \label{ip-eq}%
\end{equation}
where $\langle\cdot,\cdot\rangle_{p}$ denotes the inner product of
$L^{2}(X_{p})$. This inner product extends linearly to finite-rank elements of
$L^{2}(X_{1}\times\dots\times X_{d})$: for $f=\sum_{p=1}^{r}\varphi
_{1p}\otimes\dots\otimes\varphi_{dp}$ and $g=\sum_{q=1}^{s}\psi_{1q}%
\otimes\dots\otimes\psi_{dq}$, we have%
\[
\langle f,g\rangle=\sum_{p,q=1}^{r,s}\langle\varphi_{1p},\psi_{1q}\rangle
_{1}\cdots\langle\varphi_{dp},\psi_{dq}\rangle_{d}.
\]
In fact this is how a tensor product of Hilbert spaces (the right hand side of
\eqref{iso}) is usually defined, namely, as the completion of the set of
finite-rank elements of $L^{2}(X_{1}\times\dots\times X_{d})$ under this inner product.

When $X_{1},\dots,X_{d}$ are finite sets, then all functions in $L^{2}%
(X_{1}\times\dots\times X_{d})$ are of finite rank (and may in fact be viewed
as hypermatrices or tensors as discussed in Section~\ref{sepdecomp}).
Otherwise there will be functions in $L^{2}(X_{1}\times\dots\times X_{d})$ of
infinite rank. However, since we have assumed that $X_{1},\dots,X_{d}$ are
$\sigma$-finite measurable spaces, the set of all finite-rank $f$ will always
be dense in $L^{2}(X_{1}\times\dots\times X_{d})$ by the Stone-Weierstrass theorem.

The next statement is a straightforward observation about multilinear
decompositions of finite-rank functions but since it is central to
this article we state it as a theorem. It is also tempting to call the
decomposition a `singular value decomposition' given its similarities with the
usual matrix singular value decomposition (cf.~Example~\ref{eg:svd}).

\begin{theorem}
[`Singular value decomposition' for multipartite functions]\label{thm:svd}Let
$f\in L^{2}(X_{1}\times\dots\times X_{d})$ be of finite rank. Then there
exists a rank-$r$ multilinear decomposition
\begin{equation}
f=\sum_{p=1}^{r}\sigma_{p}\varphi_{1p}\otimes\dots\otimes\varphi_{dp}
\label{CP-eq}%
\end{equation}
such that%
\begin{equation}
r=\operatorname*{rank}(f), \label{rank}%
\end{equation}
the functions $\varphi_{kp}\in L^{2}(X_{p})$ are of unit norm,%
\begin{equation}
\lVert\varphi_{kp}\rVert=1\quad\text{for all }k=1,\dots,d,\quad p=1,\dots,r,
\label{unit}%
\end{equation}
the coefficients $\sigma_{1},\dots,\sigma_{r}$ are real positive, and%
\begin{equation}
\sigma_{1}\geq\sigma_{2}\geq\dots\geq\sigma_{r}>0. \label{descend}%
\end{equation}

\end{theorem}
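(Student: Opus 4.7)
My plan is to build the claimed decomposition by taking any rank-retaining decomposition and rescaling its factors, then handling phases and sign conventions, then sorting.

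First, by the definition of $\operatorname{rank}(f) = r$ in \eqref{rank-eq}, there exists some decomposition
\[
f = \sum_{p=1}^{r} \widetilde{\varphi}_{1p} \otimes \dots \otimes \widetilde{\varphi}_{dp}
\]
with exactly $r$ terms. I would first argue that no factor $\widetilde{\varphi}_{kp}$ can be zero: if some $\widetilde{\varphi}_{kp} = 0$, the $p$-th rank-$1$ summand vanishes and $f$ can be written as a sum of $r-1$ separable terms, contradicting the minimality of $r$. Hence $c_{kp} := \lVert \widetilde{\varphi}_{kp} \rVert > 0$ for all $k,p$.

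Next, I would normalize: set $\varphi_{kp} := \widetilde{\varphi}_{kp}/c_{kp}$ and define $\tau_p := \prod_{k=1}^{d} c_{kp} \in \mathbb{R}_{>0}$. Multilinearity of the tensor product in each slot (which is exactly the statement that $(\alpha \varphi_1) \otimes \varphi_2 \otimes \dots = \alpha(\varphi_1 \otimes \dots)$ as read off from the pointwise product formula for $\otimes$) yields
\[
f = \sum_{p=1}^{r} \tau_p \,\varphi_{1p} \otimes \dots \otimes \varphi_{dp},
\]
with $\lVert \varphi_{kp} \rVert = 1$ for all $k,p$. The coefficient $\tau_p$ is a product of positive reals, so it is positive; however, in the complex-scalar setting, a genuine coefficient only arises after we also absorb phases. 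More carefully: writing the original coefficient (once all norms are factored out) as $\tau_p e^{i\theta_p}$ for some $\theta_p$, I would absorb the unimodular factor $e^{i\theta_p}$ into, say, $\varphi_{1p}$ by redefining $\varphi_{1p} \leftarrow e^{i\theta_p} \varphi_{1p}$; this preserves $\lVert \varphi_{1p} \rVert = 1$ and yields a positive real coefficient $\sigma_p := \tau_p > 0$.

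Finally, to obtain \eqref{descend}, I permute the index $p$ so that the resulting $\sigma_p$'s are arranged in non-increasing order; the decomposition \eqref{CP-eq} is invariant under such a relabeling. This produces the claimed form with $r = \operatorname{rank}(f)$, unit-norm factors, and positive, ordered coefficients. There is no real obstacle here: the argument is essentially a bookkeeping exercise, and the only point requiring any care is the observation that no normalizing constant $c_{kp}$ can vanish, which is forced by the minimality of $r$. Note that this proof makes no uniqueness claim — unlike the matrix SVD — and indeed uniqueness is exactly what is addressed later under the coherence hypotheses \eqref{unique}.
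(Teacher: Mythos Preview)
Your proposal is correct and follows essentially the same approach as the paper's own proof, which is a one-sentence sketch: normalize each summand so its norm becomes the coefficient $\sigma_p$, then relabel in decreasing order. Your phase-absorption step is actually vacuous here---since the starting decomposition from \eqref{rank-eq} has no explicit coefficients, $\tau_p = \prod_k c_{kp}$ is already a product of norms and hence real positive---but this redundancy is harmless.
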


\begin{proof}
This requires nothing more than rewriting the sum in \eqref{add-eq} as a
linear combination with the positive $\sigma_{p}$'s accounting for the norms of
the summands and then re-indexing them in descending order of magnitudes.
\end{proof}

While the usual singular value decomposition of a matrix would also have
properties \eqref{rank}, \eqref{unit}, and \eqref{descend}, the one crucial
difference here is that our `singular vectors' $\varphi_{k1},\dots
,\varphi_{kr}$ in \eqref{CP-eq} will only be of unit norms but will not in
general be orthonormal. Given this, we will not expect properties like the
Eckhart-Young theorem, or that $\sigma_{1}^{2}+\dots+\sigma_{r}^{2}=\lVert
f\rVert^{2}$, etc, to hold for \eqref{CP-eq} (cf.\ Section~\ref{illposed-sec}
for more details).

One may think of the multilinear decomposition \eqref{CP-eq} as being similar
in spirit, although not in substance, to Kolmogorov's superposition principle
\cite{Kolmogorov}; the main message of which is that:

\begin{quote}
``\textit{There are no true multivariate functions.}''
\end{quote}

More precisely, the principle states that continuous functions in multiple
variables can be expressed as a composition of a univariate function with
other univariate functions. For readers not familiar with this remarkable
result, we state here a version of it due to Kahane \cite{Kahane}

\begin{theorem}
[Kolmogorov superposition]\label{thm:Kolmo}Let $f:[0,1]^{d}\rightarrow
\mathbb{R}$ be continuous. Then there exist constants $\lambda_{1}%
,\dots,\lambda_{d}\in\mathbb{R}$ and Lipschitz continuous functions
$\varphi_{1},\dots,\varphi_{d}:[0,1]\rightarrow[0,1]$ such that%
\[
f(x_{1},\dots,x_{d})=\sum_{p=1}^{2d+1}g(\lambda_{1}\varphi_{p}(x_{1}%
)+\dots+\lambda_{d}\varphi_{p}(x_{d})).
\]

\end{theorem}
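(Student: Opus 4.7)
The plan is to prove Theorem~\ref{thm:Kolmo} by the classical iterative scheme going back to Kolmogorov (and refined by Lorentz, Sprecher, and Kahane), in which the inner functions $\varphi_p$ and constants $\lambda_q$ are constructed first, independently of $f$, and the outer function $g$ is then obtained as a uniformly convergent series whose partial sums reduce the approximation error geometrically. First I would fix $\lambda_1,\dots,\lambda_d>0$ with $\sum_q\lambda_q\le 1$ and linearly independent over $\mathbb{Q}$; rational independence rules out accidental coincidences in the linear combinations $\sum_q\lambda_q\varphi_p(x_q)$ and so separates the ``cells'' introduced next. Next I would construct, for each $p=1,\dots,2d+1$, a Lipschitz function $\varphi_p:[0,1]\to[0,1]$ that is piecewise linear at every dyadic scale on a partition of $[0,1]$ into short intervals, the $2d+1$ systems being shifted so that the product partitions of $[0,1]^d$ yield $2d+1$ covers with the following crucial combinatorial property: for every $x\in[0,1]^d$, at most $d$ of the $2d+1$ product cells containing $x$ straddle a partition boundary, leaving at least $d+1$ ``good'' small-diameter cells. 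This covering property is the combinatorial heart of Kolmogorov's construction.

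The second ingredient is an error-reduction lemma: given any continuous $F:[0,1]^d\to\mathbb{R}$ with $\|F\|_\infty = M$, one builds a continuous $h:\mathbb{R}\to\mathbb{R}$ with $\|h\|_\infty\le M/(2d+1)$ such that
\[
\left\| F-\sum_{p=1}^{2d+1}h\bigl(\lambda_1\varphi_p(x_1)+\dots+\lambda_d\varphi_p(x_d)\bigr)\right\|_\infty \le \theta M
\]
for some absolute constant $\theta = \theta(d) < 1$. The construction of $h$ uses the covering property: on the (essentially disjoint) ranges of the maps $\Phi_p(x_1,\dots,x_d) := \sum_q \lambda_q \varphi_p(x_q)$, define $h$ piecewise so that each summand $h\circ\Phi_p$ locally equals $F/(2d+1)$ on the good cells; because at least $d+1$ of the $2d+1$ summands contribute correctly at each $x$, the total approximates $F$ with loss at most a fixed fraction. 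Rational independence of $\lambda_1,\dots,\lambda_d$ and the built-in separation of the shifted partitions guarantee that these piecewise definitions are consistent and extend continuously from the compact ranges of the $\Phi_p$ to all of $\mathbb{R}$ by the Tietze extension theorem.

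With the lemma in hand, I would iterate. Set $F_1:=f$, produce $g_1$ from the lemma, put $F_2:=F_1-\sum_p g_1\circ \Phi_p$, and repeat, producing a sequence $(g_n)$ with $\|g_n\|_\infty \le \theta^{n-1}\|f\|_\infty/(2d+1)$ and $\|F_{n+1}\|_\infty \le \theta^n\|f\|_\infty$. Because $\theta<1$, the series $g := \sum_{n\ge 1} g_n$ converges uniformly on $\mathbb{R}$ to a continuous function, and a telescoping argument combined with $F_{n+1}\to 0$ yields $f = \sum_{p=1}^{2d+1} g\circ\Phi_p$, which is precisely the desired identity.

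The main obstacle is the explicit construction in the second step: it is easy (and was Kolmogorov's original theorem) to realize the covering property with merely \emph{continuous} inner functions, but the version attributed to Kahane demands that they be Lipschitz. This forces a delicate choice of the geometrically shrinking widths of the dyadic partitions at successive scales and of the $2d+1$ shifts, so that the piecewise-linear approximants $\varphi_p$ refine consistently from one scale to the next with slopes that remain \emph{uniformly} bounded in $n$. Balancing the need for small cell diameters (to get small approximation error, hence $\theta<1$) against the need for bounded slopes (to keep the limit Lipschitz rather than merely Hölder) is the genuinely technical part of the argument, and is where I expect most of the work to go.
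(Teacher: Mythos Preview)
The paper does not prove this theorem at all; it is quoted as a known result ``due to Kahane \cite{Kahane}'' and used only for motivational contrast with the multilinear decomposition. So there is no paper proof to compare against.

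Your outline is the standard Kolmogorov--Lorentz--Kahane argument and is essentially sound: universal inner data $(\lambda_q,\varphi_p)$ built first, the $2d+1$ shifted covers with the ``at least $d+1$ good cells'' combinatorial property, an error-reduction step producing a fixed contraction factor $\theta<1$, and then the outer function $g=\sum_n g_n$ obtained by iteration. You have also correctly identified the one place where real work is needed, namely arranging the multiscale piecewise-linear refinements so that the slopes stay uniformly bounded (this is exactly Kahane's contribution over the earlier monotone/continuous versions). Two small points worth flagging if you write this out fully: (i) the statement as printed has a typo---the inner functions should be $\varphi_1,\dots,\varphi_{2d+1}$, not $\varphi_1,\dots,\varphi_d$, as the paper itself confirms in the sentence immediately following the theorem; (ii) in the error-reduction lemma you want $\theta$ strictly less than $1$, and the usual count gives something like $\theta = 1 - (d+1)/(2d+1) + \varepsilon = d/(2d+1)+\varepsilon$, so be explicit about why the ``bad'' cells contribute at most $d\cdot\|h\|_\infty \le dM/(2d+1)$ rather than something larger.
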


It is in general not easy to determine $g$ and $\varphi_{1},\dots
,\varphi_{2d+1}$ given such a function $f$. A multilinear decomposition of the
form \eqref{CP-eq} alleviates this by allowing $g$ to be the simplest
multivariate function, namely, the product function,%
\begin{equation}
g(t_{1},\dots,t_{d})=t_{1}t_{2}\cdots t_{d}, \label{eq:product}%
\end{equation}
and unlike the univariate $g$ in Theorem~\ref{thm:Kolmo}, the $g$ in
\eqref{eq:product} works universally for any function $f$ --- only the
$\varphi_{p}$'s need to be constructed. Furthermore, \eqref{CP-eq} applies
more generally to functions on a product of general domains $X_{1},\dots
,X_{d}$ whereas Theorem~\ref{thm:Kolmo} only applies if they are compact
intervals of $\mathbb{R}$.

At this stage, it would be instructive to give a few examples for concreteness.

\begin{example}[Singular value decomposition]
\label{eg:svd}Let $A\in\mathbb{C}^{m\times n}$ be a matrix of rank $r$. Then
it can be decomposed in infinitely many ways into a sum of rank-$1$ terms as%
\begin{equation}
A=\sum_{p=1}^{r}\sigma_{p}\mathbf{u}_{p}\mathbf{v}_{p}^{\ast} \label{svd}%
\end{equation}
where $\mathbf{u}_{p}\in\mathbb{C}^{m}$ and $\mathbf{v}_{p}\in\mathbb{C}^{n}$
are unit-norm vectors and $\sigma_{1}\geq\dots\geq\sigma_{r}>0$. Note that if
we regard $A$ as a complex-valued function on its row and column indices $i$
and $j$ as described earlier in Section~\ref{sepdecomp}, then \eqref{svd} may
be written as%
\[
a(i,j)=\sum_{p=1}^{r}\sigma_{p}u_{p}(i)v_{p}(j),
\]
which clearly is the same as \eqref{sep1}. The singular value decomposition
(\textsc{svd}) of $A$ yields one such decomposition, where $\{\mathbf{u}%
_{1},\dots,\mathbf{u}_{r}\}$ and $\{\mathbf{v}_{1},\dots,\mathbf{v}_{r}\}$ are
both orthonormal. But in general a rank-retaining decomposition of the form
\eqref{CP-eq} will not have such a property.
\end{example}

\begin{example}[Schmidt decomposition]
\label{eg:schmidt}The previous example can be generalized to infinite
dimensions. Let $A:\mathbb{H}_{1}\rightarrow\mathbb{H}_{2}$ be a compact
operator (also known as a completely continuous operator) between two
separable Hilbert spaces. Then the Schmidt decomposition theorem says that
there exist orthonormal basis $\{\varphi_{p}\in\mathbb{H}_{2}:p\in
\mathbb{N}\}$ and $\{\psi_{p}\in\mathbb{H}_{1}:p\in\mathbb{N}\}$ so that%
\begin{equation}
Af=\sum_{p=1}^{\infty}\sigma_{p}\langle\psi_{p},f\rangle\varphi_{p}
\label{schmidt}%
\end{equation}
for every $f\in\mathbb{H}_{1}$. In tensor product notation, \eqref{schmidt}
becomes%
\[
A=\sum_{p=1}^{\infty}\sigma_{p}\varphi_{p}\otimes\psi_{p}^{\ast}.
\]
where $\psi_{p}^{\ast}$ denotes the dual form of $\psi_{p}$.
\end{example}

Examples~\ref{eg:svd} and \ref{eg:schmidt} are well-known but they are
bipartite examples, i.e.~$d=2$ in \eqref{CP-eq}. This article is primarily
concerned with the $d$-partite case where $d\geq3$, which has received far
less attention. As we have alluded to in the previous section, the
identification techniques in this article will rely crucially on the fact that
$d\geq3$.

\begin{example}
\label{eg:hitch}Let $A\in\mathbb{C}^{l\times m\times n}$ be a $3$-dimensional
hypermatrix. The outer product of three vectors $\mathbf{u}\in\mathbb{C}^{l}$,
$\mathbf{v}\in\mathbb{C}^{m}$, $\mathbf{w}\in\mathbb{C}^{n}$ is defined by%
\[
\mathbf{u}\otimes\mathbf{v}\otimes\mathbf{w}=(u_{i}v_{j}w_{k})_{i,j,k=1}%
^{l,m,n}\in\mathbb{C}^{l\times m\times n}.
\]
The rank of $A$ is defined to be the minimum $r\in\mathbb{N}$ such that $A$
can be written in the form%
\begin{equation}
A=\sum_{p=1}^{r}\sigma_{p}\mathbf{u}_{p}\otimes\mathbf{v}_{p}\otimes
\mathbf{w}_{p}, \label{eq5}%
\end{equation}
and if $A=0$, then its rank is set to be $0$. This agrees of course with our
use of the word rank in \eqref{rank-eq}, the only difference is notational,
since \eqref{eq5} may be written in the form
\[
a(i,j,k)=\sum_{p=1}^{r}\sigma_{p}u_{p}(i)v_{p}(j)w_{p}(k).
\]
This definition of rank is invariant under the natural action\footnote{$\operatorname{GL}_n(\mathbb{C}) :=\{ A \in \mathbb{C}^{n \times n}: \det (A) \ne 0\}$ denotes
the general linear goup of nonsingular $n \times n$ complex matrices.}
$\operatorname{GL}_l(\mathbb{C})\times\operatorname{GL}_m(\mathbb{C})\times\operatorname{GL}_n(\mathbb{C})$ on
$\mathbb{C}^{l\times m\times n}$ \cite[Lemma~2.3]{DesiL08:simax}, i.e., for
any $X\in\operatorname{GL}_l(\mathbb{C}),Y\in\operatorname{GL}_m(\mathbb{C}),Z\in\operatorname{GL}_n(\mathbb{C})$,%
\begin{equation}
\operatorname*{rank}((X,Y,Z)\cdot A)=\operatorname*{rank}(A). \label{eq7}%
\end{equation}
The definition also extends easily to $d$-dimensional hypermatrices in
$\mathbb{C}^{n_{1}\times\dots\times n_{d}}$ and when $d=2$ reduces to the
usual definition in Example~\ref{eg:svd} for matrix rank. This definition is
due to F.~L.~Hitchcock \cite{Hitc27:jmp} and is often called \textit{tensor
rank}. The only difference here is that our observation in
Theorem~\ref{thm:svd} allows us to impose the conditions%
\[
\sigma_{1}\geq\sigma_{2}\geq\dots\geq\sigma_{r}%
\]
and%
\begin{equation}
\lVert\mathbf{u}_{p}\rVert=\lVert\mathbf{v}_{p}\rVert=\lVert\mathbf{w}%
_{p}\rVert=1,\quad p=1,\dots,r, \label{eq6}%
\end{equation}
while leaving $\operatorname*{rank}(A)$ unchanged, thus bringing \eqref{eq5}
closer in form to its matrix cousin \eqref{svd}. What is lost here is that the
sets $\{\mathbf{u}_{1},\dots,\mathbf{u}_{r}\},\{\mathbf{v}_{1},\dots
,\mathbf{v}_{r}\},\{\mathbf{w}_{1},\dots,\mathbf{w}_{r}\}$ can no longer be
chosen to be orthonormal as in Example~\ref{eg:svd}, the unit norm condition
\eqref{eq6} is as far as we may go. In fact for a generic $A\in\mathbb{C}%
^{l\times m\times n}$, we will always have%
\[
r>\max(l,m,n),
\]
and $\{\mathbf{u}_{1},\dots,\mathbf{u}_{r}\},\{\mathbf{v}_{1},\dots
,\mathbf{v}_{r}\},\{\mathbf{w}_{1},\dots,\mathbf{w}_{r}\}$ will be
overcomplete sets in $\mathbb{C}^{l},\mathbb{C}^{m},\mathbb{C}^{n}$ respectively.
\end{example}

Perhaps it is worthwhile saying a word concerning our use of the words
`tensor' and `hypermatrix': A $d$-tensor or order-$d$ tensor is an element of
a tensor product of $d$ vector spaces $\mathbb{V}_{1}\otimes\dots
\otimes\mathbb{V}_{d}$; a $d$-dimensional hypermatrix is an element of
$\mathbb{C}^{n_{1}\times\dots\times n_{d}}$. If we choose bases on
$\mathbb{V}_{1},\dots,\mathbb{V}_{d}$, then any $d$-tensor $\mathbf{A}%
\in\mathbb{V}_{1}\otimes\dots\otimes\mathbb{V}_{d}$ will have a unique
coordinate representation as a $d$-dimensional hypermatrix $A\in
\mathbb{C}^{n_{1}\times\dots\times n_{d}}$, where $n_{k}=\dim(\mathbb{V}_{k}%
)$. A notion defined on a hypermatrix is only defined on the tensor (that is
represented in coordinates by the hypermatrix) if that notion is independent
of the choice of bases. So the use of the word `tensor rank' is in fact well
justified because of \eqref{eq7}. For more details, we refer the reader to
\cite{HLA}.

\section{Uniqueness of multilinear decompositions\label{sec:unique}}

In Theorem~\ref{thm:svd}, we chose the coefficients to be in descending order
of magnitude and require the factors in each separable term to be of unit
norm. This is largely to ensure as much uniqueness in the multilinear
decomposition as generally possible. However there remain two obvious ways to
obtain trivially different multilinear decompositions: (i) one may scale the
factors $\varphi_{1p},\dots,\varphi_{dp}$ by arbitrary unimodulus complex
numbers as long as their product is $1$; (ii) when two or more successive
coefficients are equal, their orders in the sum may be arbitrarily permuted.
We will call a multilinear decomposition of $f$ that meets the conditions in
Theorem~\ref{thm:svd} \textit{essentially unique} if the only other such
decompositions of $f$ differ in one or both of these manners.

It is perhaps astonishing that when $d>2$, a sufficient condition for
essential uniqueness can be derived with relatively mild conditions on the
factors. This relies on the notion of Kruskal rank, which we will now define.

\begin{definition}
\label{kruskal-def} Let $\Phi=\{\varphi_{1},\dots,\varphi_{r}\}$ be a finite
collection of vectors of unit norm in $L^{2}(X_{1}\times\dots\times X_{d})$.
The \textbf{Kruskal rank} of $\Phi$, denoted $\operatorname{krank}\Phi$, is
the largest $k\in\mathbb{N}$ so that every $k$-element subset of $\Phi$
contains linearly independent elements.
\end{definition}

This notion was originally introduced in \cite{Krus77:laa}. It is related to
the notion of \textit{spark} introduced in compressed sensing
\cite{DonoE03:pnas,GribN03:it}, defined as the smallest $k\in\mathbb{N}$ so
that there is at least one $k$-element subset of $\Phi$ that is linearly
dependent. The relation is simple to describe, $\operatorname{spark}%
\Phi=\operatorname{krank}\Phi+1$, and it follows immediately from the
respective definitions. It is clear that $\dim\operatorname{span}\Phi
\geq\operatorname{krank} \Phi$.

We now generalize Kruskal's famous result \cite{Krus77:laa,SB} to tensor
products of arbitrary Hilbert spaces, possibly of infinite dimensions. But
first let us be more specific about essential uniqueness.

\begin{definition}
We shall say that a multilinear decomposition of the form \eqref{CP-eq}
(satisfying both \eqref{descend} and \eqref{unit}) is \textbf{essentially
unique} if given another such decomposition,%
\[
\sum_{p=1}^{r}\sigma_{p}\varphi_{1p}\otimes\dots\otimes\varphi_{dp}%
=f=\sum_{p=1}^{r}\lambda_{p}\psi_{1p}\otimes\dots\otimes\psi_{dp},
\]
we must have (i) the coefficients $\sigma_{p}=\lambda_{p}$ for all
$p=1,\dots,r$; and (ii) the factors $\varphi_{1p},\dots,\varphi_{dp}$ and
$\psi_{1p},\dots,\psi_{dp}$ differ at most via unimodulus scaling, i.e.%
\begin{equation}
\varphi_{1p}=e^{i\theta_{1p}}\psi_{1p},\dots,\varphi_{dp}=e^{i\theta_{dp}}%
\psi_{dp} \label{unimodulus-def}%
\end{equation}
where $\theta_{1p}+\dots+\theta_{dp}\equiv0\operatorname{mod}2\pi$, for all
$p=1,\dots,r$. In the event when successive coefficients are equal,
$\sigma_{p-1}>\sigma_{p}=\sigma_{p+1}=\dots=\sigma_{p+q}>\sigma_{p+q+1}$, the
uniqueness of the factors in (ii) is only up to relabelling of indices,
i.e.$\ p,\dots,p+q$.
\end{definition}

\begin{lemma}[Infinite-dimensional Kruskal uniqueness]
\label{Kruskal-lem} Let $f\in L^{2}(X_{1}\times\dots\times X_{d})$ be of finite rank. Then a
multilinear decomposition of the form%
\begin{equation}
f=\sum_{p=1}^{r}\sigma_{p}\varphi_{1p}\otimes\dots\otimes\varphi_{dp}
\label{krus-decomp}%
\end{equation}
is both essentially unique and rank-retaining, i.e., $r=\operatorname{rank}f$,
if the following condition is satisfied:
\begin{equation}
2r+d-1\leq\sum_{k=1}^{d}\operatorname{krank}\Phi_{k}, \label{Kruskal-eq}%
\end{equation}
where $\Phi_{k}=\{\varphi_{k1},\dots,\varphi_{kr}\}$ for $k=1,\dots,d$.
\end{lemma}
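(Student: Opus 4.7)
The plan is to reduce the claim to the classical finite-dimensional Kruskal theorem by restricting to the finite-dimensional subspace spanned by the factors in any two candidate decompositions. Suppose $f = \sum_{p=1}^{r}\sigma_{p}\varphi_{1p}\otimes\cdots\otimes\varphi_{dp}$ satisfies \eqref{Kruskal-eq}, and let $f = \sum_{p=1}^{s}\lambda_{p}\psi_{1p}\otimes\cdots\otimes\psi_{dp}$ be any other decomposition of the form permitted by Theorem~\ref{thm:svd}. I would form the finite-dimensional subspaces $W_{k} := \operatorname{span}(\Phi_{k}\cup\Psi_{k})\subset L^{2}(X_{k})$, where $\Psi_{k}=\{\psi_{k1},\dots,\psi_{ks}\}$; each has dimension at most $r+s$. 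Both decompositions then represent $f$ inside the finite-dimensional subtensor-product $W_{1}\otimes\cdots\otimes W_{d}$, which embeds isometrically into $L^{2}(X_{1})\otimes\cdots\otimes L^{2}(X_{d})$ via the isomorphism \eqref{iso}.

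The second step is to observe that the Kruskal rank is a purely linear-algebraic invariant of a \emph{set} of vectors: it records the size of the largest subset of $\Phi_{k}$ all of whose subsets of that size are linearly independent. Linear independence of elements of $\Phi_{k}$ inside $L^{2}(X_{k})$ is equivalent to linear independence inside $W_{k}$, so the hypothesis \eqref{Kruskal-eq} transfers verbatim to $W_{1}\otimes\cdots\otimes W_{d}$. Choosing a basis of each $W_{k}$ now turns \eqref{krus-decomp} into a Kruskal-type decomposition of a hypermatrix in $\mathbb{C}^{\dim W_{1}\times\cdots\times\dim W_{d}}$, at which point the classical theorem of Kruskal \cite{Krus77:laa} (for $d=3$) together with its $d$-partite extension of Sidiropoulos–Bro \cite{SB} delivers both rank-retention ($s=r$) and essential uniqueness of the decomposition up to permutation and invertible-scalar rescaling of the rank-$1$ summands.

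Finally I would translate the classical form of essential uniqueness into the stronger form required by our definition. The unit-norm constraint \eqref{unit} forces each rescaling scalar to have modulus $1$, so that the ambiguity reduces to the unimodulus form \eqref{unimodulus-def} with $\theta_{1p}+\cdots+\theta_{dp}\equiv 0 \bmod 2\pi$; the descending-order normalization \eqref{descend} then pins down the coefficient matching $\sigma_{p}=\lambda_{p}$, with the residual freedom of relabelling only within blocks of equal $\sigma_{p}$'s, exactly as in our definition of essential uniqueness.

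The conceptual content of the argument is entirely in the reduction: once we restrict to $W_{1}\otimes\cdots\otimes W_{d}$, every step is an invocation of an existing theorem or a routine normalization. The main thing to be careful about is therefore ensuring that the cited finite-dimensional Kruskal theorem is available in the form we use, namely over $\mathbb{C}$ and for arbitrary $d\geq 3$ rather than only over $\mathbb{R}$ for $d=3$; this is precisely what \cite{SB} provides, and it is the one external ingredient on which the infinite-dimensional extension hinges.
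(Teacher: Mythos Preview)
Your argument is correct and follows essentially the same route as the paper: restrict to a finite-dimensional subspace spanned by the factors, note that Kruskal rank is unchanged under this restriction, and invoke the finite-dimensional Kruskal/Sidiropoulos--Bro theorem. Your choice to enlarge the subspace to $W_k=\operatorname{span}(\Phi_k\cup\Psi_k)$ so that both candidate decompositions live there is a minor but sensible refinement that makes the uniqueness conclusion immediate, whereas the paper simply takes $\mathbb{V}_k=\operatorname{span}(\Phi_k)$ and leaves implicit the (easy) passage from uniqueness in $\mathbb{V}_1\otimes\cdots\otimes\mathbb{V}_d$ to uniqueness in the ambient space.
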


\begin{proof}
Consider the subspaces $\mathbb{V}_{k}=\operatorname{span}(\varphi_{k1}%
,\dots,\varphi_{kr})$ in $L^{2}(X_{k})$ for each $k=1,\dots,d$. Observe that
$f\in\mathbb{V}_{1}\otimes\dots\otimes\mathbb{V}_{d}$. Clearly
$\dim(\mathbb{V}_{k})\leq r$ and so $\dim(\mathbb{V}_{1}\otimes\dots
\otimes\mathbb{V}_{d})\leq r^{d}$. Now, if we could apply Kruskal's uniqueness theorem 
\cite{Krus77:laa} to the finite-dimensional space $\mathbb{V}_{1}%
\otimes\dots\otimes\mathbb{V}_{d}$, then we may immediately deduce both the
uniqueness and rank-retaining property of \eqref{krus-decomp}. However there
is one caveat: We need to show that Kruskal rank does not change under
restriction to a subspace, i.e.$\ $the value of $\operatorname{krank}%
\{\varphi_{k1},\dots,\varphi_{kr}\}$ in \eqref{Kruskal-eq} is the same whether
we regard $\varphi_{k1},\dots,\varphi_{kr}$ as elements of $L^{2}(X_{k})$ or
as elements of the subspace $\mathbb{V}_{k}$. But this just follows from the
simple fact that linear independence has precisely this property, i.e., if
$v_{1},\dots,v_{n}\in\mathbb{U}\subseteq\mathbb{V}$ are linearly independent
in the vector space $\mathbb{V}$, then they are linearly independent in
the subspace $\mathbb{U}$.
\end{proof}

It follows immediately why we usually need $d\geq3$ for identifiability.

\begin{corollary}
A necessary condition for Kruskal's inequality \eqref{Kruskal-eq} to hold is
that $d\geq3$.
\end{corollary}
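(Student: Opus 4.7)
The plan is to treat this as a short counting argument: bound the right-hand side of Kruskal's inequality \eqref{Kruskal-eq} by a quantity linear in $d$ and $r$, and then verify directly that $d=1$ and $d=2$ violate the resulting linear inequality for any $r\geq 1$.

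The key observation I would use is that each $\Phi_k = \{\varphi_{k1},\dots,\varphi_{kr}\}$ has exactly $r$ elements, so by Definition~\ref{kruskal-def} its Kruskal rank satisfies $\operatorname{krank}\Phi_k \leq r$ (one cannot demand that every $(r+1)$-subset of a set of size $r$ be independent --- there are no such subsets). Summing over $k=1,\dots,d$ gives
\[
\sum_{k=1}^{d}\operatorname{krank}\Phi_k \;\leq\; dr,
\]
so Kruskal's inequality forces $2r+d-1 \leq dr$, i.e.\ $(d-2)r \geq d-1$.

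I would then dispatch the two small cases. Since we are in the setting of Lemma~\ref{Kruskal-lem} where $f$ has a nontrivial rank-$r$ decomposition, we may assume $r\geq 1$. For $d=1$ the inequality reads $-r \geq 0$, which contradicts $r\geq 1$; for $d=2$ it reads $0 \geq 1$, an outright contradiction. Hence $d \geq 3$, as claimed.

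There is no real obstacle here --- the argument is purely arithmetic and rests entirely on the trivial bound $\operatorname{krank}\Phi_k \leq r$. The only conceptual point worth flagging is that the proof does not use any genuine property of Kruskal rank beyond its cardinality bound, which highlights that the threshold $d=3$ is an intrinsic feature of the inequality itself and not of any deeper tensor-geometric phenomenon at this stage.
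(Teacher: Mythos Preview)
Your proof is correct and essentially identical to the paper's: both rest entirely on the trivial bound $\operatorname{krank}\Phi_k \leq r$ (since $\Phi_k$ has $r$ elements) and then check that $d=1,2$ make the inequality $2r+d-1 \leq dr$ impossible for $r\geq 1$. One minor quibble: your parenthetical justification is slightly off --- vacuously every $(r{+}1)$-subset of an $r$-element set \emph{is} independent --- but the bound $\operatorname{krank}\Phi_k \leq r$ is the standard convention (and exactly what the paper invokes), so the argument stands.
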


\begin{proof}
If $d=2$, then $2r+d-1=2r+1>\operatorname{krank}\Phi_{1}+\operatorname{krank}%
\Phi_{2}$ since the Kruskal rank of of $r$ vectors cannot exceed $r$. Likewise
for $d=1$.
\end{proof}

Lemma~\ref{Kruskal-lem} shows that the condition in \eqref{Kruskal-eq} is
sufficient to ensure uniqueness and it is known that the condition is not
necessary. In an appropriate sense, the condition is sharp \cite{Derksen}. We
should note that the version of Lemma~\ref{Kruskal-lem} that we state here for
general $d\geq3$ is due to Sidiropoulos and Bro \cite{SB}. Kruskal's original
version \cite{Krus77:laa} is only for $d=3$.

The main problem with Lemma~\ref{Kruskal-lem} is that the condition
\eqref{Kruskal-eq} is difficult to check since the right-hand side cannot be
readily computed. See Section~\ref{sec:cc} for a discussion. 

Kruskal's result also does not tell us how often multilinear
decompositions are unique. In the event when the sets $X_{1},\dots,X_{d}$ are
finite, $L^{2}(X_{1}\times\dots\times X_{d})\cong\mathbb{C}^{n_{1}\times
\dots\times n_{d}}$ where $n_{1}=\#X_{1},\dots,n_{d}=\#X_{d}$, and there is a
simple result on uniqueness based simply on a dimension count. Note that the
dimension of $L^{2}(X_{1}\times\dots\times X_{d})$ is the product $n_{1}\cdots
n_{d}$ and the number of parameters needed to describe a separable element of
the form $\lambda\varphi_{1}\otimes\dots\otimes\varphi_{d}$ where $\varphi
_{1},\dots,\varphi_{d}$ are of unit norm is $n_{1}+\dots+n_{d}-d+1$ (each
$\varphi_{k}$ requires $n_{k}-1$ parameters because of the unit norm
constraint, the last `$+1$' accounts for the coefficient $\lambda$). We call
the number%
\[
\left\lceil \frac{\prod_{k=1}^{d}n_{k}}{1-d+\sum_{k=1}^{d}n_{k}}\right\rceil
\]
the \textit{expected rank} of $L^{2}(X_{1}\times\dots\times X_{d})$, since it
is heuristically the minimum $r$ expected for a multilinear decomposition \eqref{CP-eq}.

\begin{proposition}
Let the notations be as above. If $f\in L^{2}(X_{1}\times\dots\times X_{d})$
has rank smaller than the expected rank, i.e.%
\[
\operatorname{rank}(f)<\left\lceil \frac{\prod_{k=1}^{d}n_{k}}{1-d+\sum
_{k=1}^{d}n_{k}}\right\rceil ,
\]
then $f$ admits at most a finite number of distinct rank-retaining decompositions.
\end{proposition}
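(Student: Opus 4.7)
The plan is to reinterpret the set of rank-retaining decompositions of $f$ as a fiber of a polynomial map and extract finiteness from a dimension count. Since $r = \operatorname{rank}(f) < \infty$, the factors of any such decomposition span finite-dimensional subspaces $\mathbb{V}_k \subseteq L^2(X_k)$ of dimension at most $r$, so I would first reduce to the finite-dimensional setting $f \in \mathbb{V}_1 \otimes \dots \otimes \mathbb{V}_d \cong \mathbb{C}^{n_1 \times \dots \times n_d}$, which puts honest polynomial maps at our disposal.

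Let $Z_r$ be the complex algebraic variety of $r$-tuples $(\sigma_p, \varphi_{1p}, \dots, \varphi_{dp})_{p=1}^{r}$ with $\sigma_p \in \mathbb{C}$ and $\varphi_{kp} \in \mathbb{C}^{n_k}$ unit vectors, and let $\Phi : Z_r \to \mathbb{C}^{n_1 \times \dots \times n_d}$ be the evaluation map
\[
\Phi(z) \;=\; \sum_{p=1}^{r} \sigma_p\,\varphi_{1p} \otimes \dots \otimes \varphi_{dp}.
\]
The parameter count stated just before the proposition gives $\dim_{\mathbb{C}} Z_r = rM$ with $M = 1 - d + \sum_{k} n_k$, while the ambient space has complex dimension $N = \prod_{k} n_k$. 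The hypothesis $r < \lceil N/M \rceil$ is equivalent to the strict inequality $rM < N$. Modulo the finite discrete symmetries of summand permutation and unimodulus rescaling $\varphi_{kp} \mapsto e^{i\theta_{kp}} \varphi_{kp}$ with $\sum_{k} \theta_{kp} \equiv 0 \pmod{2\pi}$, the set of rank-retaining decompositions of $f$ is precisely the fiber $\Phi^{-1}(f)$, which is a complex algebraic subvariety of $Z_r$. Applying the generic-fibre theorem to the dominant restriction $Z_r \to \overline{\Phi(Z_r)}$, the generic fiber has dimension $rM - \dim \Phi(Z_r)$, which is zero — and hence finite — precisely under the non-defectivity assumption $\dim \Phi(Z_r) = rM$. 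A zero-dimensional complex algebraic variety cut out by finitely many polynomials consists of finitely many points, and quotienting by the finite discrete symmetry group yields finitely many essentially distinct decompositions.

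The main obstacle is that the dimension count delivers finiteness only for the \emph{generic} fiber, not for the specific $f$ at hand: one must rule out both (i) defectivity of the $r$-th Segre secant variety (in which case every fiber is positive-dimensional) and (ii) dimension jumps along the upper-semicontinuous stratification of $\Phi$ that could place our particular $f$ in a proper sublocus of positive-dimensional fibers. I would address (i) via Terracini's lemma: non-defectivity reduces to injectivity of $d\Phi$ at some point of $Z_r$, which holds outside a short known list of exceptional formats under the subgeneric assumption $rM < N$. Issue (ii) I would handle by observing that the Zariski-closed positive-dimensional-fiber locus is a proper subvariety of $\Phi(Z_r)$ and arguing, via the rank-exactly-$r$ condition on $f$, that $f$ cannot lie there without admitting a decomposition of strictly smaller length, contradicting $\operatorname{rank}(f) = r$. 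These two steps convert the generic statement into the pointwise one claimed by the proposition.
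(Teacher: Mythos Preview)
The paper does \emph{not} prove this proposition. Immediately following the statement it says: ``This proposition has been proved in several cases, including symmetric tensors, but the proof still remains incomplete for tensors of most general form.'' So you are attempting to close a gap that the authors themselves leave open, and your proposal does not close it.

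Your framework is the right one --- this is a fiber-dimension question for the parametrization map onto the $r$-th secant variety of the Segre variety --- and you correctly isolate the two obstructions (i) non-defectivity and (ii) passing from the generic fiber to every fiber. But your proposed resolutions of both fail. For (i), there is no ``short known list of exceptional formats'' for general Segre products: the classification of defective Segre secant varieties is precisely the open problem the paper's citations (Catalisano--Geramita--Gimigliano, Abo--Ottaviani--Peterson) are chipping away at. Alexander--Hirschowitz settles the symmetric (Veronese) case, which is why the paper singles that out as proved, but for arbitrary $n_1,\dots,n_d$ and $d$ no complete list exists. For (ii), your claim that a tensor lying in the positive-dimensional-fiber locus ``must admit a decomposition of strictly smaller length'' is unsupported and false in general: having a positive-dimensional family of length-$r$ decompositions in no way forces $\operatorname{rank}(f)<r$. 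Upper semicontinuity only tells you the bad locus is Zariski-closed in the image; it does not tell you that the rank-exactly-$r$ stratum avoids it. This is exactly the step that makes the proposition hard, and it is why the paper flags the general case as open rather than proving it.
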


This proposition has been proved in several cases, including symmetric tensors
\cite{ChiaC06:jlms}, but the proof still remains incomplete for tensors of
most general form \cite{CataGG05:jpaa,AboOP09:tams}.

\section{Estimation of multilinear decompositions\label{est}}

In practice we would only have at our disposal $\hat{f}$, a measurement of $f$
corrupted by noise. Recall that our model for $f$ takes the form%
\begin{equation}
f(\mathbf{x}_{1},\dots,\mathbf{x}_{d})=\sum_{p=1}^{r}\alpha_{p}\prod_{k=1}%
^{d}\varphi_{kp}(\mathbf{x}_{k}). \label{model}%
\end{equation}
Then we would often have to solve an approximation problem corresponding to
\eqref{model} of the form%
\begin{equation}
\operatorname*{argmin}_{\boldsymbol{\alpha}\in\mathbb{C}^{r},\;\lVert
\varphi_{kp}\rVert=1}\left\Vert \hat{f}-\sum_{p=1}^{r}\alpha_{p}\prod_{k=1}^{d}\varphi_{kp}\right\Vert , \label{approx1}%
\end{equation}
which we will call a \textit{best rank-}$r$\textit{ approximation problem}. A
solution to \eqref{approx1}, if exists, will be called a best rank-$r$
approximation of $\hat{f}$.

We will give some motivations as to why such an approximation
is reasonable. Assuming that the norm in \eqref{approx1} is the $L^{2}$-norm
and that the factors $\varphi_{kp}$, $p=1,\dots r$ and $k=1,\dots d$, have
been determined in advance and we are just trying to estimate the parameters
$\alpha_{1},\dots,\alpha_{r}$ from $\hat{f}^{(1)},\dots,$ $\hat{f}^{(N)}$ a
finite sample of size $N$ of measurements of $f$ corrupted by noise, then the
solution of the approximation problem in \eqref{approx1} is in fact (i) a
maximum likelihood estimator (\textsc{mle}) if the noise is zero mean
Gaussian, and (ii) a best linear unbiased estimator (\textsc{blue}) if the
noise has zero mean and finite variance. Of course in our identification
problems, the factors $\varphi_{kp}$'s are not known and have to be estimated
too. A probabilistic model in this situation would take us too far afield.
Note that even for the case $d=2$ and where the domain of $f$ $X_{1} \times X_{2}$ is a finite
set, a case that essentially reduces to principal components analysis
(\textsc{pca}), a probabilistic model along the lines of \cite{TipBis} requires
several strong assumptions and was only developed as late as 1999. The lack of
a formal probabilistic model has not stopped \textsc{pca}, proposed in 1901
\cite{pca}, to be an invaluable tool in the intervening century.

\section{Existence of best multilinear approximation\label{illposed-sec}}

As we mentioned in the previous section, in realistic situation where
measurements are corrupted by additive noise, one has to extract the factors
$\varphi_{kp}$'s and $\alpha_{p}$ through solving an approximation problem
\eqref{approx1}, that we now write in a slightly different (but equivalent)
form,%
\begin{equation}
\operatorname*{argmin}_{\boldsymbol{\alpha}\in[0,\infty)^{r},\;\lVert
\varphi_{kp}\rVert=1}\left\Vert \hat{f}-\sum_{p=1}^{r}\alpha_{p}\prod
_{k=1}^{d}\varphi_{kp}\right\Vert . \label{fund1}%
\end{equation}
Note that by Theorem~\ref{thm:svd}, we may assume that the coefficients
$\boldsymbol{\alpha}=(\alpha_{1},\dots,\alpha_{r})$ are real and nonnegative
valued without any loss of generality. Such a form is also natural in
applications given that $\alpha_{p}$ usually captures the magnitude of
whatever quantity that is represented by the $p$ summand.

We will see this problem, whether in the form \eqref{approx1} or
\eqref{fund1}, has no solution in general. We will first observe a somewhat
unusual phenomenon in multilinear decomposition of $d$-partite functions where
$d\geq3$, namely, a sequence of rank-$r$ functions (each with an rank-$r$
multilinear decomposition) can converge to a limit that is not rank-$r$ (has
no rank-$r$ multilinear decomposition).

\begin{example}[Multilinear approximation of functions]\label{lack-ex} For linearly
independent $\varphi_{1},\psi_{1}:X_{1}\rightarrow\mathbb{C}$, $\varphi
_{2},\psi_{2}:X_{2}\rightarrow\mathbb{C}$, $\varphi_{3},\psi_{3}%
:X_{3}\rightarrow\mathbb{C}$, let $\hat{f}:X_{1}\times X_{2}\times
X_{3}\rightarrow\mathbb{C}$ be
\begin{multline*}
\hat{f}(\mathbf{x}_{1},\mathbf{x}_{2},\mathbf{x}_{3}):=\psi_{1}(\mathbf{x}%
_{1})\varphi_{2}(\mathbf{x}_{2})\varphi_{3}(\mathbf{x}_{3})\\
+\varphi_{1}(\mathbf{x}_{1})\psi_{2}(\mathbf{x}_{2})\varphi_{3}(\mathbf{x}%
_{3})+\varphi_{1}(\mathbf{x}_{1})\varphi_{2}(\mathbf{x}_{2})\psi
_{3}(\mathbf{x}_{3}).
\end{multline*}
For $n\in\mathbb{N}$, define {\footnotesize
\begin{multline*}
f_{n}(\mathbf{x}_{1},\mathbf{x}_{2},\mathbf{x}_{3}):=\\
n\left[  \varphi_{1}(\mathbf{x}_{1})+\frac{1}{n}\psi_{1}(\mathbf{x}%
_{1})\right]  \left[  \varphi_{2}(\mathbf{x}_{2})+\frac{1}{n}\psi
_{2}(\mathbf{x}_{2})\right]  \left[  \varphi_{3}(\mathbf{x}_{3})+\frac{1}%
{n}\psi_{3}(\mathbf{x}_{3})\right] \\
-n\varphi_{1}(\mathbf{x}_{1})\varphi_{2}(\mathbf{x}_{2})\varphi_{3}%
(\mathbf{x}_{3}).
\end{multline*}
}Then
\begin{multline*}
f_{n}(\mathbf{x}_{1},\mathbf{x}_{2},\mathbf{x}_{3}) - \hat{f}(\mathbf{x}_{1},\mathbf{x}_{2},\mathbf{x}_{3}) 
= \frac{1}{n}\bigl[ \psi_{1}(\mathbf{x}_{1}%
)\psi_{2}(\mathbf{x}_{2})\varphi_{3}(\mathbf{x}_{3})\\
+\psi_{1}(\mathbf{x}_{1})\varphi_{2}(\mathbf{x}_{2})\psi_{3}+\varphi
_{1}(\mathbf{x}_{1})\psi_{2}(\mathbf{x}_{2})\psi_{3}(\mathbf{x}_{3})\bigr]\\
+ \frac{1}{n^2} \psi_{1}(\mathbf{x}_{1})\psi_{2}(\mathbf{x}_{2})\psi_{3}(\mathbf{x}_{3}). 
\end{multline*}
Hence
\begin{equation}
\lVert\hat{f}-f_{n}\rVert=O\left(  \frac{1}{n}\right)  . \label{o1n}%
\end{equation}

\end{example}

\begin{lemma}
In Example \ref{lack-ex}, $\operatorname*{rank}(\hat{f})=3$ iff $\varphi
_{i},\psi_{i}$ are linearly independent, $i=1,2,3$. Furthermore, it is clear
that $\operatorname*{rank}(f_{n})\leq2$ and
\[
\lim_{n\rightarrow\infty}f_{n}=\hat{f}.
\]

\end{lemma}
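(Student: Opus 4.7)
The two assertions about $f_n$ are immediate from Example~\ref{lack-ex}: the explicit formula exhibits $f_n$ as a difference of two rank-$1$ tensors, so $\operatorname{rank}(f_n)\le 2$, and the estimate \eqref{o1n} yields $f_n\to\hat f$ in $L^2$ norm. The substance of the lemma is the equivalence $\operatorname{rank}(\hat f)=3$ iff $\varphi_i,\psi_i$ are linearly independent for each $i=1,2,3$.

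The ``only if'' direction I would handle by contraposition. Suppose some pair, say $\psi_1=c\varphi_1$ (the other two cases being symmetric). Factoring $\varphi_1$ out of the first tensor slot gives
\[
\hat f=\varphi_1\otimes(c\varphi_2+\psi_2)\otimes\varphi_3+\varphi_1\otimes\varphi_2\otimes\psi_3,
\]
so $\operatorname{rank}(\hat f)\le 2<3$.

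For the ``if'' direction, linear independence lets me restrict attention to $\mathbb V_1\otimes\mathbb V_2\otimes\mathbb V_3\cong\mathbb C^{2\times 2\times 2}$, where $\mathbb V_k=\operatorname{span}(\varphi_k,\psi_k)$. Tensor rank is unchanged by this restriction: any decomposition $\hat f=\sum_p a_p\otimes b_p\otimes c_p$ in the ambient space can be pushed through the orthogonal projection $P_1\otimes P_2\otimes P_3$ onto $\mathbb V_1\otimes\mathbb V_2\otimes\mathbb V_3$, producing a decomposition of the same length in the subspace (this is exactly the idea used in the proof of Lemma~\ref{Kruskal-lem}). In the basis $\{\varphi_k,\psi_k\}$ for each $\mathbb V_k$, $\hat f$ is the classical ``$W$-tensor'' whose only nonzero entries are at the positions $(2,1,1)$, $(1,2,1)$, and $(1,1,2)$. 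Since $\operatorname{rank}(\hat f)\le 3$ is clear from the defining formula, it remains to rule out $\operatorname{rank}(\hat f)\le 2$.

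To exclude rank $2$ I would use the pencil of matrix slices of $\hat f$ along its third mode. In the chosen basis these slices are
\[
M_1=\begin{pmatrix}0&1\\1&0\end{pmatrix},\qquad M_2=\begin{pmatrix}1&0\\0&0\end{pmatrix},
\]
so $\det(\alpha M_1+\beta M_2)=-\alpha^2$, which shows that the only rank-$1$ matrices in $\operatorname{span}(M_1,M_2)$ are the scalar multiples of $M_2$. A hypothetical rank-$2$ decomposition $\hat f=u\otimes v\otimes w+u'\otimes v'\otimes w'$ would force $M_1,M_2\in\operatorname{span}(u\otimes v,\,u'\otimes v')$; and since $u\otimes v$ and $u'\otimes v'$ must be linearly independent (otherwise $\hat f$ itself would have rank $1$, contradicting $\operatorname{rank}(M_1)=2$), this pencil would have to contain two linearly independent rank-$1$ matrices, a contradiction. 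Hence $\operatorname{rank}(\hat f)=3$. The main obstacle is precisely this last $W$-tensor calculation; the rest of the argument is bookkeeping together with the standard reduction to a finite-dimensional subspace.
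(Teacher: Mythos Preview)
Your proof is correct. The paper itself does not supply a proof of this lemma: the last two claims are declared ``clear'' in the statement, and the rank-$3$ assertion is stated without argument, the construction being credited to \cite{DesiL08:simax}. Your write-up fills this gap. The contrapositive for the ``only if'' direction is immediate, and your treatment of the ``if'' direction---reducing to $\mathbb{C}^{2\times 2\times 2}$ via the projection argument (exactly the mechanism used in the proof of Lemma~\ref{Kruskal-lem}) and then analyzing the pencil of third-mode slices to exclude rank $2$---is a standard and clean way to verify that the $W$-tensor has rank $3$. One minor remark: where you write ``otherwise $\hat f$ itself would have rank $1$'', you mean rank $\le 1$; the contradiction with $\operatorname{rank}(M_1)=2$ then follows since any matrix slice of a rank-$r$ tensor has rank at most $r$.
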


Note that our fundamental approximation problem may be regarded as the
approximation problem%
\begin{equation}
\operatorname*{argmin} \{\lVert\hat{f}-f\rVert:\operatorname*{rank}(f)\leq r\},
\label{rankr}%
\end{equation}
followed by a decomposition problem%
\[
f=\sum_{p=1}^{r}\alpha_{p}\prod_{k=1}^{d}\varphi_{kp},
\]
which always exists for an $f$ with $\operatorname*{rank}(f)\leq r$. The
discussion above shows that there are target functions $\hat f$ for which%
\[
\operatorname*{argmin} \{\lVert\hat{f}-f\rVert:\operatorname*{rank}(f)\leq
r\}=\varnothing,
\]
and thus \eqref{fund1} or \eqref{rankr} does not need to have a solution in
general. This is such a crucial point that we are obliged to formally state it.

\begin{theorem}
\label{thm:nonexist}For $d\geq3$, the best approximation of a $d$-partite
function by a sum of $p$ products of $d$ separable functions does not exist in general.
\end{theorem}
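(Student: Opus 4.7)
The plan is to prove non-existence by producing a single concrete counterexample, and Example~\ref{lack-ex} already supplies exactly what is needed. I would specialize to $d=3$, $r=2$, and take $\hat f$ to be the tripartite function built from any three linearly independent pairs $\varphi_i,\psi_i$, together with the explicit sequence $f_n$ from that example.

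The argument then proceeds in three short steps. First, I would verify the upper bound $\operatorname{rank}(f_n)\le 2$, which is immediate from the definition of $f_n$ as a difference of two rank-one terms. Second, I would invoke the bound $\lVert \hat f - f_n\rVert = O(1/n)$ already established just above the theorem via the explicit expansion of the product, so that $f_n\to \hat f$ in $L^2$-norm. Third, I would assemble these into the following contradiction: the infimum
\[
\inf\bigl\{\lVert \hat f - f\rVert : \operatorname{rank}(f)\le 2\bigr\}
\]
is bounded above by $\lim_{n\to\infty}\lVert \hat f - f_n\rVert = 0$, hence equals $0$; if a minimizer $f^{\ast}$ with $\operatorname{rank}(f^{\ast})\le 2$ existed, we would have $f^{\ast}=\hat f$, forcing $\operatorname{rank}(\hat f)\le 2$ and contradicting the fact that $\operatorname{rank}(\hat f)=3$. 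Therefore no best rank-$2$ approximation of $\hat f$ exists, establishing the theorem for $d=3$; the case $d>3$ follows by padding $\hat f$ and $f_n$ with additional unit-norm factors $\chi_k\in L^2(X_k)$ for $k=4,\dots,d$.

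The one genuinely non-trivial ingredient, and the main obstacle, is the lower bound $\operatorname{rank}(\hat f)\ge 3$ asserted in the lemma attached to Example~\ref{lack-ex}. I would handle this by a flattening (matrix unfolding) argument: regarding $\hat f$ as an element of $L^2(X_1)\otimes\bigl(L^2(X_2)\otimes L^2(X_3)\bigr)$, one obtains an operator whose range is spanned by three vectors that, using the pairwise linear independence of $\varphi_i,\psi_i$ and separability in the last two factors via \eqref{ip-eq}, can be shown to be linearly independent. A rank-$\le 2$ decomposition would force the corresponding unfolding to have rank at most $2$, giving the desired contradiction. Once this flattening lemma is in place, the rest of the theorem is the routine deduction sketched above.
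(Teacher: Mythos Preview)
Your overall strategy coincides with the paper's: use Example~\ref{lack-ex} to exhibit a rank-$3$ function $\hat f$ approximated arbitrarily well by rank-$2$ functions $f_n$, conclude that the infimum over rank-$2$ approximants is zero and hence unattainable. The paper carries this out by rewriting $f_n$ in the unit-norm form required by \eqref{fund1}, but this is cosmetic; your contradiction argument is equivalent. Your padding argument for $d>3$ is a reasonable addition that the paper omits.

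There is, however, a genuine gap in your handling of the key lemma $\operatorname{rank}(\hat f)=3$. The flattening argument you propose cannot work: the mode-$1$ unfolding of
\[
\hat f=\psi_1\otimes\varphi_2\otimes\varphi_3+\varphi_1\otimes\psi_2\otimes\varphi_3+\varphi_1\otimes\varphi_2\otimes\psi_3
\]
has range spanned by $\varphi_2\otimes\varphi_3$ and $\psi_2\otimes\varphi_3+\varphi_2\otimes\psi_3$, since two of the three summands share the first factor $\varphi_1$. Thus the mode-$1$ flattening has rank exactly $2$, and by symmetry so do the mode-$2$ and mode-$3$ flattenings. All multilinear ranks of $\hat f$ are $2$, so no unfolding can certify tensor rank $3$. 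This is precisely what makes $\hat f$ interesting: it is the standard example of a tensor whose rank strictly exceeds every flattening rank (and whose border rank is $2$). The paper itself does not prove the lemma either, merely stating it; a correct proof requires a different technique, for instance the substitution method or the argument in \cite[Theorem~4.6, Lemma~4.7]{DesiL08:simax}.
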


\begin{proof}
Take the tripartite function $\hat{f}\in L^{2}(X_{1}\times X_{2}\times X_{3})$
in Example~\ref{lack-ex}. Suppose we seek a best rank-$2$ approximation, in
other words, we seek to solve the minimization problem%
\[
\operatorname*{argmin}_{\lVert g_{k}\rVert=\lVert h_{k}\rVert=1,\;\gamma
,\eta\geq0}\Vert\hat{f}-\gamma g_{1}\otimes g_{2}\otimes g_{3}-\eta
h_{1}\otimes h_{2}\otimes h_{3}\Vert.
\]
Now, the \textit{infimum},%
\[
\inf_{\lVert g_{k}\rVert=\lVert h_{k}\rVert=1,\;\gamma,\eta\geq0}\Vert\hat
{f}-\gamma g_{1}\otimes g_{2}\otimes g_{3}-\eta h_{1}\otimes h_{2}\otimes
h_{3}\Vert=0
\]
since we may choose $n\in\mathbb{N}$ sufficiently large,%
\[
g_{k}=\frac{\varphi_{k}+n^{-1}\psi_{k}}{\lVert\varphi_{k}+n^{-1}\psi_{k}%
\rVert},\quad h_{k}=\frac{\varphi_{k}}{\lVert\varphi_{k}\rVert},
\]
for $k=1,2,3$,%
\begin{align*}
\gamma &  =n\lVert\varphi_{1}+n^{-1}\psi_{1}\rVert\lVert\varphi_{2}+n^{-1}%
\psi_{2}\rVert\lVert\varphi_{3}+n^{-1}\psi_{3}\rVert,\\
\eta &  =n\lVert\varphi_{1}\rVert\lVert\varphi_{2}\rVert\lVert\varphi
_{3}\rVert,
\end{align*}
so as make $\Vert\hat{f}-\gamma g_{1}\otimes g_{2}\otimes g_{3}-\eta
h_{1}\otimes h_{2}\otimes h_{3}\Vert$ as small as we desired by virtue of
\eqref{o1n}. However there is no rank-$2$ function $\gamma g_{1}\otimes
g_{2}\otimes g_{3}+\eta h_{1}\otimes h_{2}\otimes h_{3}$ for which%
\[
\Vert\hat{f}-\gamma g_{1}\otimes g_{2}\otimes g_{3}-\eta h_{1}\otimes
h_{2}\otimes h_{3}\Vert=0.
\]
In other words, the zero infimum can never be attained.
\end{proof}

Our construction above is based on an earlier construction in
\cite{DesiL08:simax}. The first such example was given in
\cite{BiniLR80:siamjc}, which also contains the very first definition of border
rank. We will define it here for $d$-partite functions. When $X_{1}%
,\dots,X_{d}$ are finite sets, this reduces to the original definition in
\cite{BiniLR80:siamjc} for hypermatrices.

\begin{definition}
Let $f\in L^{2}(X_{1}\times\dots\times X_{d})$. The \textbf{border rank} of
$f$ is defined as%
\begin{multline*}
\overline{\operatorname*{rank}}(f)=\min\{r\in\mathbb{N}:\inf\lVert
f-g\rVert=0\\
\text{over all }g\text{ with }\operatorname*{rank}(g)=r\}.
\end{multline*}
We say $\overline{\operatorname*{rank}}(f) =\infty$ if such a finite $r$ does not exist.
\end{definition}

Clearly we would always have that%
\[
\overline{\operatorname*{rank}}(f)\leq\operatorname*{rank}(f).
\]
The discussions above show that strict inequality can occur. In fact, for the
$\hat{f}$ in Example~\ref{lack-ex}, $\overline{\operatorname*{rank}}(\hat
{f})=2$ while $\operatorname*{rank}(\hat{f})=3$.

We would like to mention here that this problem applies to operators too.
Optimal approximation of an operator by a sum of tensor/Kronecker products of
lower-dimensional operators, which arises in numerical operator calculus \cite{BM}, is in general an ill-posed problem whose existence cannot be guaranteed.

\begin{example}[Multilinear approximation of operators]\label{eg:oper}For linearly
independent operators $\Phi_{i},\Psi_{i}:V_{i}\rightarrow W_{i}$, $i=1,2,3$,
let $\widehat{T}:V_{1}\otimes V_{2}\otimes V_{3}\rightarrow W_{1}\otimes
W_{2}\otimes W_{3}$ be
\begin{equation}
\widehat{T}:=\Psi_{1}\otimes\Phi_{2}\otimes\Phi_{3}+\Phi_{1}\otimes\Psi
_{2}\otimes\Phi_{3}+\Phi_{1}\otimes\Phi_{2}\otimes\Psi_{3}. \label{oper}%
\end{equation}
If $\Phi_{i},\Psi_{i}$'s are all finite-dimensional and represented in
coordinates as matrices, then `$\otimes$' may be taken to be Kronecker product
of matrices. For $n\in\mathbb{N}$,
\begin{multline*}
T_{n}:=n\left[  \Phi_{1}+\frac{1}{n}\Psi_{1}\right]  \otimes\left[  \Phi
_{2}+\frac{1}{n}\Psi_{2}\right]  \otimes\left[  \Phi_{3}+\frac{1}{n}\Psi
_{3}\right] \\
-n\Phi_{1}\otimes\Phi_{2}\otimes\Phi_{3}.
\end{multline*}
Then
\[
\lim_{n\rightarrow\infty}T_{n}=\widehat{T}.
\]
An example of an operator that has the form in \eqref{oper} is the
$3m$-dimensional Laplacian $\Delta_{3m}$, which can be expressed in terms of
the $m$-dimensional Laplacian $\Delta_{m}$ as%
\[
\Delta_{3m}=\Delta_{m}\otimes I\otimes I+I\otimes\Delta_{m}\otimes I+I\otimes
I\otimes\Delta_{m}.
\]

\end{example}

There are several simple but artificial ways to alleviate the issue of
non-existent best approximant. Observe from the proof of
Theorem~\ref{thm:nonexist} that the coefficients in the approximant
$\gamma,\eta$ becomes unbounded in the limit. Likewise we see this happening
in Example~\ref{eg:oper}. In fact this must \textit{always} happen --- in the
event when a function or operator is approximated by a rank-$r$ function, i.e.%
\begin{equation}
\left\Vert \hat{f}-\sum_{p=1}^{r}\alpha_{p}\prod_{k=1}^{d}\varphi
_{kp}\right\Vert \quad\text{or}\quad\left\Vert \widehat{T}-\sum_{p=1}%
^{r}\alpha_{p}\bigotimes_{k=1}^{d}\Phi_{kp}\right\Vert , \label{eq:loss}%
\end{equation}
and if a best approximation does not exist, then the $r$ coefficients
$\alpha_{1},\dots,\alpha_{r}$ must \textit{all} diverge in magnitude to
$+\infty$ as the approximant converges to the infimum of the norm loss
function in \eqref{eq:loss}. This result was first established in
\cite[Proposition~4.9]{DesiL08:simax}.

So a simple but artificial way to prevent the nonexistence issue is to simply
limit the sizes of the coefficients $\alpha_{1},\dots,\alpha_{r}$ in the
approximant. One way to achieve this is regularization \cite{Paat97:cils,
LimC09:jchemo} --- adding a regularization term to our objective function in
\eqref{fund1} to penalize large coefficients. A common choice is Tychonoff
regularization, which uses a sum-of-squares regularization term:%
\begin{equation}
\operatorname*{argmin}_{\boldsymbol{\alpha}\in[0,\infty)^{r},\;\lVert
\varphi_{kp}\rVert=1}\left\Vert \hat{f}-\sum_{p=1}^{r}\alpha_{p}\prod
_{k=1}^{d}\varphi_{kp}\right\Vert +\lambda\sum_{p=1}^{r}\lvert\alpha_{p}%
\rvert^{2}. \label{cond3}%
\end{equation}
Here, $\lambda$ is an arbitrarily chosen regularization parameter. It can be
seen that this is equivalent to constraining the sizes $\alpha_{1}%
,\dots,\alpha_{r}$ to $\sum_{p=1}^{r}\lvert\alpha_{p}\rvert^{2}=\rho$, with
$\rho$ being determined a posteriori from $\lambda$. The main drawback of such
constraints is that $\rho$ and $\lambda$ are arbitrary, and that they
generally have no physical meaning.

More generally, one may alleviate the nonexistence issue by restricting the
optimization problem \eqref{rankr} to a compact subset of its non-compact
feasible set%
\[
\{f\in L^{2}(X_{1}\times\dots\times X_{d}):\operatorname*{rank}(f)\leq r\}.
\]
Limiting the sizes of $\alpha_{1},\dots,\alpha_{r}$ is a special case but
there are several other simple (but also artificial) strategies. In
\cite{Como92:elsevier}, the factors $\varphi_{k1},\dots,\varphi_{kp}$ are
required to be orthogonal \textit{for all} $k\in\{1,\dots,d\}$, i.e.%
\begin{equation}
\langle\varphi_{kp},\varphi_{kq}\rangle_{k}=\delta_{pq},\quad p,q=1,\dots
,r,\quad k=1,\dots,d. \label{cond1}%
\end{equation}
This remedy is acceptable only in very restrictive conditions. In fact a
necessary condition for this to work is that%
\[
r\leq\min_{k=1,\dots,d}\dim L^{2}(X_{k}).
\]
It is also trivial to see that imposing orthogonality between the separable
factors removes this problem%
\begin{equation}
\langle\varphi_{1p}\otimes\dots\otimes\varphi_{dp},\varphi_{1q}\otimes
\dots\otimes\varphi_{dq}\rangle=\delta_{pq},\quad p,q=1,\dots,r. \label{cond2}%
\end{equation}
This constraint is slightly less restrictive --- by \eqref{ip-eq}, it is
equivalent to requiring \eqref{cond1} \textit{for some} $k\in\{1,\dots,d\}$.
Both \eqref{cond1} and \eqref{cond2} are nonetheless so restrictive as to
exclude the most useful circumstances for the model \eqref{CP-eq}, which
usually involves factors that have no reason to be orthogonal, as we will see
in Section~\ref{applications-sec}. In fact, Kruskal's uniqueness condition is
such a potent tool precisely because it does not require orthogonality.

The conditions \eqref{cond1}, \eqref{cond2}, and \eqref{cond3} all limit the
feasible sets for the original approximation \eqref{fund1} to a much smaller
compact subset of the original feasible set. This is not the case for
nonnegative constraints. In \cite{LimC09:jchemo} it was shown that the
following best rank-$r$ approximation problem for a nonnegative-valued
$\hat{f}$ and where the coefficients $\alpha_{p}$ and factors $\varphi_{kp}$
of the approximants are also nonnegative valued, i.e.%
\[
\operatorname*{argmin}_{\boldsymbol{\alpha}\in[0,\infty)^{r}%
,\;\lVert\varphi_{kp}\rVert=1,\;\varphi_{kp}\geq0}\left\Vert \hat{f}%
-\sum_{p=1}^{r}\alpha_{p}\prod_{k=1}^{d}\varphi_{kp}\right\Vert ,
\]
always has a solution. The feasible set in this case is non-compact and has
nonempty interior within the feasible set of our original problem
\eqref{fund1}. The nonnegativity constraints are natural in some applications,
such as the fluorescence spectroscopy one described in Section~\ref{sec:fluo},
where $\varphi_{kp}$ represent intensities and concentrations, and are
therefore nonnegative valued.

There are two major problems with imposing artificial constraints simply to
force a solution: How do we know a priori that the solution that we seek would
meet those constraints? But more importantly, perhaps the model is ill-posed
and a solution indeed should not exist? To illustrate the case in point with a
more commonplace example, suppose we want to find a maximum likelihood
estimator $X\in\mathbb{R}^{n\times n}$ for the covariance $\Sigma$ of
independent samples $\mathbf{y}_{1},\dots,\mathbf{y}_{m}\sim N(0,\Sigma)$.
This would lead us to a semi-definite programming problem%
\begin{equation}
\operatorname*{argmin}_{X\succ0}\operatorname*{tr}(X^{-1}Y)-\log\det(X)
\label{eq:mle}%
\end{equation}
where $Y=\frac{1}{m}\sum_{i=1}^{m}\mathbf{y}_{i}\mathbf{y}_{i}^{\mathsf{T}}$.
However the problem will not have a solution when the number of samples is
smaller than the dimension, i.e., $m<n$, as the infimum of the loss function
in \eqref{eq:mle} cannot be attained by any $X$ in the feasible set. This is
an indication that we should seek more samples (so that we could get $m\geq
n$, which will guarantee the attainment of the infimum) or use a different
model (e.g., determine if $X^{-1}$ might have some a priori zero
entries due to statistical independence of the variables). It is usually
unwise to impose artificial constraints on the covariance matrix $X$ just so
that the loss function in \eqref{eq:mle} would attain an infimum on a smaller
feasible set --- the thereby obtained `solution' may bear no relation to the
true solution that we want.

Our goal in Section~\ref{sec:exist} is to define a type of physically
meaningful constraints via the notion of \textit{coherence}. It ensures the
existence of a unique minimum, but not via an artificial limitation of the
optimization problem to a convenient subset of the feasible set. In the
applications we discuss in Section~\ref{applications-sec}, we will see that it
is natural to expect existence of a solution when coherence is small enough,
but not otherwise. So when our model is ill-posed or ill-conditioned, we are
warned by the size of the coherence and could seek other remedies (collect
more measurements, use a different model, etc) instead of forcing a `solution'
that bears no relation to reality. But before we get to that we will examine
why, unlike in compressed sensing and matrix completion, the approximation of rank by a ratio of nuclear and spectral norms could not be expected to work here.

\section{Nuclear and spectral norms}\label{sec:nuclear}

We introduce the notion of nuclear and spectral norms for multipartite
functions. Our main purpose is to see if they could be used to alleviate the
problem discussed in Section~\ref{illposed-sec}, namely, that a $d$-partite
function may not have a best approximation by a sum of $r$ separable functions.

The definition of nuclear norm follows naturally from the definition of rank
in Section~\ref{problem-sec}.
\begin{definition}
\label{def:nuclear}We define the \textbf{nuclear norm} (or \textbf{Schatten $1$-norm})
of $f\in L^{2}(X_{1}\times\dots\times X_{d})$ as%
\begin{multline}
\lVert f\rVert_{\ast}:=\inf\Biggl\{ \Biggl[ \sum_{p=1}^{\infty}\lambda
_{p}\Biggr]: f=\sum_{p=1}^{\infty}\lambda_{p}\varphi_{1p}%
\otimes\dots\otimes\varphi_{dp},\\
\lVert\varphi_{kp}\rVert=1,\;\lambda_{p}\geq\lambda_{p+1}%
>0\Biggr\}. \label{eq:nuclear}
\end{multline}
\end{definition}
Note that for rank-$1$ functions, we always have that
\begin{equation}\label{eq:multnorm}
 \lVert \varphi_{1}\otimes\dots\otimes\varphi_{d}\rVert_{\ast} 
=  \lVert \varphi_1 \rVert  \cdots  \lVert \varphi_d \rVert .
\end{equation}
A finite rank function always has finite nuclear norm but in general a function in $ L^{2}(X_{1}\times\dots\times X_{d})$
need not have finite nuclear norm.

The definition of the spectral norm of a multipartite function is motivated by the fact the usual
spectral norm of a matrix $A$ equals the maximal absolute value of its inner product $\operatorname{tr}(AX)$ with rank-$1$ unit-norm matrices $X = \mathbf{u}\mathbf{v}^{\ast}$, $\lVert \mathbf{u} \rVert_2 = \lVert \mathbf{v} \rVert_2 = 1$.
\begin{definition}
\label{def:spectral}We define the \textbf{spectral norm} 
of $f\in L^{2}(X_{1}\times\dots\times X_{d})$ as%
\begin{multline}
\lVert f\rVert_{\sigma}:=\sup \bigl\{ \lvert \langle f, \varphi_1 \otimes \dots \otimes \varphi_d \rangle \rvert : \\
\lVert \varphi_1 \rVert = \dots = \lVert \varphi_d \rVert =1 \bigr\}. \label{eq:spectral}
\end{multline}
\end{definition}
Here we write $\lVert \, \cdot \, \rVert$ for the $L^2$-norms in $L^2(X_k)$, $k =1,\dots, d$. Alternatively, we may also use $\operatorname{Re} \langle f, \varphi_1 \otimes \dots \otimes \varphi_d \rangle$ in place of $\lvert \langle f, \varphi_1 \otimes \dots \otimes \varphi_d \rangle \rvert$ in \eqref{eq:spectral}, which does not change its value. Note that a function in $ L^{2}(X_{1}\times\dots\times X_{d})$ always has finite spectral norm.

The fact that \eqref{eq:nuclear} and \eqref{eq:spectral} define norms on $L^{2}(X_{1}\times\dots\times X_{d})$ follows from the
standard Minkowski gauge argument \cite{DF}. Suppose
$X_{1},\dots,X_{d}$ are finite sets of cardinalities $n_{1},\dots,n_{d}%
\in\mathbb{N}$. The nuclear and spectral norms for the unipartite case ($d=1$) are 
the $\ell^{1}$- and  $\ell^{\infty}$-norms for vectors in $\mathbb{C}^{n_{1}}=L^{2}(X_{1})$.
The nuclear and spectral norms for the bipartite case ($d=2$) agrees with the usual
nuclear and spectral norms for matrices in $\mathbb{C}^{n_{1}\times n_{2}}=L^{2}%
(X_{1}\times X_{2})$. For general $d\geq3$,
Definition~\ref{def:nuclear} yields a notion of nuclear norm\footnote{The notion of a nuclear norm for tensors was originally introduced in Section~3 of our 2010 article (cf.\ \url{http://arxiv.org/abs/1002.4935v1}). However, it was ultimately removed in the published version \cite{LimC10:cras} because of the page limit of the journal.} for hypermatrices in $\mathbb{C}^{n_{1}\times\dots\times
n_{d}}=L^{2}(X_{1}\times\dots\times X_{d})$, while
Definition~\ref{def:spectral} agrees with the usual notion of spectral norm for hypermatrices \cite{L}.

\begin{example}[Nuclear and spectral norms for $3$-tensors]Let $T\in\mathbb{C}^{n_{1}\times n_{2}\times
n_{3}}$. Then by Definition~\ref{def:nuclear}, we have%
\[
\lVert T\rVert_{\ast}=\inf\left\{  \sum_{p=1}^{r}\lambda_{p}:T=\sum
_{p=1}^{r}\lambda_{p}\mathbf{u}_{p}\otimes\mathbf{v}_{p}\otimes\mathbf{w}%
_{p}\right\}  ,
\]
where the infimum is taken over all linear combinations of complex vectors of
unit $2$-norm $\mathbf{u}_{p}\in\mathbb{C}^{n_{1}}$, $\mathbf{v}_{p}%
\in\mathbb{C}^{n_{2}}$, $\mathbf{w}_{p}\in\mathbb{C}^{n_{3}}$, with real
positive coefficientss $\lambda_{p}\in[0,\infty)$, and $p=1,\dots,r$, with
$r\in\mathbb{N}$.

We shall write $\lVert \, \cdot \, \rVert =\lVert \, \cdot \, \rVert_{2}$. The spectral norm of $T$ is
\begin{align*}
\lVert T\rVert_{\sigma} &=\sup\left\{ \lvert \langle T, \mathbf{u}\otimes\mathbf{v}\otimes\mathbf{w} \rangle \rvert :
\lVert\mathbf{u}\rVert =\lVert\mathbf{v}\rVert=\lVert\mathbf{w}\rVert=1\right\}\\
&= \sup_{\mathbf{x},\mathbf{y},\mathbf{z}\ne\mathbf{0}}\frac{\lvert T(\mathbf{x}, \mathbf{y}, \mathbf{z}) \rvert}{\lVert\mathbf{x}\rVert\lVert\mathbf{y}\rVert\lVert\mathbf{z}\rVert}= \lVert T\rVert_{2,2,2}.
\end{align*}
We have regarded $T$ as a trilinear functional defined by $ T(\mathbf{x}, \mathbf{y}, \mathbf{z}) = \sum_{i,j,k=1}^{n_1, n_2, n_3} t_{ijk}x_i y_j z_k$ and $\lVert T\rVert_{2,2,2}$ is its induced norm as defined in \cite{L,HLA}. Again, these clearly extend to any $d$-tensors.  We will see in Lemma~\ref{lem:dual} that the nuclear and spectral norms for tensors are dual to each other.
\end{example}

Note that we have used the term \textit{tensors}, as opposed to hypermatrices, in
the above example. In fact, Definition~\ref{def:nuclear}
defines nuclear norms for the tensors, not just their coordinate
representations as hypermatrices (see our discussion after
Example~\ref{eg:hitch}), because of the following invariant properties.

\begin{lemma}
The nuclear and spectral norms for $\mathbb{C}^{n_{1}\times\dots\times n_{d}}$ are unitarily
invariant, i.e., invariant under the natural action of $\operatorname{U}_{n_1}(\mathbb{C})\times
\dots\times \operatorname{U}_{n_d}(\mathbb{C})$ where $\operatorname{U}_{n}(\mathbb{C})$ denotes the group of unitary matrices in
$\mathbb{C}^{n\times n}$.
\end{lemma}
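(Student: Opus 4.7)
The plan is to exploit the two facts that the natural action $(U_1,\dots,U_d)\cdot T$ sends rank-$1$ tensors to rank-$1$ tensors via $(U_1,\dots,U_d)\cdot(\mathbf{v}_1\otimes\dots\otimes\mathbf{v}_d) = (U_1\mathbf{v}_1)\otimes\dots\otimes(U_d\mathbf{v}_d)$ (extended linearly), and that each $U_k\in\operatorname{U}_{n_k}(\mathbb{C})$ preserves the $2$-norm. In particular, for every $k$, the map $\mathbf{v}_k\mapsto U_k\mathbf{v}_k$ is a bijection of the unit sphere of $\mathbb{C}^{n_k}$ onto itself, with inverse $\mathbf{v}_k\mapsto U_k^{\ast}\mathbf{v}_k$.

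For the nuclear norm, I would observe that any admissible decomposition
\[
T = \sum_{p=1}^{\infty}\lambda_p\, \mathbf{v}_{1p}\otimes\dots\otimes\mathbf{v}_{dp},\qquad \lVert\mathbf{v}_{kp}\rVert_2=1,
\]
yields an admissible decomposition of $(U_1,\dots,U_d)\cdot T$,
\[
(U_1,\dots,U_d)\cdot T = \sum_{p=1}^{\infty}\lambda_p\, (U_1\mathbf{v}_{1p})\otimes\dots\otimes(U_d\mathbf{v}_{dp}),
\]
with the same coefficients $\lambda_p$ because $\lVert U_k\mathbf{v}_{kp}\rVert_2=1$. Hence $\lVert (U_1,\dots,U_d)\cdot T\rVert_{\ast}\le\lVert T\rVert_{\ast}$. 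Applying the same argument to $(U_1^{\ast},\dots,U_d^{\ast})\cdot\bigl((U_1,\dots,U_d)\cdot T\bigr)=T$ gives the reverse inequality, so $\lVert (U_1,\dots,U_d)\cdot T\rVert_{\ast}=\lVert T\rVert_{\ast}$.

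For the spectral norm, the key identity is the adjoint relation
\[
\bigl\langle (U_1,\dots,U_d)\cdot T,\ \varphi_1\otimes\dots\otimes\varphi_d\bigr\rangle = \bigl\langle T,\ (U_1^{\ast}\varphi_1)\otimes\dots\otimes(U_d^{\ast}\varphi_d)\bigr\rangle,
\]
which follows directly from \eqref{ip-eq} together with the definition of the action (it suffices to check the identity on rank-$1$ tensors, where it reduces to $\langle U_k\mathbf{v}_k,\varphi_k\rangle=\langle\mathbf{v}_k,U_k^{\ast}\varphi_k\rangle$ for each $k$). Since the substitution $\varphi_k\mapsto U_k^{\ast}\varphi_k$ is a bijection on the unit sphere of $\mathbb{C}^{n_k}$, taking the supremum over unit $\varphi_k$'s on the left produces the same value as taking it over unit $U_k^{\ast}\varphi_k$'s on the right, so $\lVert (U_1,\dots,U_d)\cdot T\rVert_{\sigma}=\lVert T\rVert_{\sigma}$.

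No genuine obstacle is expected: the argument is two lines of bookkeeping once one has noted (i) the multilinear action preserves the class of rank-$1$ unit tensors, and (ii) it has an inverse of the same form. The only thing to be a little careful about is that the decomposition in Definition~\ref{def:nuclear} allows an infinite sum, so one must verify that the action term-by-term produces another valid admissible decomposition; this is immediate because the action is continuous and linear, and preserves both unit norm of the factors and the coefficients $\lambda_p$.
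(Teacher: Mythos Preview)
Your proposal is correct and follows essentially the same approach as the paper: both arguments rest on the observation that the action $(U_1,\dots,U_d)\cdot(\mathbf{v}_1\otimes\dots\otimes\mathbf{v}_d)=(U_1\mathbf{v}_1)\otimes\dots\otimes(U_d\mathbf{v}_d)$ bijects the set of unit-norm rank-$1$ tensors onto itself, so the infimum and supremum defining the nuclear and spectral norms are taken over the same set before and after the action. The only cosmetic difference is that for the spectral norm the paper mentions one may alternatively deduce invariance from the duality in Lemma~\ref{lem:dual}, whereas you give the direct adjoint computation; both routes are equally short.
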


\begin{proof}
To avoid the clutter of indices, we will assume that $d=3$. It is easy,
although notationally cumbersome, to extend this to general $d\geq3$. Let
$(U,V,W)\in \operatorname{U}_{n_1}(\mathbb{C})\times \operatorname{U}_{n_2}(\mathbb{C})\times \operatorname{U}_{n_3}(\mathbb{C})$ and $T\in\mathbb{C}%
^{n_{1}\times n_{2}\times n_{3}}$. The natural action, given in coordinates by%
\[
(U,V,W)\cdot T=\left[  \sum_{i,j,k=1}^{n_{1},n_{2},n_{3}}u_{ai}v_{bj}%
w_{ck}t_{ijk}\right]  _{a,b,c=1}^{n_{1},n_{2},n_{3}},
\]
has the property that if $T$ has a multilinear decomposition of the form%
\[
T=\sum_{p=1}^{r}\lambda_{p}\mathbf{x}_{p}\otimes\mathbf{y}_{p}\otimes
\mathbf{z}_{p},
\]
then%
\begin{equation}
(U,V,W)\cdot T=\sum_{p=1}^{r}\lambda_{p}(U\mathbf{x}_{p})\otimes
(V\mathbf{y}_{p})\otimes(W\mathbf{z}_{p}). \label{eq1}%
\end{equation}
\eqref{eq1} is obvious when $r=1$ and for general $r$ follows from the
linearity of the action, i.e., $(U,V,W)\cdot(S+T)=(U,V,W)\cdot S+(U,V,W)\cdot
T$. We also need the simple fact that $\operatorname{U}_{n_1}(\mathbb{C})\times \operatorname{U}_{n_2}(\mathbb{C})\times \operatorname{U}_{n_3}(\mathbb{C})$ acts transitively on unit-norm rank-$1$ tensors, i.e., take any
$\mathbf{x}\in\mathbb{C}^{n_{1}}$, $\mathbf{y}\in\mathbb{C}^{n_{2}}$,
$\mathbf{z}\in\mathbb{C}^{n_{3}}$ of unit norm, then every other unit-norm
rank-$1$ tensor may be written as $U\mathbf{x}\otimes V\mathbf{y}\otimes
W\mathbf{z}$ for some $(U,V,W)\in \operatorname{U}_{n_1}(\mathbb{C})\times \operatorname{U}_{n_2}(\mathbb{C})\times \operatorname{U}_{n_3}(\mathbb{C})$.
With these, it follows immediately from Definition~\ref{def:nuclear} that
nuclear norms satisfy%
\[
\lVert(U,V,W)\cdot T\rVert_{\ast}=\lVert T\rVert_{\ast}.
\]
One may similarly show that the spectral norm is also unitarily invariant or deduce the fact from Lemma~\ref{lem:dual} below.
\end{proof}

Recall that on a Hilbert space $\mathbb{H}$ with inner product $\langle \cdot \, , \cdot \rangle$ the \textit{dual norm} of a given norm $\lVert \, \cdot \, \rVert$ is defined as
\[
\lVert f \rVert^\vee := \sup \bigl\{ \lvert \langle f, g \rangle \rvert : \lVert g \rVert \le 1 \}.
\]
If $\lVert \, \cdot \, \rVert$ is the norm induced by the inner product, then $\lVert \,\cdot\, \rVert^\vee = \lVert \,\cdot\, \rVert$; but in general they are different. Nevertheless one always have that $(\lVert \,\cdot\, \rVert^\vee)^\vee =\lVert \,\cdot\, \rVert$ and 
\begin{equation}\label{eq:dual0}
\lvert \langle f, g \rangle \rvert \le \lVert f \rVert^\vee \lVert g \rVert
\end{equation}
for any $f,g \in \mathbb{H}$.

Since the $\ell^1$- and $\ell^\infty$-norms on $\mathbb{C}^n$ are dual, as are the nuclear and spectral norms on $\mathbb{C}^{n_1 \times n_2}$, one may wonder if it is true in general that the nuclear and spectral norms are dual to each other. This is in fact almost a tautology when $X_1,\dots,X_d$ are finite.
\begin{lemma}\label{lem:dual}
Let $X_1,\dots,X_d$ be finite sets.
Then nuclear and spectral norms on $L^2 (X_1 \times \dots \times X_d)$ satisfy
\begin{equation}\label{eq:dual1}
\lvert \langle f, g \rangle \rvert \le \lVert f \rVert_{\sigma } \lVert g \rVert_{\ast }
\end{equation}
and in fact
\begin{equation}\label{eq:dual2}
\lVert f \rVert_{\ast}^\vee  = \lVert f \rVert_{\sigma}.
\end{equation}
\end{lemma}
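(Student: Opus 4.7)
My plan is to prove \eqref{eq:dual1} first, then derive \eqref{eq:dual2} as a two-sided estimate using \eqref{eq:dual1} for one direction and the rank-$1$ test tensors for the other.

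For \eqref{eq:dual1}, I would fix $\varepsilon > 0$ and pick a representation
\[
g = \sum_{p=1}^{r} \lambda_p \, \varphi_{1p} \otimes \dots \otimes \varphi_{dp}
\]
with $\lVert \varphi_{kp}\rVert = 1$ and $\sum_p \lambda_p \le \lVert g\rVert_\ast + \varepsilon$; because $X_1,\dots,X_d$ are finite, such a finite-sum representation is available directly from Definition~\ref{def:nuclear}. Bilinearity of the inner product and the triangle inequality then give
\[
\lvert \langle f, g \rangle \rvert \le \sum_{p=1}^{r} \lambda_p \, \lvert \langle f, \varphi_{1p} \otimes \dots \otimes \varphi_{dp} \rangle \rvert \le \lVert f \rVert_{\sigma} \sum_{p=1}^{r} \lambda_p,
\]
where the last step uses Definition~\ref{def:spectral} applied to each unit rank-$1$ summand. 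Letting $\varepsilon \to 0$ yields $\lvert \langle f, g\rangle \rvert \le \lVert f\rVert_\sigma \lVert g\rVert_\ast$.

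For \eqref{eq:dual2}, the inequality $\lVert f\rVert_\ast^\vee \le \lVert f\rVert_\sigma$ is immediate from \eqref{eq:dual1} and the definition of dual norm. For the reverse inequality, I would use rank-$1$ test elements: for any unit-norm $\varphi_1,\dots,\varphi_d$, set $g = \varphi_1 \otimes \dots \otimes \varphi_d$. By \eqref{eq:multnorm} we have $\lVert g \rVert_\ast \le \lVert\varphi_1\rVert \cdots \lVert\varphi_d\rVert = 1$ (the definition of $\lVert \,\cdot\, \rVert_\ast$ bounds it by any admissible sum, and the single-term decomposition is admissible). Hence such $g$ is feasible in the supremum defining $\lVert f\rVert_\ast^\vee$, giving
\[
\lVert f \rVert_\ast^\vee \ge \sup_{\lVert \varphi_k \rVert =1} \lvert \langle f, \varphi_1 \otimes \dots \otimes \varphi_d \rangle \rvert = \lVert f \rVert_\sigma.
\]

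The only place where care is needed is the first step: one must ensure that a finite representation achieving the infimum (up to $\varepsilon$) exists, which is why the hypothesis restricts to finite $X_1,\dots,X_d$. In that setting the nuclear norm can be computed as an infimum over finite multilinear decompositions (since every tensor in $\mathbb{C}^{n_1 \times \dots \times n_d}$ has finite rank), so there is no issue with convergence of the series in Definition~\ref{def:nuclear}. I do not anticipate any serious obstacle; the lemma is essentially a direct consequence of the definitions combined with \eqref{eq:multnorm}, which is why the authors describe it as ``almost a tautology.''
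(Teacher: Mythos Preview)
Your proposal is correct and follows essentially the same route as the paper's proof: establish \eqref{eq:dual1} by expanding $g$ into a (near-optimal) multilinear decomposition and bounding each term by $\lVert f\rVert_\sigma$, then obtain \eqref{eq:dual2} by combining \eqref{eq:dual1} with the observation that unit-norm rank-$1$ tensors lie in the nuclear-norm unit ball via \eqref{eq:multnorm}. The only cosmetic differences are that the paper takes an arbitrary decomposition and then infimizes (versus your $\varepsilon$-argument), and for the reverse inequality it phrases the rank-$1$ test through the general duality bound \eqref{eq:dual0} rather than directly through the feasibility of rank-$1$ tensors in the dual-norm supremum; both amount to the same computation.
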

\begin{proof}
We first need to establish \eqref{eq:dual1} without invoking \eqref{eq:dual0}. Since $X_1,\dots,X_d$ are finite, any $g \in L^2 (X_1 \times \dots \times X_d)$ is of finite rank.
Take any multilinear decomposition
\[
g=\sum_{p=1}^{r}\lambda_{p}\varphi_{1p}\otimes\dots\otimes\varphi_{dp}
\]
where $r \in \mathbb{N}$ is arbitrary. Now
\begin{align*}
\lvert \langle f, g \rangle \rvert  &\le \sum_{p=1}^r \lvert \lambda_p \rvert \lvert \langle f, \varphi_{1p}\otimes\dots\otimes\varphi_{dp} \rangle \rvert \\
&\le \lVert f \rVert_{\sigma} \sum_{p=1}^r \lvert \lambda_p \rvert
\end{align*}
by definition of spectral norm. Taking infimum over all finite-rank decompositions, we arrive at \eqref{eq:dual1} by definition of nuclear norm. Hence
\begin{align*}
\lVert f \rVert_{\ast}^\vee  &=   \sup \bigl\{ \lvert \langle f, g \rangle \rvert : \lVert g \rVert_{\ast} \le 1 \} \\
&\le \sup \bigl\{ \lVert f \rVert_{\sigma } \lVert g \rVert_{\ast } : \lVert g \rVert_{\ast} \le 1 \} =\lVert f \rVert_{\sigma}.
\end{align*}
On the other hand, using \eqref{eq:dual0} for $ \lVert \,\cdot\, \rVert_{\ast}$ and  $\lVert \,\cdot\, \rVert_{\ast}^\vee$, we get
\begin{align*}
\lVert f \rVert_{\sigma}  &=   \sup \bigl\{ \lvert \langle f, \varphi_{1}\otimes\dots\otimes\varphi_{d}\rangle \rvert : \lVert \varphi_k \rVert  =1 \bigr\}\\
&\le \sup \bigl\{ \lVert f \rVert_{\ast}^\vee \lVert \varphi_{1}\otimes\dots\otimes\varphi_{d}\rVert_{\ast} : \lVert \varphi_k \rVert =1 \bigr\} = \lVert f \rVert_{\ast}^\vee.
\end{align*}
where the last equality follows from \eqref{eq:multnorm}.
\end{proof}
When $X_1,\dots, X_d$ are only required to be $\sigma$-finite measurable spaces, we may use a limiting argument to show that \eqref{eq:dual1} still holds for $f$ of finite spectral norm and $g$ of finite nuclear norm; a proper generalization of \eqref{eq:dual2} is more subtle and we will leave this to future work since we do not require it in this article.

It is known \cite{DonoE03:pnas} that the $\ell^{1}$-norm is the \textit{largest} convex
underestimator\footnote{Also called the \textit{greatest convex minorant}, in
this case also equivalent to the \textit{Legendre-Frenchel biconjugate} or
\textit{convex biconjugate}.} of the ``$\ell^0$-norm'' on the $\ell^{\infty}$-norm
unit ball \cite{HL} and that the nuclear norm is the \textit{largest} convex
underestimator of rank on spectral norm unit ball \cite{FHB}. In particular,
\[
\lVert \mathbf{x} \rVert_1 \le \operatorname{nnz}(\mathbf{x}) \lVert \mathbf{x} \rVert_\infty
\]
for all $\mathbf{x} \in \mathbb{C}^{n}$ while
\[
\lVert X \rVert_* \le \operatorname{rank}(X) \lVert X \rVert_\sigma
\]
for all $X \in \mathbb{C}^{m \times n}$. The quantity $\operatorname{nnz}(\mathbf{x}) := \#\{ i : x_i\ne 0 \}$ is often called ``$\ell^0$-norm'' even though it is not a norm (and neither a seminorm nor a quasinorm nor a pseudonorm). 

We had suspected that the following generalization might perhaps be true, namely, rank, nuclear, and spectral norms as defined in \eqref{rank-eq}, \eqref{eq:nuclear}, and \eqref{eq:spectral} would also satisfy the same inequality:
\begin{equation}
\lVert f\rVert_{\ast}\leq\operatorname*{rank}(f)\times\lVert f\rVert
_{\sigma}. \label{eq:holder}%
\end{equation}
If true, this would immediately imply the same for border rank
\[
\lVert f\rVert_{\ast}\leq\overline{\operatorname*{rank}}(f)\times\lVert
f\rVert_{\sigma}
\]
by a limiting argument. Furthermore, \eqref{eq:holder} would provide a simple way to remedy the nonexistence problem highlighted in Theorem~\ref{thm:nonexist}: One may use the ratio $\lVert f\rVert
_{\ast}/\lVert f\rVert_{\sigma}$ as a `proxy' in place of
$\operatorname*{rank}(f)$ and replace the condition
$\operatorname*{rank}(f)\leq r$ by the weaker condition $\lVert f\rVert_{\ast}\leq
r\lVert f\rVert_{\sigma}$. The discussion in
Section~\ref{illposed-sec} shows that there are $\hat{f}$ for which%
\[
\operatorname*{argmin}\{\lVert\hat{f}-f\rVert:\operatorname*{rank}(f)\leq
r\}=\varnothing,
\]
which really results from the fact that%
\[
\{f\in L^{2}(X_{1}\times\dots\times X_{d}):\operatorname*{rank}(f)\leq r\}
\]
is not a closed set. But
\begin{equation}
\{f\in L^{2}(X_{1}\times\dots\times X_{d}):\lVert f\rVert_{\ast}\leq r\lVert
f\rVert_{\sigma}\} \label{eq:closedcond}
\end{equation}
is always closed (by the continuity of norms) and so for
any $r\in\mathbb{N}$, the optimization problem%
\[
\operatorname*{argmin}\{\lVert\hat{f}-f\rVert:\lVert f\rVert_{\ast}\leq
r\lVert f\rVert_{\sigma}\}
\]
always has a solution.

Unfortunately, \eqref{eq:holder} is not true when $d > 2$. The following example shows that nuclear norm is not an underestimator of rank on the spectral norm unit ball, and can in fact be arbitrarily larger than rank on the spectral norm unit ball.

\begin{example}[Matrix multiplication]
Let $T_n \in \mathbb{C}^{n^2 \times n^2 \times n^2}$ be the matrix multiplication tensor (cf.\ Applications~2 and 3 in \cite[Section 15.3]{HLA}). The well-known result of Strassen \cite{Strassen} implies that
\[
\operatorname{rank}(T_{n}) \le c n^{\log_2 7}
\]
for some $c > 0$ and for $n$ sufficiently large. On the other hand, Derksen \cite{Derksen2} has recently established the exact values for the nuclear and spectral norm of $T_n$:
\[
\lVert T_{n} \rVert_{*} = n^{3}, \quad \lVert T_{n} \rVert_{\sigma} =1
\]
for all $n \in \mathbb{N}$. It then follows that
\[
\lim_{n\to \infty} \frac{\lVert T_{n} \rVert_{*} }{\operatorname{rank}(T_{n}) } = \infty.
\]
\end{example}

Fortunately, we do not need to rely on \eqref{eq:closedcond} for the applications consider in this article. Instead another workaround that uses the notion of coherence, discussed in the next section, is more
naturally applicable in our situtations.

\section{Coherence\label{angular-sec}}

We will show in this section that a simple measure of angular constraint
called coherence, or rather, the closely related notion of \textit{relative
incoherence}, allows us to alleviate two problems simultaneously: the
computational intractability of checking for uniqueness discussed in
Section~\ref{sec:unique} and the non-existence of a best approximant in
Section~\ref{illposed-sec}.

\begin{definition}\label{coherence-def}
Let $\mathbb{H}$ be a Hilbert space provided with scalar
product $\langle\cdot,\cdot\rangle$, and let $\Phi\subseteq\mathbb{H}$ be a
set of elements of unit norm in $\mathbb{H}$. The \textbf{coherence} of $\Phi$
is defined as%
\[
\mu(\Phi)=\sup_{\varphi\neq\psi}\, \lvert\langle\varphi,\psi\rangle\rvert
\]
where the supremum is taken over all distinct pairs $\varphi,\psi\in
\Phi$. If $\Phi=\{\varphi_{1},\dots,\varphi_{r}\}$ is finite, we also
write $\mu(\varphi_{1},\dots,\varphi_{r}):=\max_{p\neq q}\lvert\langle
\varphi_{p},\varphi_{q}\rangle\rvert$.
\end{definition}

We adopt the convention that whenever we write $\mu(\Phi)$ (resp.$\ \mu
(\varphi_{1},\dots,\varphi_{r})$) as in Definition~\ref{coherence-def}, it is
implicitly implied that all elements of $\Phi$ (resp.$\ \varphi_{1}%
,\dots,\varphi_{r}$) are of unit norm.

The notion of coherence has received different names in the literature: mutual
incoherence of two dictionaries \cite{DonoE03:pnas}, mutual coherence of two
dictionaries \cite{CandR07:ip}, the coherence of a subspace projection
\cite{CandT10:it}, etc. The version here follows that of \cite{GribN03:it}.
Usually, dictionaries are finite or countable, but we have here a continuum of
atoms. Clearly, $0\leq\mu(\Phi)\leq1$, and $\mu(\Phi)=0$ iff $\varphi
_{1},\dots,\varphi_{r}$ are orthonormal. Also, $\mu(\Phi)=1$ iff $\Phi$
contains at least a pair of collinear elements, i.e., $\varphi_{p}%
=\lambda\varphi_{q}$ for some $p\neq q$, $\lambda\neq0$.

We find it useful to introduce a closely related notion that we call relative
incoherence. It allows us to formulate some of our results slightly more elegantly.

\begin{definition}
\label{def:relcohe} Let $\Phi\subseteq\mathbb{H}$ be a set of elements of unit
norm. The \textbf{relative incoherence} of $\Phi$ is defined as
\[
\omega(\Phi)=\frac{1-\mu(\Phi)}{\mu(\Phi)}.
\]
For a finite set of unit vectors $\Phi=\{\varphi_{1},\dots,\varphi_{r}\}$, we
will also write $\omega(\varphi_{1},\dots,\varphi_{r})$ occasionally.
\end{definition}

It follows from our observation about coherence that $0\leq\omega(\Phi
)\leq\infty$, $\omega(\Phi)=\infty$ iff $\varphi_{1},\dots,\varphi_{r}$ are
orthonormal, and $\omega(\Phi)=0$ iff $\Phi$ contains at least a pair of
collinear elements.

In the next few subsections, we will see respectively how coherence can inform
us about the existence (Section~\ref{sec:exist}), uniqueness
(Section~\ref{sec:uniquecoh}), as well as both existence and uniqueness
(Section~\ref{sec:bothcoh}) of a solution to the best rank-$r$ multilinear
approximation problem \eqref{fund1}. We will also see how it can be used for
establishing exact recoverability (Section~\ref{sec:exact}) and approximation
bounds (Section~\ref{sec:greedy}) in greedy algorithms.

\subsection{Existence via coherence\label{sec:exist}}

The goal is to prevent the phenomenon we observed in Example~\ref{lack-ex} to
occur, by imposing natural and weak constraints; we do not want to reduce the
search to a compact set. It is clear that the objective is not coercive, which
explains why the minimum may not exist. But with an additional condition on
the \textit{coherence}, we shall be able to prove existence thanks to coercivity.

The following shows that a solution to the bounded coherence best rank-$r$
approximation problem always exists:

\begin{theorem}
\label{existence-prop}Let $f\in L^{2}(X_{1}\times\dots\times X_{d})$ be a
$d$-partite function. If
\begin{equation}
\prod_{k=1}^{d}(1+\omega_{k})>r-1 \label{eq2}%
\end{equation}
or equivalently if
\begin{equation}
\prod_{k=1}^{d}\mu_{k}<\frac{1}{r-1}, \label{eq3}%
\end{equation}
where $\mu_k$ denotes the coherence as in Definition~\ref{coherence-def}, then
\begin{align}
\eta &  =\inf\biggl\{\biggl\lVert f-\sum_{p=1}^{r}\lambda_{p}\varphi
_{1p}\otimes\dots\otimes\varphi_{dp}\biggr\rVert:\nonumber\\
&  \qquad\qquad\qquad\boldsymbol{\lambda}\in\mathbb{C}^{r},\;\mu(\varphi
_{k1},\dots,\varphi_{kr})\leq\mu_{k}\biggr\}\label{bdapprox-eq}\\
&  =\inf\biggl\{\biggl\lVert f-\sum_{p=1}^{r}\lambda_{p}\varphi_{1p}%
\otimes\dots\otimes\varphi_{dp}\biggr\rVert:\nonumber\\
&  \qquad\qquad\qquad\boldsymbol{\lambda}\in\mathbb{C}^{r},\;\omega
(\varphi_{k1},\dots,\varphi_{kr})\geq\omega_{k}\biggr\}\nonumber
\end{align}
is attained. Here, $\Vert\cdot\Vert$ denotes the $L^{2}$-norm on $L^{2}%
(X_{1}\times\dots\times X_{d})$ and $\boldsymbol{\lambda}=(\lambda_{1}%
,\dots,\lambda_{r})$. If desired, we may assume that $\boldsymbol{\lambda}%
\in\mathbb{R}^{r}$ and $\lambda_{1}\geq\dots\geq\lambda_{r}>0$ by
Theorem~\ref{thm:svd}.
\end{theorem}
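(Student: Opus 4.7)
The plan is to use the direct method of the calculus of variations: the condition \eqref{eq3} will be shown to restore coercivity in the coefficient vector $\boldsymbol{\lambda}$, which was precisely the mechanism of nonexistence in Example~\ref{lack-ex}, where the coefficients $\gamma$ and $\eta$ were forced to diverge in magnitude.

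The main analytic input is an estimate on the Gram matrix $G\in\mathbb{C}^{r\times r}$ of the unit-norm rank-$1$ tensors $v_{p}=\varphi_{1p}\otimes\cdots\otimes\varphi_{dp}$, whose entries factor via \eqref{ip-eq} as $G_{pq}=\prod_{k=1}^{d}\langle\varphi_{kp},\varphi_{kq}\rangle_{k}$. Under the coherence constraints in \eqref{bdapprox-eq}, $G_{pp}=1$ while $|G_{pq}|\leq\prod_{k=1}^{d}\mu_{k}$ for $p\neq q$, so Gershgorin's circle theorem gives $\lambda_{\min}(G)\geq 1-(r-1)\prod_{k}\mu_{k}$, a quantity that is strictly positive precisely when \eqref{eq3} holds. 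Consequently
\[
\Bigl\lVert\sum_{p=1}^{r}\lambda_{p}v_{p}\Bigr\rVert^{2}=\boldsymbol{\lambda}^{\ast}G\boldsymbol{\lambda}\geq\Bigl[1-(r-1)\textstyle\prod_{k}\mu_{k}\Bigr]\lVert\boldsymbol{\lambda}\rVert^{2},
\]
which is the sought coercivity in the coefficient vector.

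Now take a minimizing sequence $\{(\boldsymbol{\lambda}^{(n)},\varphi_{kp}^{(n)})\}$ in the feasible set with $\lVert f-g_{n}\rVert\to\eta$. Then $\lVert g_{n}\rVert$ is eventually bounded by $\lVert f\rVert+\eta+1$, and the coercivity estimate yields a uniform bound on $\lVert\boldsymbol{\lambda}^{(n)}\rVert$; extract a subsequence with $\boldsymbol{\lambda}^{(n)}\to\boldsymbol{\lambda}^{\ast}$ in $\mathbb{C}^{r}$. Each factor $\varphi_{kp}^{(n)}$ lies in the (weakly compact) closed unit ball of $L^{2}(X_{k})$, so a diagonal extraction gives $\varphi_{kp}^{(n)}\rightharpoonup\varphi_{kp}^{\ast}$ weakly with $\lVert\varphi_{kp}^{\ast}\rVert\leq 1$. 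Since rank-$1$ functions are total in $L^{2}(X_{1}\times\cdots\times X_{d})$ and inner products with them factor over the $d$ modes, tensor products converge weakly: $v_{p}^{(n)}\rightharpoonup v_{p}^{\ast}=\varphi_{1p}^{\ast}\otimes\cdots\otimes\varphi_{dp}^{\ast}$, hence $g_{n}\rightharpoonup g^{\ast}:=\sum_{p}\lambda_{p}^{\ast}v_{p}^{\ast}$. Weak lower semicontinuity of the norm then yields $\lVert f-g^{\ast}\rVert\leq\liminf_{n}\lVert f-g_{n}\rVert=\eta$.

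The main obstacle is to verify that $g^{\ast}$ admits an \emph{admissible} representation, i.e., one with unit-norm factors satisfying the coherence bound. When every $\lVert\varphi_{kp}^{\ast}\rVert=1$, the Radon--Riesz property (weak convergence plus norm convergence in a Hilbert space implies strong convergence) gives strong convergence of each factor, the inner products pass to the limit by continuity, and the coherence bound is inherited. The harder case is when some factor loses mass in the weak limit; one then absorbs the shrinking norms into the coefficients, which rewrites $g^{\ast}$ as a sum of at most $s<r$ genuine unit-norm rank-$1$ terms, so that $\operatorname{rank}(g^{\ast})\leq s$. One completes the length-$r$ representation by padding with $r-s$ zero-coefficient terms whose unit-norm factors are chosen to be incoherent with the surviving ones, which is always possible in $L^{2}(X_{k})$ under the mild richness implied by \eqref{eq3}, establishing that the infimum is attained.
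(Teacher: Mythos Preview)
Your coercivity bound is exactly what the paper proves: the paper expands $\lVert\sum_p\lambda_p\varphi_{1p}\otimes\cdots\otimes\varphi_{dp}\rVert^2$, bounds the cross terms by $\bigl(\prod_k\mu_k\bigr)\sum_{p\neq q}|\lambda_p\bar\lambda_q|\leq(r-1)\bigl(\prod_k\mu_k\bigr)\lVert\boldsymbol\lambda\rVert_2^2$, and obtains the identical lower bound $[1-(r-1)\prod_k\mu_k]\lVert\boldsymbol\lambda\rVert_2^2$. Your Gershgorin phrasing is an equivalent repackaging. The paper then simply writes ``This proves coercivity of the loss function and hence the existential statement'' and stops; you go further and supply a direct-method argument with weak compactness, which in infinite dimensions is a genuine addition rather than a reformulation.

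That said, your Case~2 has a real gap. You conflate two distinct degeneracies: a factor with $\lVert\varphi_{kp}^{\ast}\rVert=0$ (the whole term vanishes and the count drops) versus $\lVert\varphi_{kp}^{\ast}\rVert=c\in(0,1)$ (the term survives after rescaling, and the count stays at $r$, not $s<r$). In the latter situation, normalizing to $\hat\varphi_{kp}=\varphi_{kp}^{\ast}/c$ restores unit norm, but you have no control on $|\langle\hat\varphi_{kp},\hat\varphi_{kq}\rangle|$: inner products $\langle\varphi_{kp}^{(n)},\varphi_{kq}^{(n)}\rangle$ do not pass to the limit under \emph{weak}~$\times$~\emph{weak} convergence (take $\varphi_{kp}^{(n)}=\varphi_{kq}^{(n)}=e_n$ for an orthonormal sequence), so not even $|\langle\varphi_{kp}^{\ast},\varphi_{kq}^{\ast}\rangle|\leq\mu_k$ is established, let alone the rescaled version. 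Your padding argument therefore covers only complete vanishing and does not show that the weak limit $g^{\ast}$ admits an admissible representation in general. To close this, you would need either to argue (for instance via the coercivity bound and a more careful extraction) that a minimizing sequence can always be chosen so that norm loss does not occur, or to show directly that the constrained set is weakly closed in $L^{2}(X_1\times\cdots\times X_d)$; neither step is present.
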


\begin{proof}
The equivalence between \eqref{eq2} and \eqref{eq3} follows from
Definition~\ref{def:relcohe}. We show that if either of these conditions are
met, then the loss function is coercive. We have the following inequalities
\begin{flalign*}
&\left\Vert \sum_{p=1}^{r}\lambda_{p}\varphi_{1p}\otimes\dots\otimes
\varphi_{dp}\right\Vert ^{2}  =\sum_{p,q=1}^{r}\lambda_{p}\bar{\lambda}%
_{q}\prod_{k=1}^{d}\langle\varphi_{kp},\varphi_{kq}\rangle\\
&  \geq\sum_{p=1}^{r}\lambda_{p}\bar{\lambda}_{p}\prod_{k=1}^{d}\Vert
\varphi_{kp}\Vert^{2}-\sum_{p\neq q}^{r}\left\vert \lambda_{p}\bar{\lambda}_{q}\prod
_{k=1}^{d}\langle\varphi_{kp},\varphi_{kq}\rangle\right\vert \\
&  \geq\sum_{p=1}^{r}|\lambda_{p}|^{2}-\prod_{k=1}^{d}\mu_{k}\sum_{p\neq
q}\lvert\lambda_{p}\bar{\lambda}_{q}\rvert \geq\Vert\boldsymbol{\lambda}\Vert_{2}^{2}-(r-1)\Vert\boldsymbol{\lambda
}\Vert_{2}^{2}\prod_{k=1}^{d}\mu_{k}%
\end{flalign*}
where the last inequality follows from%
\[
\sum_{p\neq q}\lvert\lambda_{p}\bar{\lambda}_{q}\rvert \leq  (r-1)\Vert\boldsymbol{\lambda
}\Vert_{2}^{2},
\]
which is true because $\sum_{p\neq q} (\lvert\lambda_{p}\rvert-\lvert\lambda_{q}\rvert )^2\geq0$. 
This yields%
\begin{equation}
\left\Vert \sum_{p=1}^{r}\lambda_{p}\varphi_{1p}\otimes\dots\otimes
\varphi_{dp}\right\Vert ^{2}\geq\left[  1-(r-1)\prod_{k=1}^{d}\mu_{k}\right]
\Vert\boldsymbol{\lambda}\Vert_{2}^{2} \label{coercivity-eq}%
\end{equation}
Since by assumption $(r-1)\prod_{k=1}^{d}\mu_{k}<1$, it is clear that the left
hand side of \eqref{coercivity-eq} tends to infinity as $\Vert
\boldsymbol{\lambda}\Vert_{2}\rightarrow\infty$. And because $f$ is fixed,
$\left\Vert f-\sum_{p=1}^{r}\lambda_{p}\varphi_{1p}\otimes\dots\otimes
\varphi_{dp}\right\Vert $ also tends to infinity as $\Vert\boldsymbol{\lambda
}\Vert_{2}\rightarrow\infty$. This proves coercivity of the loss function and
hence the existential statement.
\end{proof}

The condition \eqref{eq2} or, equivalently, \eqref{eq3}, in
Theorem~\ref{existence-prop} is sharp in an appropriate sense.
Theorem~\ref{existence-prop} shows that the condition \eqref{eq3} is
sufficient in the sense that it guarantees a best rank-$r$ approximation when
the condition is met. We show that it is also necessary in the sense that if
\eqref{eq3} does not hold, then there are examples where a best rank-$r$
approximation fails to exist.

In fact, let $\hat{f}$ be as in Example~\ref{lack-ex}. As demonstrated in the
proof of Theorem~\ref{thm:nonexist}, the infimum for the case $d=3$ and $r=2$,%
\[
\inf_{\lVert g_{k}\rVert=\lVert h_{k}\rVert=1,\;\lambda,\mu\geq0}\Vert\hat
{f}-\lambda g_{1}\otimes g_{2}\otimes g_{3}-\mu h_{1}\otimes h_{2}\otimes
h_{3}\Vert
\]
is not attained. Since%
\[
g_{k}=\frac{\varphi_{k}+n^{-1}\psi_{k}}{\lVert\varphi_{k}+n^{-1}\psi_{k}%
\rVert},\quad h_{k}=\frac{\varphi_{k}}{\lVert\varphi_{k}\rVert},
\]
for $k=1,2,3$, the corresponding coherence%
\[
\mu(g_{k},h_{k})\geq\lvert\langle g_{k},h_{k}\rangle\rvert\rightarrow1
\]
as $n\rightarrow\infty$. For any values of $\mu_{1},\mu_{2},\mu_{3}\in[0,1]$
such that \eqref{eq3} holds, i.e.$\ \mu_{1}\mu_{2}\mu_{3}<1/(r-1)=1$, we
cannot possibly have $\mu(g_{k},h_{k})\leq\mu_{k}$ for all $k=1,2,3$ since%
\[
\mu(g_{1},h_{1})\mu(g_{2},h_{2})\mu(g_{3},h_{3})\rightarrow1
\]
as $n\rightarrow\infty$.

\subsection{Uniqueness and minimality via coherence\label{sec:uniquecoh}}

In order to relate uniqueness and minimality of multilinear decompositions
to coherence, we need a simple observation about  the notion of
Kruskal rank introduced in Definition~\ref{kruskal-def}.

\begin{lemma}
\label{krank>coherence-lem}Let $\Phi\subseteq L^{2}(X_{1}\times\dots\times
X_{d})$ be finite and $\operatorname{krank} \Phi <\dim \operatorname{span} \Phi$. Then
\begin{equation}
\operatorname{krank}{\Phi}\geq\frac{1}{\mu(\Phi)}. \label{krank>coherence-eq}%
\end{equation}

\end{lemma}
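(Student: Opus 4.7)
The plan is to exploit the fact that under the hypothesis $\operatorname{krank}\Phi < \dim\operatorname{span}\Phi$, there must exist a linearly dependent subset of size exactly $k+1$ where $k = \operatorname{krank}\Phi$, and then to use a Gershgorin-style argument on the Gram matrix of that subset to extract a lower bound on $k$ in terms of $\mu(\Phi)$.

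First I would set $k := \operatorname{krank}\Phi$ and observe that the hypothesis $k < \dim\operatorname{span}\Phi$ forces the existence of some $(k+1)$-element subset $\{\varphi_{i_1},\dots,\varphi_{i_{k+1}}\} \subseteq \Phi$ that is linearly dependent; otherwise every $(k+1)$-subset would be independent, contradicting the maximality of $k$ (combined with the fact that $|\Phi| \geq \dim\operatorname{span}\Phi > k$, so such $(k+1)$-subsets exist). Hence we have scalars $c_1,\dots,c_{k+1}$, not all zero, with $\sum_{j=1}^{k+1} c_j \varphi_{i_j} = 0$. By the definition of Kruskal rank, every $k$-element sub-subset is linearly independent, so none of the $c_j$ may vanish (zeroing one coefficient would produce a nontrivial dependence among $\leq k$ vectors).

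Next I would normalize by choosing an index with $|c_{j^*}| = \max_j |c_j| > 0$; without loss of generality $j^* = 1$. Taking the inner product of the dependence relation with $\varphi_{i_1}$ and using $\|\varphi_{i_1}\| = 1$ gives
\[
c_1 = -\sum_{j=2}^{k+1} c_j \langle \varphi_{i_j}, \varphi_{i_1}\rangle.
\]
Applying the triangle inequality together with $|\langle \varphi_{i_j}, \varphi_{i_1}\rangle| \leq \mu(\Phi)$ and $|c_j| \leq |c_1|$ for each $j \geq 2$ yields
\[
|c_1| \leq \mu(\Phi) \sum_{j=2}^{k+1} |c_j| \leq k\, \mu(\Phi)\, |c_1|.
\]
Dividing by $|c_1| > 0$ produces $1 \leq k\, \mu(\Phi)$, which is exactly \eqref{krank>coherence-eq}. (The case $\mu(\Phi) = 0$ is vacuous since then the vectors are orthonormal and $\operatorname{krank}\Phi = \dim\operatorname{span}\Phi$, so the hypothesis fails.)

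The only mildly delicate point is the very first step — justifying the existence of a linearly dependent $(k+1)$-subset from the hypothesis $\operatorname{krank}\Phi < \dim\operatorname{span}\Phi$ rather than from the possibly stronger $\operatorname{krank}\Phi < |\Phi|$. This is where the stated hypothesis matters: if $\operatorname{krank}\Phi = \dim\operatorname{span}\Phi$, the bound $1/\mu(\Phi)$ could in principle exceed $\operatorname{krank}\Phi$ (e.g.\ a basis that is nearly collinear), so the hypothesis rules out precisely this obstruction. Once that existence is in hand, the rest is a straightforward diagonal-dominance estimate on the Gram matrix.
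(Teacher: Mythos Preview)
Your proof is correct and follows essentially the same argument as the paper's: pick a linearly dependent $(k+1)$-subset guaranteed by the hypothesis, take the inner product of the dependence relation with the vector carrying the largest coefficient, and bound via the coherence. One small slip in your closing commentary: the example where the bound can fail without the hypothesis should be a \emph{nearly orthonormal} basis (so $\mu(\Phi)$ is small and $1/\mu(\Phi)$ large), not a nearly collinear one.
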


\begin{proof}
Let $s=\operatorname{krank}{\Phi}+1$. Then there exists a subset of $s$
distinct unit vectors in $\Phi$, $\{\varphi_{1},\dots,\varphi_{s}\}$ such that
$\alpha_{1}\varphi_{1}+\dots+\alpha_{s}\varphi_{s}=0$ with $\lvert\alpha_{1}%
\rvert=\max\{\lvert\alpha_{1}\rvert,\dots,\lvert\alpha_{s}\rvert\}>0$. Taking
inner product with $\varphi_{1}$ we get $\alpha_{1}=-\alpha_{2}\langle\varphi
_{2},\varphi_{1}\rangle-\dots-\alpha_{s}\langle\varphi_{s},\varphi_{1}\rangle$ and so
$\lvert\alpha_{1}\rvert\leq(\lvert\alpha_{2}\rvert+\dots+\lvert\alpha
_{s}\rvert) \mu(\Phi)$.  Dividing by $\lvert\alpha_{1}\rvert$\ then yields
$1\leq(s-1)\mu(\Phi)$. The condition $\operatorname{krank}{\Phi}<\dim \operatorname{span} \Phi$ prevents $\Phi$ from being orthonormal, so $\mu(\Phi) > 0$ and we obtain \eqref{krank>coherence-eq}.
\end{proof}

We now characterize the uniqueness of the rank-retaining decomposition in
terms of coherence introduced in Definition~\ref{coherence-def}.

\begin{theorem}
\label{uniqueness-prop}Suppose $f\in L^{2}(X_{1}\times\dots\times X_{d})$ has a
multilinear decomposition%
\[
f=\sum_{p=1}^{r}\lambda_{p}\varphi_{1p}\otimes\dots\otimes\varphi_{dp}%
\]
where $\Phi_{k}:=\{\varphi_{k1},\dots,\varphi_{kr}\}$ are elements in
$L^{2}(X_{k})$ of unit norm and $\operatorname{krank} \Phi_k <\dim \operatorname{span} \Phi_k $ for all $k=1,\dots,d$. Let $\omega_{k}=\omega (\Phi_{k})$. If%
\begin{equation}
\sum_{k=1}^{d}\omega_{k}\geq2r-1, \label{Krus-omega-eq}%
\end{equation}
then $r=\operatorname{rank}(f)$ and the decomposition is essentially unique. In terms of coherence,
\eqref{Krus-omega-eq} takes the form
\begin{equation}
\sum_{k=1}^{d}\frac{1}{\mu_{k}} \ge 2r+d-1.\label{Krus-mu-eq}
\end{equation}
\end{theorem}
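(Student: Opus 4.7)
The plan is to reduce this theorem to the Infinite-dimensional Kruskal uniqueness result (Lemma~\ref{Kruskal-lem}), using Lemma~\ref{krank>coherence-lem} as the bridge between coherence and Kruskal rank. Recall that Kruskal's inequality requires a lower bound on $\sum_{k=1}^d \operatorname{krank} \Phi_k$, while the hypothesis we are given is a bound on $\sum_{k=1}^d \omega_k$; so the natural strategy is to pass from coherence to Kruskal rank termwise.

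First, I would verify the equivalence of \eqref{Krus-omega-eq} and \eqref{Krus-mu-eq}. Since $\omega_k = (1-\mu_k)/\mu_k = 1/\mu_k - 1$, summing gives $\sum_{k=1}^d \omega_k = \sum_{k=1}^d 1/\mu_k - d$, so $\sum_k \omega_k \ge 2r-1$ is the same as $\sum_k 1/\mu_k \ge 2r+d-1$. This is purely algebraic.

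Next, the hypothesis $\operatorname{krank}\Phi_k < \dim\operatorname{span}\Phi_k$ allows us to apply Lemma~\ref{krank>coherence-lem} to each factor set, yielding
\[
\operatorname{krank}\Phi_k \;\ge\; \frac{1}{\mu_k}, \qquad k=1,\dots,d.
\]
Summing over $k$ and invoking \eqref{Krus-mu-eq} we obtain
\[
\sum_{k=1}^d \operatorname{krank}\Phi_k \;\ge\; \sum_{k=1}^d \frac{1}{\mu_k} \;\ge\; 2r+d-1,
\]
which is precisely Kruskal's condition \eqref{Kruskal-eq}. Applying Lemma~\ref{Kruskal-lem} then immediately delivers both conclusions: $r = \operatorname{rank}(f)$ (rank-retaining) and the decomposition is essentially unique.

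There is no real obstacle, since all the heavy lifting has been done in Lemmas~\ref{Kruskal-lem} and~\ref{krank>coherence-lem}; the only thing to watch is that the technical hypothesis $\operatorname{krank}\Phi_k < \dim\operatorname{span}\Phi_k$ is exactly what is required to invoke Lemma~\ref{krank>coherence-lem} (it prevents $\mu_k = 0$, which would make $1/\mu_k$ undefined), and this is already built into the statement of the theorem. Thus the proof is essentially a two-line chain of inequalities after invoking the two prior lemmas.
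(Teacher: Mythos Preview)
Your proposal is correct and follows essentially the same approach as the paper's own proof: convert the coherence condition \eqref{Krus-omega-eq} into \eqref{Krus-mu-eq} via $\omega_k = 1/\mu_k - 1$, invoke Lemma~\ref{krank>coherence-lem} termwise to bound $\operatorname{krank}\Phi_k \ge 1/\mu_k$, sum to obtain Kruskal's inequality \eqref{Kruskal-eq}, and then apply Lemma~\ref{Kruskal-lem}. Your write-up is in fact a bit more explicit than the paper's, particularly in noting that the hypothesis $\operatorname{krank}\Phi_k < \dim\operatorname{span}\Phi_k$ is precisely what licenses the use of Lemma~\ref{krank>coherence-lem}.
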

\begin{proof}
Inequality~\eqref{Krus-mu-eq} implies that $\sum_{k=1}^{d}\mu_{k}^{-1} \ge 2r+d-1$, where $\mu_{k}$ denotes $\mu(\Phi_{k})$. If it is satisfied, then so is Kruskal's condition \eqref{Kruskal-eq} thanks to Lemma~\ref{krank>coherence-lem}. The result hence directly follows from Lemma~\ref{Kruskal-lem} and Definition~\ref{def:relcohe}.
\end{proof}

Note that unlike the Kruskal ranks in \eqref{Kruskal-eq}, the coherences in
\eqref{Krus-mu-eq} are trivial to compute. In addition to uniqueness, an easy
but important consequence of Theorem~\ref{uniqueness-prop} is that it
provides a \textit{readily checkable sufficient condition} for tensor rank, which is
NP-hard over any field \cite{Haas90:ja, HillL09}.

Since the purpose of Theorem~\ref{uniqueness-prop} is to provide a computationally feasible alternative of Lemma~\ref{Kruskal-lem}, excluding the case $\operatorname{krank} \Phi_k =\dim \operatorname{span} \Phi_k $ is not an issue. Note that $\operatorname{krank} \Phi_k =\dim \operatorname{span} \Phi_k $ iff $\Phi_k$ comprises linearly independent elements, and the latter can be checked in polynomial time. So this is a case where Lemma~\ref{Kruskal-lem} can be readily checked and one does not need  Theorem~\ref{uniqueness-prop}.

\subsection{Existence and uniqueness via coherence\label{sec:bothcoh}}

The following existence and uniqueness sufficient condition may now be deduced
from Theorems~\ref{existence-prop} and \ref{uniqueness-prop}.

\begin{corollary}\label{final-cor}
If $d\geq3$ and if coherences $\mu_{k}$ satisfy
\begin{equation}
\left(  \prod_{k=1}^{d}\mu_{k}\right)  ^{1/d}\leq\frac{d}{2r+d-1}
\label{final-eq}%
\end{equation}
then the bounded coherence best rank-$r$ approximation problem has a unique
solution up to unimodulus scaling.
\end{corollary}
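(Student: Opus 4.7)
The plan is to derive Corollary~\ref{final-cor} by showing that the single hypothesis \eqref{final-eq} simultaneously implies the existence condition \eqref{eq3} of Theorem~\ref{existence-prop} and the uniqueness condition \eqref{Krus-mu-eq} of Theorem~\ref{uniqueness-prop}. Once both are in place, existence of a minimizer follows from the coercivity argument behind Theorem~\ref{existence-prop}, and any minimizer, being a rank-$r$ multilinear decomposition whose factor coherences are bounded by $\mu_k$, is essentially unique by Theorem~\ref{uniqueness-prop}.

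For the uniqueness half, I would apply the AM--GM inequality to the reciprocals $1/\mu_k$:
\[
\frac{1}{d}\sum_{k=1}^{d}\frac{1}{\mu_{k}}\;\geq\;\Bigl(\prod_{k=1}^{d}\mu_{k}\Bigr)^{-1/d}\;\geq\;\frac{2r+d-1}{d},
\]
where the last step is exactly \eqref{final-eq}. Multiplying by $d$ gives \eqref{Krus-mu-eq}, so Theorem~\ref{uniqueness-prop} applies (the degenerate case $\operatorname{krank}\Phi_k=\dim\operatorname{span}\Phi_k$ is harmless, since then Kruskal's condition \eqref{Kruskal-eq} is trivially met for $d\geq 3$ and one invokes Lemma~\ref{Kruskal-lem} directly).

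For the existence half, I would check that \eqref{final-eq} forces $(r-1)\prod_{k}\mu_{k}<1$. Hypothesis \eqref{final-eq} gives $\prod_k \mu_k\leq\bigl(d/(2r+d-1)\bigr)^d$, so it suffices to show
\[
\Bigl(\frac{2r+d-1}{d}\Bigr)^{d}\;>\;r-1.
\]
Writing $(2r+d-1)/d=1+(2r-1)/d$ and applying Bernoulli's inequality yields
\[
\Bigl(1+\tfrac{2r-1}{d}\Bigr)^{d}\;\geq\;1+d\cdot\tfrac{2r-1}{d}\;=\;2r\;>\;r-1,
\]
so \eqref{eq3} holds and Theorem~\ref{existence-prop} provides a minimizer of \eqref{bdapprox-eq}.

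I do not anticipate a real obstacle here: the content is purely that AM--GM handles uniqueness and Bernoulli handles existence, both applied to the same bound. The only subtlety worth flagging in the write-up is the degenerate Kruskal case mentioned above, and the observation that the minimizer produced by Theorem~\ref{existence-prop} automatically satisfies the coherence bounds under which Theorem~\ref{uniqueness-prop} guarantees essential uniqueness, so the two conclusions combine cleanly into the statement of the corollary.
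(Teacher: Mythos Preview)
Your proposal is correct and follows essentially the same route as the paper: the harmonic--geometric mean (your AM--GM on $1/\mu_k$) yields \eqref{Krus-mu-eq} and hence uniqueness via Theorem~\ref{uniqueness-prop}, while an elementary inequality yields \eqref{eq3} and hence existence via Theorem~\ref{existence-prop}. The only cosmetic difference is that for existence the paper asserts positivity of $f(x)=1/x-(d/(2x+d-1))^{d}$ for $x\geq 2$, $d\geq 3$ and treats $r=1$ separately, whereas your Bernoulli argument $((2r+d-1)/d)^{d}\geq 2r>r-1$ is a clean explicit verification that handles all $r\geq 1$ at once.
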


\begin{proof}
The existence in the case $r=1$ is assured, because the set of separable
functions $\{\varphi_{1}\otimes\dots\otimes\varphi_{d}:\varphi_{k}\in
L^{2}(X_{k})\}$ is closed. Consider thus the case $r\geq2$. Since the function
$f(x)=\frac{1}{x}-\left(  \frac{d}{2x+d-1}\right)  ^{d}$ is strictly positive
for $x\geq2$ and $d\geq3$, condition \eqref{final-eq} implies that
$\prod_{k=1}^{d}\mu_{k}$ is smaller than $1/r$, which permits to claim that
the solution exists by calling for Theorem~\ref{existence-prop}. Next in
order to prove uniqueness, we use the inequality between harmonic and
geometric means: if \eqref{final-eq} is verified, then we also necessarily
have $d\left(  \sum_{k=1}^{d}\mu_{k}^{-1}\right)  ^{-1}\leq\frac{d}{2r+d-1}$.
Hence $\sum_{k=1}^{d}\mu_{k}^{-1}\geq2r+d-1$ and we can apply
Theorem~\ref{uniqueness-prop}.
\end{proof}

In practice, simpler expressions than \eqref{final-eq} can be more attractive
for computational purposes. These can be derived from the inequalities between
means:%
\[
\left[  \frac{1}{d}\sum_{k=1}^{d}\mu_{k}^{-1}\right]^{-1}\leq\left[
\prod_{k=1}^{d}\mu_{k}\right]^{\frac{1}{d}}\leq\frac{1}{d}\sum_{k=1}^{d}\mu_{k}%
\leq\left[ \frac{1}{d}\sum_{k=1}^{d}\mu_{k}^{2}\right]^{\frac{1}{2}}.
\]
Examples of stronger sufficient conditions that could be used in place of
\eqref{final-eq} include
\begin{align}
\sum_{k=1}^{d}\mu_{k}  &  \leq\frac{d^{2}}{2r+d-1},\label{sufficient1-eq}\\
\sum_{k=1}^{d}\mu_{k}^{2}  &  \leq d\left(  \frac{d}{2r+d-1}\right)  ^{2}.
\label{sufficient2-eq}%
\end{align}

Another simplification can be performed, which yields differentiable
expressions of the constraints if \eqref{sufficient2-eq} is to be used. In
fact, noting that for any set of numbers $x_{1},\dots,x_{n}\in\mathbb{C}$,
$\max_{i=1,\dots,n}\lvert x_{i}\rvert\leq\sqrt{\sum_{i=1}^{n}\lvert
x_{i}\rvert^{2}}$, a sufficient condition ensuring that \eqref{sufficient2-eq}
is satisfied, and hence \eqref{final-eq}, is
\[
\sum_{k=1}^{d}\sum_{p<q}\lvert\langle\varphi_{kp},\varphi_{kq}\rangle
\rvert^{2}\leq d\left(  \frac{d}{2r+d-1}\right)  ^{2}.
\]

\subsection{Exact recoverability via coherence\label{sec:exact}}

We now describe a result that follows from the remarkable work of Temlyakov.
It allows us to in principle determine the multilinear decomposition meeting
the type of coherence conditions in Section~\ref{sec:exist}.

Some additional notations would be useful. We let $\Phi\subseteq\{f\in
L^{2}(X_{1}\times\dots\times X_{d}:\operatorname*{rank}(f)=1\}$ be a
\textit{dictionary\footnote{A \textit{dictionary} is any set $\Phi
\subseteq\mathbb{H}$ whose linear span is dense in the Hilbert space
$\mathbb{H}$.}} of separable functions (i.e.~rank-$1$) in $L^{2}(X_{1}%
\times\dots\times X_{d})$ that meets a bounded coherence condition , i.e.%
\begin{equation}
\mu(\Phi)<\mu\label{eq:dict}%
\end{equation}
for some $\mu\in[0,1)$ to be chosen later. Recall that the elements of $\Phi$
are implicitly assumed to be of unit norm (cf.~remark after
Definition~\ref{coherence-def}).

Let $t\in(0,1]$. The \textit{weakly orthogonal greedy algorithm}
(\textsc{woga}) is simple to describe: Set $f_{0}=f$. For each $m\in
\mathbb{N}$, we inductively define a sequence of $f_{m}$'s as follows:

\begin{enumerate}
\item $g_{m}\in\Phi$ is any element satisfying%
\[
\lvert\langle f_{m-1},g_{m}\rangle\rvert\geq t\sup_{g\in \Phi}
\lvert\langle f_{m-1},g\rangle\rvert;
\]

\item $h_{m}\in L^{2}(X_{1}\times\dots\times X_{d})$ is a projection of $f$
onto $\operatorname{span}(g_{1},\dots,g_{m})$, i.e.%
\begin{equation}
h_{m}\in\operatorname{argmin}\{\lVert f-g\rVert:g\in\operatorname{span}%
(g_{1},\dots,g_{m})\}; \label{proj}%
\end{equation}

\item $f_{m}\in L^{2}(X_{1}\times\dots\times X_{d})$ is a deflation of $f$ by
$h_{m}$, i.e.%
\[
f_{m}=f-h_{m}.
\]

\end{enumerate}

Note that deflation alone, without the coherence requirement, generally does not work for
computing  multilinear decompositions \cite{StegC10:laa}.
The following result, adapted here for our purpose, was proved for any arbitrary dictionary in
\cite{Temlyakov06}.

\begin{theorem}[Temlyakov]\label{thm:Tem}
Suppose $f\in L^{2}(X_{1}\times\dots\times X_{d})$ has a
multilinear decomposition%
\[
f=\sum_{p=1}^{r}\lambda_{p}\varphi_{1p}\otimes\dots\otimes\varphi_{dp}%
\]
with $\varphi_{1p}\otimes\dots\otimes\varphi_{dp}\in\Phi$ and the condition
that%
\[
r<\frac{t}{1+t}\left(  1+\frac{1}{\mu}\right)
\]
for some $t\in(0,1]$. Then the \textsc{woga}
algorithm recovers the factors exactly, or more precisely, $f_{r}=0$ and thus
$f=h_{r}$.
\end{theorem}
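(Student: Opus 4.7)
The plan is to prove by induction on $m = 1, 2, \dots, r$ that each selected atom $g_m$ belongs to the original support $\{g^{(p)} := \varphi_{1p} \otimes \dots \otimes \varphi_{dp} : p = 1, \dots, r\}$ and is distinct from $g_1, \dots, g_{m-1}$. Once this is established, after $r$ iterations the algorithm has selected all $r$ support atoms, so $h_r$, being the orthogonal projection of $f$ onto their span, coincides with $f$ and therefore $f_r = 0$.

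For the induction step, assume $g_1, \dots, g_{m-1}$ are distinct elements of the support. Then $h_{m-1}$ (and hence $f_{m-1}$) lies in $V := \operatorname{span}\{g^{(1)}, \dots, g^{(r)}\}$. The hypothesis implies $\mu(r-1) < 1$ (since $r < \tfrac{1}{2}(1 + 1/\mu)$), so by a Gershgorin argument on the Gram matrix of $\{g^{(1)}, \dots, g^{(r)}\}$, these atoms are linearly independent, and we may write $f_{m-1} = \sum_{p=1}^r c_p g^{(p)}$ uniquely. Assuming $f_{m-1} \neq 0$, fix $q^* = \operatorname*{argmax}_p |c_p|$. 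Two estimates drive the proof:
\[
|\langle f_{m-1}, g^{(q^*)} \rangle| \geq |c_{q^*}| - \mu \sum_{p \neq q^*} |c_p| \geq (1 - \mu(r-1))|c_{q^*}|,
\]
and, for any $g \in \Phi$ outside the support,
\[
|\langle f_{m-1}, g \rangle| \leq \sum_{p=1}^r |c_p| \cdot |\langle g^{(p)}, g \rangle| \leq \mu r\, |c_{q^*}|.
\]

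To guarantee that the greedy selection picks a support atom, we show that no non-support $g \in \Phi$ meets the criterion $|\langle f_{m-1}, g \rangle| \geq t M$, where $M := \sup_{g' \in \Phi}|\langle f_{m-1}, g'\rangle|$. Since $g^{(q^*)} \in \Phi$, we have $M \geq (1 - \mu(r-1))|c_{q^*}|$, so it suffices that $\mu r |c_{q^*}| < t(1 - \mu(r-1))|c_{q^*}|$. Rearranging yields $\mu r(1 + t) < t(1 + \mu)$, i.e., $r < \tfrac{t}{1+t}(1 + 1/\mu)$, which is precisely the hypothesis. Thus $g_m$ must lie in the support; moreover $g_m$ is new, since $f_{m-1} \perp \operatorname{span}(g_1, \dots, g_{m-1})$ while $|\langle f_{m-1}, g_m \rangle| > 0$, so $g_m$ cannot coincide with any previously selected atom.

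The argument is essentially bookkeeping once the two coherence estimates above are in place; the striking feature (and the only mildly subtle aspect) is that matching the constants causes the critical inequality to rearrange exactly into the stated hypothesis, indicating that $r < \tfrac{t}{1+t}(1 + 1/\mu)$ is sharp for this proof strategy. Beyond this, the main invariant to maintain is $f_{m-1} \in V$ at every iteration, which is precisely why the inductive setup tracking the support property is needed.
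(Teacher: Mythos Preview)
The paper does not give its own proof of this theorem: it simply states the result and attributes it to Temlyakov \cite{Temlyakov06}, noting that it holds for an arbitrary dictionary and is merely being specialized here. So there is no in-paper argument to compare against.

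Your argument is correct and is essentially the standard proof of this exact recovery result for weak orthogonal greedy pursuit over an incoherent dictionary. The two coherence estimates you isolate are the right ones, the induction is set up properly (in particular, the invariant $f_{m-1}\in V$ and the orthogonality $f_{m-1}\perp\operatorname{span}(g_1,\dots,g_{m-1})$ are exactly what is needed to exclude both non-support atoms and repeated atoms), and the algebra collapsing $\mu r(1+t)<t(1+\mu)$ into the stated hypothesis is correct. One small remark: the paper's setup has $\mu(\Phi)<\mu$ strictly, so the coherence bounds you use with $\mu$ hold with a bit of slack, which only helps; and if some $\lambda_p=0$ or $f_{m-1}=0$ early, the conclusion $f_r=0$ still follows trivially since subsequent projections onto larger spans can only keep $f_m=0$.
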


%Observe that we have chose .
So $h_{r}$, by its definition in \eqref{proj} and our choice of $\Phi$, is
given in the form of a linear combination of rank-$1$ functions, i.e., an
rank-$r$ multilinear decomposition.

\subsection{Greedy approximation bounds via coherence\label{sec:greedy}}

This discussion in Section~\ref{sec:exact} pertains to exact recovery of a
rank-$r$ multilinear decomposition although our main problem really takes the
form of a best rank-$r$ approximation more often than not. We will describe
some greedy approximation bounds for the approximation problem in this section.

We let%
\[
\sigma_{r}(\hat{f}):=\inf_{\boldsymbol{\alpha}\in\mathbb{C}^{r},\;\lVert
\varphi_{kp}\rVert=1}\left\Vert \hat{f}-\sum_{p=1}^{r}\alpha_{p}\prod
_{k=1}^{d}\varphi_{kp}\right\Vert .
\]
By our definition of rank and border rank,%
\begin{align*}
\sigma_{r}(\hat{f})  &  =\inf\{\lVert\hat{f}-f\rVert:\operatorname*{rank}%
(f)\leq r\}\\
&  =\min\{\lVert\hat{f}-f\rVert:\overline{\operatorname*{rank}}(f)\leq r\}.
\end{align*}
It would be wonderful if greedy algorithms along the lines of what we
discussed in Section~\ref{sec:exact} could yield an approximant within some
provable bounds that is a factor of $\sigma_{r}(\hat{f})$. However this is too
much to hope for mainly because a dictionary comprising all separable
functions, i.e., $\{f:\operatorname*{rank}(f)=1\}$ is far too large to be
amenable to such analysis. This does not prevent us from considering somewhat
more restrictive dictionaries like what we did in the previous section. So
again, let $\Phi\subseteq\{f\in L^{2}(X_{1}\times\dots\times X_{d}%
):\operatorname*{rank}(f)=1\}$ be such that%
\[
\mu(\Phi)<\mu
\]
for some given $\mu\in[0,1)$ to be chosen later. Let us instead define%
\[
s_{r}(\hat{f})=\inf_{\boldsymbol{\alpha}\in\mathbb{C}^{r},\;\varphi_{p}\in
\Phi}\left\Vert \hat{f}-\sum_{p=1}^{r}\alpha_{p}\varphi_{p}\right\Vert .
\]
Clearly%
\begin{equation}
\sigma_{r}(\hat{f})\leq s_{r}(\hat{f}) \label{eq4}%
\end{equation}
since the infimum is taken over a smaller dictionary.

The special case where $t=1$ in the \textsc{woga} described in
Section~\ref{sec:exact} is also called the \textit{orthogonal greedy
algorithm} (\textsc{oga}). The result we state next comes from the work of a
number of people done over the last decade: \eqref{eq:gms} is due to Gilbert,
Muthukrisnan, and Strauss in 2003 \cite{GMS}; \eqref{eq:tropp} is due to Tropp
in 2004 \cite{Tropp}; \eqref{eq:det} is due to Dohono, Elad, and Temlyakov in
2006 \cite{DET}; and \eqref{eq:liv} is due to Livshitz in 2012 \cite{Liv}. We
merely apply these results to our approximation problem here.

\begin{theorem}
Let $\hat{f}\in L^{2}(X_{1}\times\dots\times X_{d})$ and $f_{r}$ be the $r$th
iterate as defined in \textsc{woga} with $t=1$ and input $\hat{f}$.

\begin{enumerate}[\upshape (i)]
\item If $r<\frac{1}{32}\mu^{-1}$, then%
\begin{equation}
\lVert\hat{f}-f_{r}\rVert\leq8r^{1/2}s_{r}(\hat{f}). \label{eq:gms}%
\end{equation}

\item If $r<\frac{1}{3}\mu^{-1}$, then%
\begin{equation}
\lVert\hat{f}-f_{r}\rVert\leq(1+6r)^{1/2}s_{r}(\hat{f}). \label{eq:tropp}%
\end{equation}

\item If $r\leq\frac{1}{20}\mu^{-2/3}$, then%
\begin{equation}
\lVert\hat{f}-f_{r\log r}\rVert\leq24s_{r}(\hat{f}). \label{eq:det}%
\end{equation}

\item If $r\leq\frac{1}{20}\mu^{-1}$, then%
\begin{equation}
\lVert\hat{f}-f_{2r}\rVert\leq3s_{r}(\hat{f}). \label{eq:liv}%
\end{equation}

\end{enumerate}
\end{theorem}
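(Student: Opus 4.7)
The plan is to recognize this theorem as a direct translation of four well-established greedy approximation bounds into our setting. The key observation is that when $t=1$, \textsc{woga} as described above is exactly the orthogonal greedy algorithm (\textsc{oga}) of classical approximation theory, now instantiated in the Hilbert space $L^2(X_1 \times \dots \times X_d)$ equipped with the coherence-bounded dictionary $\Phi$. No structural feature of tensor products, separable functions, or $L^2$ beyond the inner product and the coherence hypothesis on $\Phi$ is used; the theorem is essentially a corollary of four results from the compressed-sensing and greedy-approximation literature.

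First, I would verify the abstract hypotheses required by the cited results. The ambient space $L^2(X_1 \times \dots \times X_d)$ is a Hilbert space, and the dictionary $\Phi \subseteq \{f : \operatorname{rank}(f) = 1\}$ consists of unit-norm elements by the convention following Definition~\ref{coherence-def} and satisfies $\mu(\Phi) < \mu$ by \eqref{eq:dict}. The quantity $s_r(\hat{f})$ is, by definition, the best $r$-term approximation error of $\hat{f}$ from $\Phi$, which is precisely the benchmark appearing on the right-hand side of each classical bound. These are exactly the ingredients assumed by the greedy-approximation results we intend to invoke.

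Second, I would invoke each cited result in turn. Bound (i) is the approximation-rate theorem of Gilbert, Muthukrishnan, and Strauss \cite{GMS}, yielding the factor $8r^{1/2}$ once $r < \tfrac{1}{32}\mu^{-1}$. Bound (ii) is Tropp's sharpening \cite{Tropp}, giving $(1+6r)^{1/2}$ under the weaker hypothesis $r < \tfrac{1}{3}\mu^{-1}$. Bound (iii), due to Donoho, Elad, and Temlyakov \cite{DET}, trades iterations for a constant factor by showing that $r \log r$ iterations of \textsc{oga} suffice to reach within factor $24$ of $s_r(\hat{f})$, provided $r \le \tfrac{1}{20}\mu^{-2/3}$. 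Bound (iv), due to Livshitz \cite{Liv}, further improves this to $2r$ iterations with factor $3$ under $r \le \tfrac{1}{20}\mu^{-1}$. Each is stated in its original reference for an abstract Hilbert space dictionary with coherence bounded by $\mu$, so transferring the conclusion to $(L^2(X_1 \times \dots \times X_d), \Phi)$ is immediate.

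The main obstacle, such as it is, is therefore not mathematical depth but careful bookkeeping of conventions: the coherence normalization used by each author, the precise form of the greedy step (selection rule, projection, deflation), and the notion of best $r$-term approximation must all be checked to match ours. Since each cited paper uses the same formulation that we have adopted here (modulo notation), this matching is routine, which is precisely why the authors describe the proof as a direct application of these results to our approximation problem.
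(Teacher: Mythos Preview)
Your proposal is correct and mirrors the paper's own treatment: the paper does not give a standalone proof but simply attributes \eqref{eq:gms}, \eqref{eq:tropp}, \eqref{eq:det}, \eqref{eq:liv} to \cite{GMS}, \cite{Tropp}, \cite{DET}, \cite{Liv} respectively and remarks that these abstract Hilbert-space results are ``merely applied'' to the dictionary $\Phi$ in $L^2(X_1\times\dots\times X_d)$. Your additional bookkeeping (checking that \textsc{woga} with $t=1$ is \textsc{oga}, that $s_r(\hat f)$ is the standard best $r$-term error, and that the coherence convention matches) is exactly the verification implicit in that remark.
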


It would be marvelous if one could instead establish bounds in \eqref{eq:gms},
\eqref{eq:tropp}, \eqref{eq:det}, and \eqref{eq:liv} with $\sigma_{r}(\hat
{f})$ in place of $s_{r}(\hat{f})$ and $\{f:\operatorname*{rank}(f)=1\}$ in
place of $\Phi$, dropping the coherence $\mu$ altogether. In which case one
may estimate how well the $r$th \textsc{oga} iterates $f_{r}$ approximates the
best rank-$r$ approximation. This appears to be beyond present capabilites.

We would to note that although the approximation theoretic technqiues (coherence,
greedy approximation, redundant dictionaries, etc) used in this article owe their newfound popularity
to compressive sensing, they owe their roots to works of the Russian school of approximation
theorists (e.g., Boris Kashin, Vladimir Temlyakov, et al.) dating back to the 1980s. We refer
readers to the  bibliography of \cite{Temlyakov} for more information.

\subsection{Checking coherence-based conditions}\label{sec:cc}

Since the conditions in Theorems~\ref{existence-prop}, \ref{uniqueness-prop}, \ref{thm:Tem}, and
Corollary~\ref{final-cor} all involve coherence, we will say a brief word about its computation.

It has recently been established that computing the spark of a finite set of vectors $\Phi$, i.e.,
the size of the smallest linearly dependent subset of $\Phi$, is strongly NP-hard \cite{TP}. Since
$\operatorname{spark} \Phi=\operatorname{krank}\Phi+1$, it immediately follows that the same is true for Kruskal rank.
\begin{corollary}[Kruskal rank is NP-hard]\label{cor:nphard}
Let $\mathbb{H}$ be a Hilbert space and $\Phi \subseteq \mathbb{H}$ be finite. The optimization problem of 
computing
\[
\operatorname{krank}(\Phi) := \max \left\{ k : \text{all } \Psi \in \binom{\Phi}{k}\text{ linearly independent}\right\}
\]
is strongly NP-hard.  $\binom{\Phi}{k} =$  set of all $k$-element subsets of $\Phi$.
\end{corollary}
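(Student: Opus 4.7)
The plan is to obtain this as an immediate consequence of the known strong NP-hardness of computing the spark, via the elementary identity $\operatorname{spark}(\Phi) = \operatorname{krank}(\Phi) + 1$ that has already been recorded in Section~\ref{sec:unique}. Concretely, I would cite the Tillmann--Pfetsch result \cite{TP}, which states that given a finite set $\Phi \subseteq \mathbb{H}$ (or equivalently the columns of a rational matrix) it is strongly NP-hard to compute $\operatorname{spark}(\Phi) := \min\{ k : \text{some } \Psi \in \binom{\Phi}{k} \text{ is linearly dependent}\}$.

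The reduction is then transparent. First I would observe that any algorithm that on input $\Phi$ returns $\operatorname{krank}(\Phi)$ can be composed with the trivial post-processing step ``add $1$'' to produce $\operatorname{spark}(\Phi)$, and this composition runs in polynomial time (and uses only polynomially bounded numbers, which is what is needed for strong NP-hardness as opposed to weak NP-hardness). Conversely, subtracting $1$ from the spark produces the Kruskal rank. Hence Kruskal rank and spark are polynomial-time equivalent under a strongly polynomial reduction, so the strong NP-hardness of spark transfers verbatim to Kruskal rank.

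The only mild subtlety worth spelling out is the edge case $\operatorname{spark}(\Phi) = \infty$, which occurs exactly when every finite subset of $\Phi$ is linearly independent; in that case $\operatorname{krank}(\Phi) = \lvert \Phi \rvert$ and the reduction still holds with the obvious conventions. Beyond this, there is nothing to check: the identity $\operatorname{spark}(\Phi) = \operatorname{krank}(\Phi) + 1$ is immediate from the definitions (the maximum $k$ for which every $k$-subset is independent is one less than the minimum $k$ for which some $k$-subset is dependent), and strong NP-hardness is preserved under polynomial-time Turing reductions with polynomially bounded outputs.

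The only real ``obstacle'' here is not mathematical but bibliographic, namely securing the citation to \cite{TP} for the spark result, since everything else reduces to a one-line observation. No separate hardness gadget or reduction from a classical NP-hard problem (such as \textsc{3-SAT} or \textsc{Subset Sum}) is needed in this proof, because the heavy lifting has already been done in \cite{TP}.
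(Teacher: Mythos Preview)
Your proposal is correct and matches the paper's own argument essentially verbatim: the paper simply cites \cite{TP} for the strong NP-hardness of computing spark and then invokes the identity $\operatorname{spark}\Phi = \operatorname{krank}\Phi + 1$ (already noted in Section~\ref{sec:unique}) to conclude. Your additional remarks on the polynomial-time reduction and the edge case $\operatorname{spark}(\Phi) = \infty$ are sound elaborations but add nothing the paper deemed necessary.
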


Given the NP-hardness of Kruskal rank, one expects that 
Lemma~\ref{Kruskal-lem}, as well as its finite-dimensional counterparts \cite{Krus77:laa,SB},
would be computationally intractable to apply in reality. The reciprocal of coherence is therefore
a useful surrogate for Kruskal rank by virtue of Lemma~\ref{krank>coherence-lem} and the fact
that computing $\mu(\Phi)$ requires only $r(r-1)/2$ inner products $\langle \varphi_i, \varphi_j \rangle$, $i \ne j$,
where $r = \lvert \Phi \rvert$.

For finite-dimensional problems, i.e., $X_1,\dots,X_d$ are all finite sets, we may regard $\Phi$ as a matrix in $\mathbb{C}^{n \times r}$, and 
$\mu(\Phi)$ is simply the largest entry in magnitude in the Gram matrix $\Phi^{\mathsf{T}} \Phi$, which may be
rapidly computed using Strassen-type algorithms \cite{Strassen} in numerically stable ways \cite{DDH}.
For infinite-dimensional problems, the problem depends on the cost of integrating complex-valued functions defined
on $X_k$, $k = 1,\dots, d$. For example, if $X_1,\dots, X_d$ are all finite intervals of $\mathbb{R}$, one may use
\textit{inner product quadratures} \cite{Chen} to efficiently compute the Gram matrix
 $(\langle \varphi_i,\varphi_j\rangle)_{i,j=1}^n \in \mathbb{R}^{n \times n}$ and thereby find $\mu(\Phi)$.

\section{Applications\label{applications-sec}}

Various applications, many under the headings\footnote{ Other than
\textsc{candecomp} and \textsc{parafac}, the finite-dimensional multilinear decompositions have also been studied under the
names \textsc{cp}, \textsc{cand}, canonical decomposition, and canonical polyadic
decompositions.} of \textsc{candecomp}
\cite{CarrC70:psy} and \textsc{parafac} \cite{Hars70:ucla}, have appeared in
psychometrics and, more recently, also other data analytic applications. We
found that many of these applications suffer from a regretable defect --- there
are no compelling reasons nor rigorous arguments that support the use of a
rank-$r$ multilinear decomposition model. The mere fact that a data set may be
cast in the form of a $d$-dimensional array $A\in\mathbb{C}^{n_{1}\times
\dots\times n_{d}}$ does not mean that \eqref{CP-eq} would be the right or
even just a reasonable thing to do. In particular, how would one even
interpret the factors $\varphi_{kp}$'s when $d>2$? When $d=2$, one could
arguably interpret these as principal or varimax components when
orthonormality is imposed but for general $d>2$, a convincing application of a
model based on the rank-$r$ multilinear decomposition \eqref{CP-eq} must rely
on careful arguments that follow from first principles.

The goal of this section is two-fold. First we provide a
selection of applications where the rank-$r$ multilinear decomposition
\eqref{CP-eq} arises naturally via considerations of first principles (in electrodynamics, quantum mechanics, wave propagation, etc).  Secondly, we demonstrate that the coherence conditions discussed extensively in
Section~\ref{angular-sec} invariably have reasonable interpretations in terms
of physical quantities.

The use of a rank-$r$ multilinear decomposition model in signal processing via
higher-order statistics has a long history
\cite{PesqM01:ieeeit,FernFM08:SP,Como04:ijacsp,CardM95:ieeesp,ShamG93:SP}. Our
signal processing applications here are of a different nature, they are based
on geometrical properties of sensor arrays instead of considerations of
higher-order statistics. This line of argument first appeared in the work of
Sidiropoulos and Bro \cite{SidiBG00:ieeesp}, which is innovative and
well-motivated by first principles. However, like all other applications
considered thus far, whether in data analysis, signal processing,
psychometrics, or chemometrics, it does not address the serious nonexistence
problem that we discussed at length in Section~\ref{sec:exist}. Without any
guarantee that a solution to \eqref{fund1} exists, one can never be sure when
the model would yield a solution. Another issue of concern is that the Kruskal
uniqueness condition in Lemma~\ref{Kruskal-lem} has often been invoked to
provide evidence of a unique solution but we now know that this condition is practically impossible 
to check because of Corollary~\ref{cor:nphard}. The applications
considered below would use the coherence conditions developed in
Section~\ref{angular-sec} to avoid these difficulties. More precisely,
Theorem~\ref{existence-prop}, Theorem~\ref{uniqueness-prop}, and
Corollary~\ref{final-cor} are invoked to guarantee the existence of a solution
to the approximation problem and provide readily checkable conditions for
uniqueness of the solution, all via the notion of coherence.  Note that unlike Kruskal's
condition, which applies only to an \textit{exact} decomposition, Corollary~\ref{final-cor}
gives uniqueness of an approximation in noisy circumstances. 

In this section, applications are presented in finite dimension. In order to
avoid any confusion, $X^{\ast}$, $X^{\mathsf{H}}$ and $X^{\mathsf{T}}$ will
denote complex conjugate, hermitian transpose, and transpose, of the matrix
$X$ respectively.

\subsection{Joint channel and source estimation\label{JointChannelSource-sec}}

Consider a narrow band transmission problem in the far field. We assume here
that we are in the context of wireless telecommunications, but the same
principle could also apply in other areas. Let $r$ signals impinge on an array, so
that their mixture is recorded. We wish to recover the original signals
and to estimate their directions of arrival and respective powers at the
receiver. If the channel is specular, some of these signals can correspond to
different propagation paths of the same radiating source, and are therefore
correlated. In other words, $r$ does not denote the number of sources, but the
total number of distinct paths viewed from the receiver.

In the present framework, we assume that channels can be time-varying, but
that they can be regarded to be constant over a sufficiently short observation
length. The goal is to be able to work with extremely short samples.

In order to face this challenge, we assume that the sensor array is
structured, as in \cite{SidiBG00:ieeesp}. More precisely, the sensor array comprises
a \textit{reference array} with $n_{1}$ sensors, whose
location is defined by a vector $\mathbf{b}_{i}\in\mathbb{R}^{3}$, and
$n_{2}-1$ other subarrays obtained from the reference array by a translation
in space defined by a vector $\boldsymbol{\Delta}_{j}\in\mathbb{R}^{3}$,
$j =2,\dots, n_{2}$. The reference subarray is numbered with $j=1$ in the remainder.

Under these assumptions, the signal received at discrete time $t_{k}$,
$k=1,\dots,n_{3}$, on the $i$th sensor of the reference subarray can be
written as
\[
s_{i,1}(k)=\sum_{p=1}^{r}\sigma_{p}(t_{k})\exp(\psi_{i,p})
\]
with $\psi_{i,p}=\jmath\frac{\omega}{C}(\mathbf{b}_{i}^{\mathsf{T}}%
\mathbf{d}_{p})$ where the dotless $\jmath$ denotes $\sqrt{-1}$; 
$\mathbf{d}_{p} \in \mathbb{R}^3$ is of unit norm and denotes direction of arrival of the
$p$th path, $C$ denotes the wave celerity, and $\omega$ denotes the pulsation. Next, on the $j$th subarray, $j=2, \dots, n_2$, we have
\begin{equation}
s_{i,j}(k)=\sum_{p=1}^{r}\sigma_{p}(t_{k})\exp(\psi_{i,j,p})
\label{obsModel-eq}%
\end{equation}
with $\psi_{i,j,p}=\jmath\frac{\omega}{C}(\mathbf{b}_{i}^{\mathsf{T}%
}\mathbf{d}_{p}+\boldsymbol{\Delta}_{j}^{\mathsf{T}}\mathbf{d}_{p})$. If we
let $\boldsymbol{\Delta}_{1} =\mathbf{0}$, then \eqref{obsModel-eq}
also applies to the reference subarray. The crucial feature of this structure is
that variables $i$ and $j$ decouple in the function $\exp(\psi_{i,j,p})$, yielding
a relation resembling the rank-retaining multilinear decomposition:
\[
s_{i,j}(k)=\sum_{p=1}^{r}\lambda_{p}u_{ip}v_{jp}w_{kp}%
\]
where $u_{ip}=\exp\left(  \jmath\frac{\omega}{C}\mathbf{b}_{i}^{\top
}\mathbf{d}_{p}\right)  $, $v_{jp}=\exp\left(  \jmath\frac{\omega}%
{C}\boldsymbol{\Delta}_{j}^{\mathsf{T}}\mathbf{d}_{p}\right)  $ and
$w_{kp}=\sigma_{p}(t_{k})/\Vert\boldsymbol{\sigma}_{p}\Vert$, $\lambda
_{p}=\Vert\boldsymbol{\sigma}_{p}\Vert$.

By computing a rank-retaining decomposition of the hypermatrix
$S=(s_{i,j}(k))\in\mathbb{C}^{n_{1}\times n_{2}\times n_{3}}$, one may
jointly estimate: (i)~signal waveforms $\sigma_{p}(k)$, and (ii)~directions of arrival $\mathbf{d}_{p}$ of each propagation path, provided
$\mathbf{b}_{i}$ or $\boldsymbol{\Delta}_{j}$ are known.

However, the observation model \eqref{obsModel-eq} is not realistic, and an
additional error term should be added in order to account for modeling
inaccuracies and background noise. It is customary (and realistic thanks to
the central limit theorem) to assume that this additive error has a continuous
probability distribution, and that therefore the hypermatrix $S$ has the \textit{generic rank}. 
Since the generic rank is at least as large as $\lceil n_{1}n_{2}%
n_{3}/(n_{1}+n_{2}+n_{3}-2)\rceil$, which is always larger than Kruskal's
bound \cite{ComoLA09:jchemo}, we are led to the problem of
approximating the hypermatrix $S$ by another of rank $r$. We have seen that the
angular constraint imposed in Section~\ref{angular-sec} permits us to deal with a
well-posed problem. In order to see the physical meaning of this constraint,
we need to first define the tensor product between sensor subarrays.

\subsection{Tensor product between sensor subarrays\label{antennaProduct-sec}}

The sensor arrays we encounter are structured, in the sense that the whole
array is generated by one subarray defined by the collection of vector
locations $\{\mathbf{b}_{i}\in\mathbb{R}^{3}:1\leq i\leq n_{1}\}$, and a
collection of translations in space, $\{\boldsymbol{\Delta}_{j}\in
\mathbb{R}^{3}:1\leq j\leq n_{2}\}$. If we define vectors
\begin{align}
\mathbf{u}_{p}  &  =\frac{1}{\sqrt{n_{1}}}\left[  \exp\left(  \jmath
\frac{\omega}{C}\mathbf{b}_{i}^{\mathsf{T}}\mathbf{d}_{p}\right)  \right]
_{i=1}^{n_{1}},\nonumber\\
\mathbf{v}_{p}  &  =\frac{1}{\sqrt{n_{2}}}\left[  \exp\left(  \jmath
\frac{\omega}{C}\boldsymbol{\Delta}_{j}^{\mathsf{T}}\mathbf{d}_{p}\right)
\right]  _{j=1}^{n_{2}},\label{notationUp-eq}\\
\mathbf{w}_{p}  &  = \boldsymbol{\sigma}_{p}/\Vert\boldsymbol{\sigma
}_{p}\Vert,\nonumber
\end{align}
then we may view all measurements as the superimposition of
decomposable hypermatrices
$\lambda_{p}\mathbf{u}_{p}\otimes\mathbf{v}_{p}\otimes\mathbf{w}_{p}$.

Geometrical information of the sensor array is contained in $\mathbf{u}_{p}%
\otimes\mathbf{v}_{p}$ while energy and time information on each path $p$ is contained
in $\lambda_{p}$ and $\mathbf{w}_{p}$ respectively. Note that the
reference subarray and the set of translations play symmetric roles, in the
sense that $\mathbf{u}_{p}$ and $\mathbf{v}_{p}$ could be interchanged without
changing the whole array. This will become clear with a few examples.

When we are given a structured sensor array, there can be several ways of
splitting it into a tensor product of two (or more) subarrays, as shown in the following
simple examples.
\begin{example}
Define the matrix of sensor locations
\[
[\mathbf{b}_{1},\mathbf{b}_{2},\mathbf{b}_{3}]=
\begin{bmatrix}
0 & 0 & 1\\
0 & 1 & 1
\end{bmatrix}.
\]
This subarray is depicted in Figure~\ref{arrayV-fig}(b). By translating it
via the translation in Figure~\ref{arrayV-fig}(c) one obtains
another subarray. The union of the two subarrays yields the array of
Figure~\ref{arrayV-fig}(a). The same array is obtained by interchanging roles of
the two subarrays, i.e., three subarrays of two sensors deduced from each
other by two translations.
\end{example}

\begin{figure}[h]
\begin{center}
\includegraphics[scale=0.5]{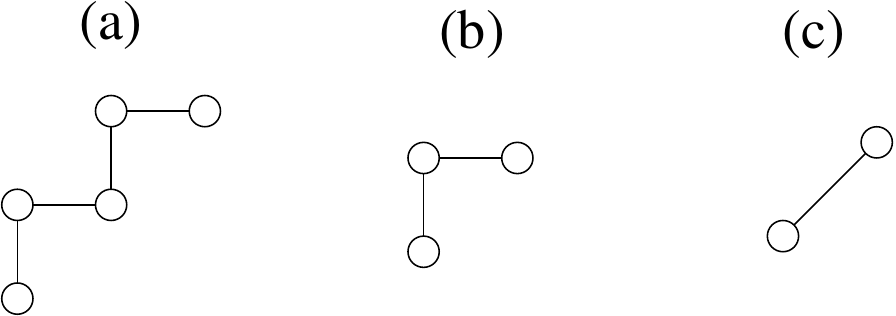}
\end{center}
\caption{Antenna array (a) obtained as tensor product of subarrays
(b) \& (c)}%
\label{arrayV-fig}%
\end{figure}

\begin{example}
Define the array by
\[
[\mathbf{b}_{1},\mathbf{b}_{2},\dots,\mathbf{b}_{6}]=
\begin{bmatrix}
0 & 1 & 2 & 0 & 1 & 2\\
0 & 0 & 0 & 1 & 1 & 1
\end{bmatrix}.
\]
This array, depicted in Figure~\ref{array1-fig}(a), can either be obtained from
the union of subarray of Figure~\ref{array1-fig}(b) and its translation defined
by Figure~\ref{array1-fig}(c), or from the array of Figure~\ref{array1-fig}(c)
translated three times according to Figure~\ref{array1-fig}(b). We 
express this relationship as
\[
\raisebox{-1.2ex}{\includegraphics[scale=0.4]{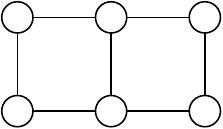}} = \raisebox{-1.2ex}{\includegraphics[scale=0.4]{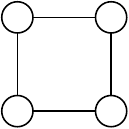}}\otimes
\raisebox{-1.2ex}{\includegraphics[scale=0.4]{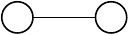}} = \raisebox{-1.2ex}{\includegraphics[scale=0.4]{subarray2h}}\otimes
\raisebox{-1.2ex}{\includegraphics[scale=0.4]{subarray4}}
\]
Another decomposition may be obtained as
\[
\raisebox{-1.2ex}{\includegraphics[scale=0.4]{subarray6}} = \raisebox{-1.2ex}{\includegraphics[scale=0.4]{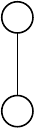}}\otimes
\raisebox{-1.2ex}{\includegraphics[scale=0.4]{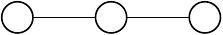}} = \raisebox{-1.2ex}{\includegraphics[scale=0.4]{subarray3}}\otimes
\raisebox{-1.2ex}{\includegraphics[scale=0.4]{subarray2v}}
\]
In fact,
$\raisebox{-1.2ex}{\includegraphics[scale=0.4]{subarray4}}=\raisebox{-1.2ex}{\includegraphics[scale=0.4]{subarray2v}}\otimes
\raisebox{-1.2ex}{\includegraphics[scale=0.4]{subarray2h}}$ and
$\raisebox{0ex}{\includegraphics[scale=0.4]{subarray3}}=\raisebox{0ex}{\includegraphics[scale=0.4]{subarray2h}}\otimes
\raisebox{0ex}{\includegraphics[scale=0.4]{subarray2h}}$. However, it is
important to stress that the various decompositions of the whole array into
tensor products of subarrays are not equivalent from the point of view of
performance. In particular, the Kruskal bound can be different, as we will see next.
\end{example}
\begin{figure}[tbh]
\begin{center}
\includegraphics[scale=0.5]{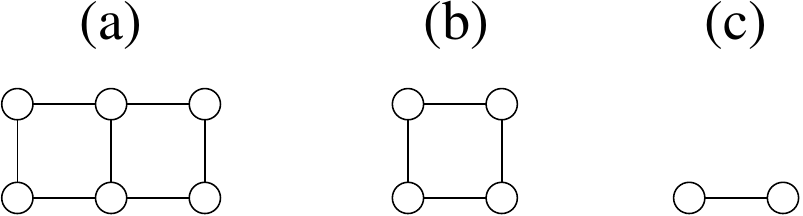}
\end{center}
\caption{Antenna array (a) obtained as tensor product of subarrays
(b) \& (c)}%
\label{array1-fig}%
\end{figure}

Similar observations can be made for grid arrays in general.
\begin{example}
Take an array of $9$ sensors located at $(x,y)\in\{1,2,3\}\times\{1,2,3\}$. We
have the relations
\[
\raisebox{-2.3ex}{\includegraphics[scale=0.4]{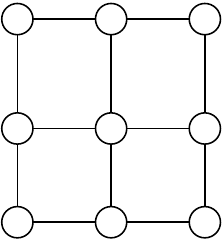}}=\raisebox{-1.2ex}{\includegraphics[scale=0.4]{subarray6}}\otimes
\raisebox{-1.2ex}{\includegraphics[scale=0.4]{subarray2v}}=\raisebox{-1.2ex}{\includegraphics[scale=0.4]{subarray4}}\otimes
\raisebox{-1.2ex}{\includegraphics[scale=0.4]{subarray4}}=\raisebox{0ex}{\includegraphics[scale=0.4]{subarray3}}\otimes
\raisebox{-2.3ex}{\includegraphics[scale=0.4]{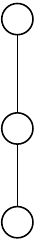}}
\]
among others.
\end{example}

Let us now have a look at the maximal number of sources $r_{\max}$ that can be
extracted from a $n_{1}\times n_{2}\times n_{3}$ hypermatrix in the absence of noise. A sufficient condition is
that the total number of paths, $r$, is smaller than Kruskal's bound
\eqref{Kruskal-eq}. We shall simplify the bound by making two assumptions:
(a)~the loading matrices are generic, i.e., they are of full rank, and (b)~the
number of paths is larger than the sizes $n_{1}$ and $n_{2}$ of the two
subarrays entering the array tensor product, and smaller than the number of
time samples, $n_{3}$. Under these simplifying assumptions, Kruskal's bound
becomes $2r_{\max}\leq n_{1}+n_{2}+r_{\max}-2$, or:
\begin{equation}
r_{\max}=n_{1}+n_{2}-2 \label{simpleBound-eq}%
\end{equation}
The table below illustrates the fact that the choice of subarrays has an
impact on this bound.

\begin{center}%
\begin{tabular}
[c]{|c|c|cc|c|}\hline
Array & Subarray & $n_{1}$ & $n_{2}$ & $r_{\max}$\\
~ & product & ~ & ~ & ~\\\hline\hline
\raisebox{-1ex}[4ex][2ex]{\includegraphics[scale=0.3]{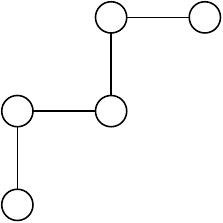}} &
\raisebox{-1ex}{\includegraphics[scale=0.3]{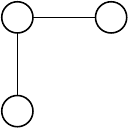}} $\otimes$
\raisebox{-1ex}{\includegraphics[scale=0.3]{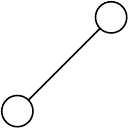}} & $3$ & $2$ &
$3$\\\hline
\raisebox{-1ex}[3.5ex][2ex]{\includegraphics[scale=0.3]{subarray6}} &
\raisebox{-1ex}{\includegraphics[scale=0.3]{subarray4}} $\otimes$
\raisebox{0ex}{\includegraphics[scale=0.3]{subarray2h}} & $4$ & $2$ &
$4$\\\cline{2-5}%
~ & \raisebox{-0.8ex}{\includegraphics[scale=0.3]{subarray2v}} $\otimes$
\raisebox{0ex}[2.5ex][2ex]{\includegraphics[scale=0.3]{subarray3}} & $2$ & $3$
& $3$\\\hline
~ & \raisebox{-2ex}[4ex][3ex]{\includegraphics[scale=0.3]{subarray3v}}
$\otimes$ \raisebox{0ex}{\includegraphics[scale=0.3]{subarray3}} & $3$ & $3$ &
$4$\\\cline{2-5}%
\raisebox{-1ex}[3ex][2ex]{\includegraphics[scale=0.3]{subarray9}} &
\raisebox{-1ex}{\includegraphics[scale=0.3]{subarray6}} $\otimes$
\raisebox{-1ex}{\includegraphics[scale=0.3]{subarray2v}} & $6$ & $2$ &
$6$\\\cline{2-5}%
~ & \raisebox{-1ex}[3ex][2ex]{\includegraphics[scale=0.3]{subarray4}}
$\otimes$ \raisebox{-1ex}{\includegraphics[scale=0.3]{subarray4}} & $4$ & $4$
& $6$\\\hline
\end{tabular}
\end{center}

\subsection{Significance of the angular constraint\label{signification-sec}}

We are now in a position to interpret the meanings of the various coherences in light of this application. According to the notations given in
\eqref{notationUp-eq}, the first coherence
\[
\mu_{1}=\max_{p\neq q}\, \lvert\mathbf{u}_{p}^{\mathsf{H}}\mathbf{u}_{q}\rvert
\]
corresponds to the angular separation viewed from the reference subarray. The
vectors $\mathbf{b}_{i}$ and $\mathbf{d}_{p}$ have unit norms, as
do the vectors $\mathbf{u}_{p}$. The quantity $\lvert\mathbf{u}%
_{p}^{\mathsf{H}}\mathbf{u}_{q}\rvert$ may thus be viewed as a measure of angular
separation between $\mathbf{d}_{p}$ and $\mathbf{d}_{q}$, as we shall demonstrate 
in Proposition~\ref{collinear-prop}.

\begin{definition}
\label{resolvent-def}We shall say that a collection of vectors $\{\mathbf{b}%
_{1},\dots,\mathbf{b}_{n}\}$ is \textbf{resolvent} with respect to a direction
$\mathbf{v}\in\mathbb{R}^{3}$ if there exist two indices $k$ and $l$ such that
$\mathbf{v}=\mathbf{b}_{k}-\mathbf{b}_{l}$ and%
\[
0<\Vert\mathbf{v}\Vert<\frac{\lambda}{2},
\]
where $\lambda= 2\pi C/\omega$ denotes the wavelength.
\end{definition}

Let $\mathbf{b}_{i}$, $\mathbf{d}_{p}$ and $\mathbf{u}_{q}$ be defined as in
\eqref{notationUp-eq}, $i = 1,\dots, n_{1}$, $p, q =1,\dots, n_{2}$. Then we
have the following.
\begin{proposition}
\label{collinear-prop} If $\{\mathbf{b}_{1},\dots,\mathbf{b}_{n}\}$ is
resolvent with respect to three linearly independent directions, then
\[
\lvert\mathbf{u}_{p}^{\mathsf{H}}\mathbf{u}_{q}\rvert=1 \quad \Leftrightarrow \quad
\mathbf{d}_{p}=\mathbf{d}_{q}.
\]

\end{proposition}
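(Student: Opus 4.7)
The plan is to prove both directions separately. The reverse implication $(\Leftarrow)$ is immediate: if $\mathbf{d}_p=\mathbf{d}_q$ then $\mathbf{u}_p=\mathbf{u}_q$, and by construction these vectors have unit norm (the factor $1/\sqrt{n_1}$ in \eqref{notationUp-eq} being designed precisely so that $\lVert\mathbf{u}_p\rVert=1$), so $\lvert\mathbf{u}_p^{\mathsf{H}}\mathbf{u}_q\rvert=1$. The substance of the statement is the forward direction, which I would tackle as follows.

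First I would expand the inner product explicitly using \eqref{notationUp-eq}:
\[
\mathbf{u}_p^{\mathsf{H}}\mathbf{u}_q=\frac{1}{n_1}\sum_{i=1}^{n_1}\exp\!\Bigl(\jmath\tfrac{\omega}{C}\mathbf{b}_i^{\mathsf{T}}(\mathbf{d}_q-\mathbf{d}_p)\Bigr).
\]
This is an average of $n_1$ unit complex numbers. By the equality case of the triangle inequality, $\lvert\mathbf{u}_p^{\mathsf{H}}\mathbf{u}_q\rvert=1$ forces every summand to equal the same unit complex number, i.e., for every pair of indices $k,l\in\{1,\dots,n_1\}$,
\[
\tfrac{\omega}{C}(\mathbf{b}_k-\mathbf{b}_l)^{\mathsf{T}}(\mathbf{d}_q-\mathbf{d}_p)\in 2\pi\mathbb{Z}.
\]

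Next I would invoke the resolvent hypothesis: there exist three linearly independent vectors $\mathbf{v}_1,\mathbf{v}_2,\mathbf{v}_3\in\mathbb{R}^3$, each of the form $\mathbf{b}_{k_\ell}-\mathbf{b}_{l_\ell}$ with $0<\lVert\mathbf{v}_\ell\rVert<\lambda/2=\pi C/\omega$. Applying the previous displayed relation to these three differences gives $\tfrac{\omega}{C}\mathbf{v}_\ell^{\mathsf{T}}(\mathbf{d}_q-\mathbf{d}_p)\in 2\pi\mathbb{Z}$ for $\ell=1,2,3$. The key quantitative step is to bound each such quantity in absolute value: using Cauchy–Schwarz and the fact that $\mathbf{d}_p,\mathbf{d}_q$ are unit vectors (so $\lVert\mathbf{d}_q-\mathbf{d}_p\rVert\le 2$), one obtains
\[
\Bigl\lvert\tfrac{\omega}{C}\mathbf{v}_\ell^{\mathsf{T}}(\mathbf{d}_q-\mathbf{d}_p)\Bigr\rvert\le \tfrac{\omega}{C}\lVert\mathbf{v}_\ell\rVert\cdot 2<2\pi.
\]
The only integer multiple of $2\pi$ lying strictly inside $(-2\pi,2\pi)$ is $0$, so $\mathbf{v}_\ell^{\mathsf{T}}(\mathbf{d}_q-\mathbf{d}_p)=0$ for each $\ell=1,2,3$.

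Finally, since $\mathbf{v}_1,\mathbf{v}_2,\mathbf{v}_3$ are linearly independent in $\mathbb{R}^3$, they form a basis, and the vector $\mathbf{d}_q-\mathbf{d}_p$ is orthogonal to every element of this basis; hence $\mathbf{d}_q-\mathbf{d}_p=\mathbf{0}$. The main subtle point — and the only place where the precise threshold $\lambda/2$ in Definition~\ref{resolvent-def} is used — is the strict inequality in the bound above, which ensures $0$ is the sole admissible value of the integer multiple; if the resolvent condition were stated with $\le\lambda/2$ instead of $<\lambda/2$, the argument would break down as the value $\pm 2\pi$ could occur, producing spurious coincidences. The requirement of three linearly independent resolvent directions is exactly what is needed to promote orthogonality of $\mathbf{d}_q-\mathbf{d}_p$ to three independent directions into the conclusion that the vector itself vanishes in $\mathbb{R}^3$.
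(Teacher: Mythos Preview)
Your proof is correct and follows essentially the same route as the paper's: both deduce from $\lvert\mathbf{u}_p^{\mathsf{H}}\mathbf{u}_q\rvert=1$ that $(\mathbf{b}_k-\mathbf{b}_l)^{\mathsf{T}}(\mathbf{d}_p-\mathbf{d}_q)\in\lambda\mathbb{Z}$ for all pairs $k,l$, then use the bound $\lVert\mathbf{b}_k-\mathbf{b}_l\rVert<\lambda/2$ together with $\lVert\mathbf{d}_p-\mathbf{d}_q\rVert\le 2$ to force the integer multiple to be zero along three independent directions. Your version is in fact slightly more explicit than the paper's, spelling out the inner product as an average of unimodular numbers and invoking the equality case of the triangle inequality where the paper simply says ``collinear with a unit modulus proportionality factor.''
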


\begin{proof}
Assume that $\lvert\mathbf{u}_{p}^{\mathsf{H}}\mathbf{u}_{q}\rvert=1$. Since
they are of unit norm, vectors $\mathbf{u}_{p}$ and $\mathbf{u}_{q}$ are
collinear with a unit modulus proportionality factor. Hence, from
\eqref{notationUp-eq}, for all $j,k = 1,\dots, n_{1}$, $(\mathbf{b}%
_{j}-\mathbf{b}_{k})^{\mathsf{T}}(\mathbf{d}_{p}-\mathbf{d}_{q})\in
\lambda\mathbb{Z}$, where $\lambda$ is as in
Definition~\ref{resolvent-def}. Since $\{\mathbf{b}_{1},\dots,\mathbf{b}%
_{n}\}$ is resolvent, there exist $k$, $l$ such that $0<\Vert\mathbf{b}%
_{k}-\mathbf{b}_{l}\Vert<\lambda/2$. As the vectors $\mathbf{d}_{p}$
are of unit norm, $\Vert\mathbf{d}_{p}-\mathbf{d}_{q}\Vert\leq2$ and we
necessarily have that $(\mathbf{b}_{k}-\mathbf{b}_{l})^{\mathsf{T}}%
(\mathbf{d}_{p}-\mathbf{d}_{q})=0$, i.e., $\mathbf{d}_{p}-\mathbf{d}_{q}$
is orthogonal to $\mathbf{b}_{k}-\mathbf{b}_{l}$. The same
reasoning can be carried out with the other two independent vectors. The
vector $\mathbf{d}_{p}-\mathbf{d}_{q}$ must be $\mathbf{0}$ because it is orthogonal to
three linearly independent vectors in $\mathbb{R}^{3}$. The converse is
immediate from the definition of $\mathbf{u}_{q}$.
\end{proof}

Note that the condition in Definition~\ref{resolvent-def} is not very
restrictive, since sensor arrays usually contain sensors separated by half a
wavelength or less. Thanks to Proposition~\ref{collinear-prop}, we now know
that uniqueness of the matrix factor $U=[\mathbf{u}_{1}%
,\dots,\mathbf{u}_{r}]$ and the identifiability of the directions of arrival
$\mathbf{d}_{p}$ are equivalent. By the results of
Section~\ref{angular-sec}, the uniqueness can be ensured by a constraint on coherence such
as \eqref{final-eq}.

As in Section~\ref{antennaProduct-sec}, the second coherence may be interpreted as a measure of the
minimal angular separation between paths, viewed from the subarray defining translations.

The third coherence is the maximal correlation coefficient
between signals received from various paths on the array
\[
\mu_{3}=\max_{p\neq q}\frac{\lvert\boldsymbol{\sigma}_{p}^{\mathsf{H}%
}\boldsymbol{\sigma}_{q}\rvert}{\Vert\boldsymbol{\sigma}_{p}\Vert
\Vert\boldsymbol{\sigma}_{q}\Vert}.
\]

In conclusion, the best rank-$r$ approximation exists and is unique if either
signals propagating through various paths are not too correlated, or if
their direction of arrival are not too close, where ``not too'' is taken to
mean that the product of coherences satisfies inequality
\eqref{final-eq} of Corollary~\ref{final-cor}. In other words, one can
separate paths with arbitrarily high correlation provided they are sufficiently well
separated in space.

Hence, the decomposition of a sensor array into a tensor product of two (or more)
sensor subarrays depends not only on Kruskal's bound, as elaborated in
Section~\ref{antennaProduct-sec}, but also on the ability of the latter subarrays
to separate two distinct directions of arrival (cf.\ Proposition~\ref{collinear-prop}).

\subsection{CDMA communications}

The application to antenna array processing we described in
Section~\ref{JointChannelSource-sec} also applies to all source separation problems
\cite{ComoJ10}, provided an additional diversity is available. An example is the case
of Code Division Multiple Access (CDMA)
communications. In fact, as pointed out in \cite{SidiGB00:ieeesp}, it
is possible to distinguish between symbol and chip diversities. We will elaborate on the latter example.

Consider a downlink CDMA communication with $r$ users, each assigned a
spreading sequence $C_{p}(k)$, $p=1,\dots, r$, $k=1,\dots, n$. Denote by
$A_{ip}$ the complex gain between sensor $i$, $i=1,\dots,m$, and user $p$,
by $S_{jp}$ the symbol sequence transmitted by user $p$, $j\in\mathbb{Z}$, and
by $H_{p}(k)$ the channel impulse response of user $p$. The signal received
on sensor $i$ during the $k$th chip of the $j$th symbol period takes the
form
\[
T_{ijk}=\sum_{p=1}^{r}A_{ip}S_{jp}B_{kp}%
\]
where $B_{kp}=\sum_{t}H_{p}(k-t)C_{p}(t)$ denotes the output of the $p$th
channel excited by the $p$th coding sequence, upon removal of the guard chips
(which may be affected by two different consecutive symbols) \cite{SidiGB00:ieeesp}.

The columns of matrix $B = (B_{kp}) \in \mathbb{R}^{n \times r}$ are often referred to as ``effective
codes'', and coincide with spreading codes if the channel is memoryless and
noiseless. In practice, the receiver filter is matched to the transmitter
shaping filter combined with the propagation channel, so that effective and
spreading codes are ideally proportional. Under these conditions, the
coherence $\mu_{C}$ accounts for the angular separation between spreading
sequences: $\mu_{C}=0$ means that they are orthogonal. On the other hand,
$\mu_{A}=0$ means that the symbol sequences are all uncorrelated. Lastly, as seen
in Proposition~\ref{collinear-prop}, $\mu_{B}=1$ means that the directions of arrival
are collinear.

In order to avoid multiple access interferences, spreading sequences are
usually chosen to be uncorrelated for all delays, which implies  that
they are orthogonal. However, the results obtained in Section~\ref{angular-sec} show that
spreading sequences \textit{do not need to be orthogonal}, and symbol sequences \textit{need not
be uncorrelated}, as long as the directions of arrival are not collinear. In particular,
shorter spreading sequences may be used for the same number of users, which
increases throughput. Alternatively, for a given spreading gain, one may increase the number of
users. These are possible because the coherence conditions
in Section~\ref{angular-sec} allow one to relax the constraint
of having almost orthogonal spreading sequences. On the other hand, some
directions of arrival may be collinear if the corresponding spreading
sequences are sufficiently well separated angularly. These conclusions are
essentially valid when users are synchronized, i.e., for downlink communications.

\subsection{Polarization}

The use of polarization as an additional diversity has its roots in \cite{NehoP94:TSP}.
Several attempts to use this diversity in the framework of tensor-based source localization and estimation
can be found in the literature \cite{GuoMBZL11:TSP}.

In this framework, we consider again an array of $n_{1}$ sensors, whose
locations are given by $\mathbf{b}_{i} \in \mathbb{R}^{3}$, $i =1,\dots,n$. We
assume a narrow-band transmission in the far field (i.e., sources, or source
paths, are all seen as plane waves at the receiver sensor array). The
difference with Section \ref{antennaProduct-sec} is that translation
diversity is not mandatory anymore, provided that the impinging waves are polarized and that their polarization is neither linear nor circular. One measures
the electric and magnetic fields at each sensor as a function of time, so that
$n_{2}=6$. More precisely, $\mathbf{v}_{p}$ of \eqref{notationUp-eq} is replaced by
\begin{equation}
\mathbf{v}_{p}=B_{p}\mathbf{g}_{p}%
\end{equation}
where $B_{p} \in \mathbb{R}^{6\times 2}$ depends only on the direction of
arrival $\mathbf{d}_{p}$ (defined in Section~\ref{antennaProduct-sec}), and
$\mathbf{g}_{p} \in \mathbb{C}^{2}$ depends only on the orientation and ellipticity of the
polarization of the $p$th wave.

Coherences $\mu_{1}$ and $\mu_{3}$ are the same as in Section~\ref{signification-sec}, and represent respectively the angular
separation between directions of arrival, and correlation between
arriving sources. It is slightly more difficult to see the significance of $\mu_{2}$,
the coherence associated with polarization.

For this, we need to go into more details \cite{NehoP94:TSP}. Let  $\alpha_{p}\in(-\pi/2,\pi/2]$
and $\beta_{p}\in(-\pi/4,0)\cup(0,\pi/4)$ denote respectively the orientation
and ellipticity angles of the polarization of the $p$th wave. Let $\theta_{p}\in[0,2\pi)$ and $\phi_{p}\in(-\pi/2,\pi/2]$ denote
respectively the azimuth and elevation of the direction of arrival of the $p$th path. We have
\[
B_{p}=\frac{1}{\sqrt{2}}%
\begin{bmatrix}
\mathbf{e}_{p} & \mathbf{f}_{p}\\
\mathbf{f}_{p} & -\mathbf{e}_{p}%
\end{bmatrix}
, \qquad\mathbf{g}_{p}=Q(\alpha_{p})\mathbf{h}_{p},
\]
where
\begin{gather*}
\mathbf{e}_{p}=%
\begin{bmatrix}
-\sin\theta_{p}\\
~\cos\theta_{p}\\
0
\end{bmatrix}
,\qquad\mathbf{f}_{p}=%
\begin{bmatrix}
-\cos\theta_{p}\sin\phi_{p}\\
-\sin\theta_{p}\sin\phi_{p}\\
\cos\phi_{p}%
\end{bmatrix},
\\
Q(\alpha)=%
\begin{bmatrix}
\cos\alpha & \sin\alpha\\
-\sin\alpha & \cos\alpha
\end{bmatrix}
,\qquad\mathbf{h}_{p}=%
\begin{bmatrix}
\cos\beta_{p}\\
\jmath\sin\beta_{p}%
\end{bmatrix}.
\end{gather*}
The unit vector defining the $p$th direction of arrival is
\[
\mathbf{d}_{p}=%
\begin{bmatrix}
\cos\theta_{p}\cos\phi_{p}\\
\sin\theta_{p}\cos\phi_{p}\\
\sin\phi_{p}%
\end{bmatrix}.
\]
So the triplet $(\mathbf{d}_{p},\mathbf{e}_{p},\mathbf{f}_{p})$ forms a
right orthonormal triad.

\begin{lemma}
\label{polar-lemma}$\lvert\mathbf{g}_{p}^{\mathsf{H}}\mathbf{g}_{q}\rvert=1$
if and only if $\alpha_{p}=\alpha_{q}+k\pi$ and $\beta_{p}=\beta_{q}$,
$k\in\mathbb{Z}$.
\end{lemma}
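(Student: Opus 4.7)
The plan is to reduce everything to a direct two-by-two computation and then exploit the stated ranges of $\alpha_p$ and $\beta_p$ to pin down the equality case. First I would use the fact that $Q(\alpha)$ is a real rotation matrix, so $Q(\alpha_p)^{\mathsf{T}} Q(\alpha_q) = Q(\alpha_q - \alpha_p)$ and $\lVert \mathbf{g}_p \rVert = \lVert \mathbf{h}_p \rVert = 1$. Setting $\gamma := \alpha_q - \alpha_p$, this gives
\[
\mathbf{g}_p^{\mathsf{H}} \mathbf{g}_q \;=\; \mathbf{h}_p^{\mathsf{H}} Q(\gamma) \mathbf{h}_q.
\]
Plugging in $\mathbf{h}_p = (\cos\beta_p, \jmath\sin\beta_p)^{\mathsf{T}}$, expanding, and grouping real and imaginary parts via the addition formulas $\cos(\beta_p-\beta_q) = \cos\beta_p\cos\beta_q + \sin\beta_p\sin\beta_q$ and $\sin(\beta_p+\beta_q) = \sin\beta_p\cos\beta_q + \cos\beta_p\sin\beta_q$, I expect the clean expression
\[
\mathbf{g}_p^{\mathsf{H}} \mathbf{g}_q \;=\; \cos\gamma \cos(\beta_p - \beta_q) \;+\; \jmath\,\sin\gamma \sin(\beta_p + \beta_q).
\]

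Next, I would square the modulus to obtain
\[
\lvert \mathbf{g}_p^{\mathsf{H}} \mathbf{g}_q \rvert^2 \;=\; \cos^2\gamma \,\cos^2(\beta_p-\beta_q) \;+\; \sin^2\gamma\,\sin^2(\beta_p+\beta_q).
\]
Since each factor $\cos^2(\beta_p-\beta_q), \sin^2(\beta_p+\beta_q) \le 1$, this expression is bounded above by $\cos^2\gamma+\sin^2\gamma = 1$, consistent with Cauchy--Schwarz. Equality to $1$ forces \emph{each} of the two term-wise inequalities to saturate (because the slack in one cannot be compensated by the other), giving the system: either $\cos\gamma=0$ or $\cos^2(\beta_p-\beta_q)=1$; and either $\sin\gamma=0$ or $\sin^2(\beta_p+\beta_q)=1$.

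The last step is to read off the solutions using the angular ranges. Since $\beta_p,\beta_q \in (-\pi/4,0)\cup(0,\pi/4)$, the sum satisfies $\beta_p+\beta_q \in (-\pi/2,\pi/2)$, so $\sin^2(\beta_p+\beta_q) = 1$ is impossible; this forces $\sin\gamma=0$, i.e.\ $\gamma \in \pi\mathbb{Z}$, which is equivalent to $\alpha_p = \alpha_q + k\pi$. With $\sin\gamma = 0$ we have $\cos\gamma = \pm 1 \ne 0$, so the first disjunction requires $\cos^2(\beta_p - \beta_q) = 1$; since $\beta_p - \beta_q \in (-\pi/2,\pi/2)$, this forces $\beta_p = \beta_q$. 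The converse direction is immediate by substitution into the closed-form expression above. The only mildly delicate step is the saturation argument just described: it must be justified that both term-wise inequalities necessarily become equalities together, which I would do by noting that a strict deficit in either would leave $\lvert \mathbf{g}_p^{\mathsf{H}} \mathbf{g}_q \rvert^2 < \cos^2\gamma + \sin^2\gamma = 1$.
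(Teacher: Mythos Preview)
Your proof is correct. Both you and the paper begin with the same reduction $\mathbf{g}_p^{\mathsf{H}}\mathbf{g}_q=\mathbf{h}_p^{\mathsf{H}}Q(\gamma)\mathbf{h}_q$, but then diverge. The paper invokes the equality case of Cauchy--Schwarz to conclude that $\mathbf{h}_p$ and $Q(\gamma)\mathbf{h}_q$ are collinear, and then exploits the structural fact that $\mathbf{h}_p$ has a real first entry and a purely imaginary second entry to force $\sin\gamma=0$, whence $Q(\gamma)=\pm I$ and $\mathbf{h}_p=\pm\mathbf{h}_q$; the range of $\beta$ rules out the minus sign. You instead carry out the two-by-two multiplication explicitly to obtain the closed form $\cos\gamma\cos(\beta_p-\beta_q)+\jmath\sin\gamma\sin(\beta_p+\beta_q)$, bound its squared modulus termwise by $\cos^2\gamma+\sin^2\gamma=1$, and read off the equality case from the resulting disjunctions together with the open range $(-\pi/2,\pi/2)$ for $\beta_p\pm\beta_q$. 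Your route is slightly more computational but has the virtue of being entirely self-contained and of producing an explicit formula for the inner product; the paper's route is shorter but the step from ``collinear with a unimodular scalar'' to ``$\sin\gamma=0$'' is left somewhat terse. Your saturation argument is fine once phrased as you do at the end: writing $1-\lvert\mathbf{g}_p^{\mathsf{H}}\mathbf{g}_q\rvert^2=\cos^2\gamma\bigl(1-\cos^2(\beta_p-\beta_q)\bigr)+\sin^2\gamma\bigl(1-\sin^2(\beta_p+\beta_q)\bigr)$ makes the disjunctions immediate since both summands are nonnegative.
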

\begin{proof}
First note that $Q(\alpha_{p})^{\mathsf{H}%
}Q(\alpha_{q})=Q(\alpha_{q}-\alpha_{p})$. Hence $\mathbf{g}_{p}^{\mathsf{H}%
}\mathbf{g}_{q}$ can be of unit modulus only if $\mathbf{h}_{p}$ and
$Q(\alpha_{q}-\alpha_{p})\mathbf{h}_{q}$ are collinear. But the first entry of
$\mathbf{h}_{p}$ is real and the second is purely imaginary. So the
corresponding imaginary and real parts of $Q(\alpha_{q}-\alpha_{p}%
)\mathbf{h}_{q}$ must be zero, which implies that $\sin(\alpha_{q}-\alpha
_{p})=0$. Consequently $Q(\alpha_{q}-\alpha_{p})=\pm I$, which yields
$\mathbf{h}_{p}=\pm\mathbf{h}_{q}$. But because the angle $\beta$ lies in the
interval $(-\pi/4,\pi/4)$, only the positive sign is acceptable.
\end{proof}

\begin{proposition}
\label{polar-prop}$\lvert\mathbf{v}_{p}^{\mathsf{H}}\mathbf{v}_{q}\rvert\leq
1$, with equality if and only if $\alpha_{p}=\alpha_{q}+k\pi$, $\beta
_{p}=\beta_{q}$, $\theta_{p}=\theta_{q}+k^{\prime}\pi$ and $\phi_{p}=\phi_{q}%
$, $k,k^{\prime}\in\mathbb{Z}$.
\end{proposition}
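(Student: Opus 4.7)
The plan is to exploit the factored structure $\mathbf{v}_p = B_p \mathbf{g}_p$ through two nested Cauchy--Schwarz estimates: one for the inner product of $\mathbf{g}_p$ with $B_p^{\mathsf{H}} B_q \mathbf{g}_q$, and one to bound the operator norm of $B_p^{\mathsf{H}} B_q$ by a geometric quantity depending only on the directions of arrival. Tracking both equality cases will then match the stated conditions on $(\theta, \phi, \alpha, \beta)$.

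I would first establish $\|\mathbf{v}_p\| = 1$. A direct block computation using the orthonormality of the triad $\{\mathbf{e}_p, \mathbf{f}_p, \mathbf{d}_p\}$ yields $B_p^{\mathsf{H}} B_p = I_2$, and since $Q(\alpha_p)$ is real orthogonal, $\|\mathbf{v}_p\|^2 = \mathbf{g}_p^{\mathsf{H}}\mathbf{g}_p = \|\mathbf{h}_p\|^2 = \cos^2\beta_p + \sin^2\beta_p = 1$. Cauchy--Schwarz in $\mathbb{C}^6$ then immediately gives $|\mathbf{v}_p^{\mathsf{H}} \mathbf{v}_q| \leq 1$.

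For the sharper analysis, the same block computation produces
\[
B_p^{\mathsf{H}} B_q = \tfrac{1}{2}\begin{bmatrix} a & b \\ -b & a \end{bmatrix}, \qquad a := \mathbf{e}_p^{\mathsf{T}}\mathbf{e}_q + \mathbf{f}_p^{\mathsf{T}}\mathbf{f}_q, \quad b := \mathbf{e}_p^{\mathsf{T}}\mathbf{f}_q - \mathbf{f}_p^{\mathsf{T}}\mathbf{e}_q,
\]
which is a scalar multiple of a $2\times 2$ orthogonal matrix, so its operator norm equals $\tfrac{1}{2}\sqrt{a^2 + b^2}$. The key identity $a + ib = (\mathbf{e}_p + i\mathbf{f}_p)^{\mathsf{H}}(\mathbf{e}_q + i\mathbf{f}_q)$ combined with complex Cauchy--Schwarz in $\mathbb{C}^3$ yields $a^2 + b^2 \leq \|\mathbf{e}_p + i\mathbf{f}_p\|^2 \|\mathbf{e}_q + i\mathbf{f}_q\|^2 = 4$, hence $\|B_p^{\mathsf{H}} B_q\|_{\mathrm{op}} \leq 1$ and the bound $|\mathbf{v}_p^{\mathsf{H}} \mathbf{v}_q| \leq 1$ follows again, this time with two tractable equality conditions.

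The main obstacle will be disentangling those equality conditions and translating the geometric one into the angular parameters. Equality $a^2 + b^2 = 4$ forces $\mathbf{e}_p + i\mathbf{f}_p = e^{i\psi}(\mathbf{e}_q + i\mathbf{f}_q)$ for some $\psi$; separating real and imaginary parts exhibits $(\mathbf{e}_p, \mathbf{f}_p)$ as a planar rotation of $(\mathbf{e}_q, \mathbf{f}_q)$, and invariance of the cross product under such in-plane rotations gives $\mathbf{d}_p = \mathbf{e}_p \times \mathbf{f}_p = \mathbf{e}_q \times \mathbf{f}_q = \mathbf{d}_q$. The explicit parameterization of $\mathbf{d}$ then produces $\phi_p = \phi_q$ and $\theta_p = \theta_q + k'\pi$, which in turn force $\mathbf{e}_p = \mathbf{e}_q$ and $\mathbf{f}_p = \mathbf{f}_q$, so $B_p = B_q$. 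The surviving Cauchy--Schwarz equality becomes $|\mathbf{g}_p^{\mathsf{H}}\mathbf{g}_q| = 1$, which is exactly the hypothesis of Lemma~\ref{polar-lemma} and supplies $\alpha_p = \alpha_q + k\pi$ and $\beta_p = \beta_q$. The converse direction is a straightforward substitution.
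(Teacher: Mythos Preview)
Your argument follows the same skeleton as the paper's: compute the $2\times 2$ matrix $B_p^{\mathsf{T}}B_q$, show its operator norm is at most $1$, analyze the equality case, and reduce to Lemma~\ref{polar-lemma}. The difference is in how the core bound and its equality case are handled. The paper embeds $(\mathbf{e}_p,\mathbf{f}_p)$ as a unit vector $\mathbf{z}\in\mathbb{R}^6$, writes $\gamma=\mathbf{z}^{\mathsf{T}}\mathbf{w}$ and $\eta=\mathbf{z}^{\mathsf{T}}\mathbf{w}'$ for orthonormal $\mathbf{w},\mathbf{w}'$, and bounds $\gamma^2+\eta^2$ by an orthogonal-projection argument, then extracts the angular conditions by explicit trigonometric manipulation. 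You instead recognize $a+ib=(\mathbf{e}_p+i\mathbf{f}_p)^{\mathsf{H}}(\mathbf{e}_q+i\mathbf{f}_q)$ in $\mathbb{C}^3$, apply complex Cauchy--Schwarz, and use the cross-product identity $\mathbf{d}=\mathbf{e}\times\mathbf{f}$ together with its invariance under in-plane rotations to obtain $\mathbf{d}_p=\mathbf{d}_q$ directly. The two derivations are equivalent under the standard identification $\mathbb{C}^3\cong\mathbb{R}^6$ (the paper's $\mathbf{w}'$ corresponds to $-i(\mathbf{e}_q+i\mathbf{f}_q)$), but your packaging is more geometric and avoids the trigonometry. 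One minor looseness in your write-up: from $\mathbf{d}_p=\mathbf{d}_q$ the parameterization actually gives $\theta_p\equiv\theta_q\pmod{2\pi}$ (not $\pmod{\pi}$) when $\phi\neq\pi/2$, and leaves $\theta$ unconstrained when $\phi=\pi/2$; your conclusion $\mathbf{e}_p=\mathbf{e}_q$, $\mathbf{f}_p=\mathbf{f}_q$ still follows in the generic case, and the paper's proof glosses over the same edge-case subtlety.
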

\begin{proof}
We have $\lvert\mathbf{v}%
_{p}^{\mathsf{H}}\mathbf{v}_{q}\rvert=\lvert\mathbf{g}_{p}^{\mathsf{H}}%
B_{p}^{\mathsf{T}}B_{q}\mathbf{g}_{q}\rvert$. Notice that the matrix
$B_{p}^{\mathsf{T}}B_{q}$ is of the form
\[
B_{p}^{\mathsf{T}}B_{q}=%
\begin{bmatrix}
\gamma & \eta\\
-\eta & \gamma
\end{bmatrix}
\]
where $\gamma$ and $\eta$ are real, $\gamma=\frac{1}{2}(\mathbf{e}%
_{p}^{\mathsf{T}}\mathbf{e}_{q}+\mathbf{f}_{p}^{\mathsf{T}}\mathbf{f}_{q})$
and $\eta=\frac{1}{2}(\mathbf{e}_{p}^{\mathsf{T}}\mathbf{f}_{q}-\mathbf{f}%
_{p}^{\mathsf{T}}\mathbf{e}_{q})$. Since $\mathbf{g}_{p}$ and
$\mathbf{g}_{q}$ are of unit norms, $\lvert\mathbf{v}_{p}^{\mathsf{H}%
}\mathbf{v}_{q}\rvert$ can be of unit modulus only if $B_{p}%
^{\mathsf{T}}B_{q}$ has an eigenvalue of unit modulus, which requires that
$\gamma^{2}+\eta^{2}=1$. We now prove that $\gamma^{2}+\eta^{2}\leq1$
with equality if and only if the four sets of equalities hold.

With this goal in mind, define the $6$-dimensional vectors
\[
\mathbf{z}=\frac{1}{\sqrt{2}}%
\begin{bmatrix}
\mathbf{e}_{p}\\
\mathbf{f}_{p}%
\end{bmatrix}
,\quad\mathbf{w}=\frac{1}{\sqrt{2}}%
\begin{bmatrix}
\mathbf{e}_{q}\\
\mathbf{f}_{q}%
\end{bmatrix}
,\quad\mathbf{w}^{\prime}=\frac{1}{\sqrt{2}}%
\begin{bmatrix}
\mathbf{f}_{q}\\
-\mathbf{e}_{q}%
\end{bmatrix}
.
\]
Then $\gamma=\mathbf{z}^{\mathsf{T}}\mathbf{w}$ and $\gamma=\mathbf{z}%
^{\mathsf{T}}\mathbf{w}^{\prime}$. Decompose $\mathbf{z}$ into two
orthogonal parts: $\mathbf{z}=\mathbf{z}_{0}+\mathbf{z}_{1}$, with
$\mathbf{z}_{0}\in\operatorname{span}\{\mathbf{w},\mathbf{w}^{\prime}\}$ and
$\mathbf{z}_{0}\bot\mathbf{z}_{1}$. Clearly, $\gamma^{2}+\eta^{2}%
=\Vert\mathbf{z}_{0}\Vert^{2}$. Moreover,
$\Vert\mathbf{z}_{0}\Vert^{2}\leq\Vert\mathbf{z}\Vert^{2}=1$, with equality if
and only if $\mathbf{z}\in\operatorname{span}\{\mathbf{w},\mathbf{w}^{\prime
}\}$. By inspection of the definitions of $\mathbf{e}_{p}$ and $\mathbf{e}%
_{q}$, we see that the third entry of $\mathbf{z}$ and $\mathbf{w}$ is $\mathbf{0}$.
Hence $\mathbf{z}\in\operatorname{span}\{\mathbf{w},\mathbf{w}^{\prime}\}$ is
possible only if either $\mathbf{z}$ is collinear to $\mathbf{w}$ or if the
third entry of $\mathbf{w}^{\prime}$ is $\mathbf{0}$. In the latter case, it means
that $\phi_{q}=\pi/2$, and so $\phi_{p}=\pi/2$ and $\theta_{p}%
=\theta_{q}$. In the former case, it can be seen that $\sin\theta_{p}%
=\sin\theta_{q}$, and finally that $\phi_{p}=\phi_{q}$.

The last step is to rewrite $\gamma$ and $\eta$ as a function of angle
$\theta_{p}-\theta_{q}$, using trigonometric relations: $\gamma=\cos
(\theta_{p}-\theta_{q})(1+\sin\phi_{p}\sin\phi_{q})+\cos\phi_{p}\cos\phi_{q}$
and $\eta=\sin(\theta_{p}-\theta_{q})(\sin\phi_{p}+\sin\phi_{q})$. This
eventually shows that $\gamma=1$ and $\eta=0$. As a consequence,
$\lvert\mathbf{v}_{p}^{\mathsf{H}}\mathbf{v}_{q}\rvert=1$ only if
$B_{p}^{\mathsf{T}}B_{q}=I$, and the result follows from Lemma~\ref{polar-lemma}.
\end{proof}

Proposition~\ref{polar-prop} shows that a
constraint on the coherence $\mu_{2}$ compels source paths to have either
different directions of arrival or different polarizations, giving $\mu_2$ physical meaning.

\subsection{Fluorescence spectral analysis\label{sec:fluo}}

Here is a well-known application to fluorescence spectral analysis originally discussed in \cite{SmilBG04}.
We use the notations in Example~\ref{eg:hitch}.
Suppose we have $l$ samples with an unknown number of pure substances in
different concentrations that are fluorescent. If $a_{ijk}$ is the
measured fluorescence emission intensity at wavelength $\lambda_{j}%
^{\operatorname{em}}$ of the $i$th sample excited with light of wavelength $\lambda
_{k}^{\operatorname{ex}}$. The measured data is a $3$-dimensional hypermatrix
$A=(a_{ijk})\in\mathbb{R}^{l\times m\times n}$. At low concentrations, Beer's
law of spectroscopy (which is in turn a consequence of fundamental principles
in quantum mechanics) can be linearized \cite{LuciMRB09:cils}, yielding a
rank-retaining decomposition
\[
A=\mathbf{x}_{1}\otimes\mathbf{y}_{1}\otimes\mathbf{z}_{1}+\dots
+\mathbf{x}_{r}\otimes\mathbf{y}_{r}\otimes\mathbf{z}_{r}.
\]
This reveals the true chemical factors responsible for the data:
$r=\operatorname{rank}(A)$ gives the number of pure substances in the
mixtures, $\mathbf{x}_{p}=(x_{1p},\dots,x_{lp})$ gives the relative
concentrations of $p$th substance in specimens $1,\dots,l$; $\mathbf{y}%
_{p}=(y_{1p},\dots,y_{mp})$ gives the excitation spectrum of $p$th substance; and
$\mathbf{z}_{p}=(z_{1p},\dots,z_{np})$ gives the emission spectrum of $p$th
substance. The emission and excitation spectra would then allow one to identify the pure substances.

Of course, this is only valid in an idealized situation when measurements
are performed perfectly without error and noise. Under realistic noisy
circumstances, one would then need to a find best rank-$r$ approximation, which is where the
coherence results of Section~\ref{angular-sec} play a role. In this case, $\mu (\mathbf{x}_1,\dots,\mathbf{x}_r)$
measures the relative abundance of the pure substances in the samples while $\mu (\mathbf{y}_1,\dots,\mathbf{y}_r)$ and $\mu (\mathbf{z}_1,\dots,\mathbf{z}_r)$
measure the spectroscopic likeness of these pure substances in the sense of absorbance and fluorescence respectively. 

\subsection{Statistical independence induces diversity}

We will now discuss a
somewhat different way to achieve diversity. Assume the linear model below
\begin{equation}
\mathbf{x}(t)=U\mathbf{s}(t), \label{linearModel-eq}%
\end{equation}
where only the signal $\mathbf{x}(t)$ is observed, $U=[\mathbf{u}_{1}%
,\dots,\mathbf{u}_{r}]$ is an unknown $n\times r$ mixing matrix, and $\mathbf{s}(t) = (s_1(t),\dots, s_r(t))$
has mutually statistically independent components. One may construct $K_d(\mathbf{x})$, the $d$th order cumulant hypermatrix \cite{Mccu87} of
$\mathbf{x}(t)$,  and it will satisfy the multilinear model
\[
K_d(\mathbf{x})= \sum_{p=1}^{r}\lambda_{p}(\mathbf{s}) \mathbf{u}_{p}\otimes\dots\otimes\mathbf{u}_{p}%
\]
where $\lambda_{p}(\mathbf{s}) $ denotes the $p$th diagonal entry of the $d$th cumulant hypermatrix of $\mathbf{s}$. Because of the
statistical independence of  $\mathbf{s}(t)$,  the off-diagonal entries of the $d$th cumulant hypermatrix of $\mathbf{s}$ are zero \cite{ComoJ10,Mccu87}.
If $d\geq3$, then the matrix $U$ and the entries $\lambda_{p}(\mathbf{s}) $ can be identified \cite{ComoJ10}. One may apply the results of Section~\ref{angular-sec} to 
deduce uniqueness of the solution.

Such problems generalize to convolutive mixtures and have applications in
telecommunications, radar, sonar, speech processing, and biomedical
engineering \cite{ComoJ10}.

\subsection{Nonstationarity induces diversity}

If a signal $x(t)$ is nonstationary, its time-frequency transform, defined by
\[
X(t,f)=\int x(u)\kappa(u-t;f)\,du
\]
for some given kernel $\kappa$, bears information. If variables $t$ and $f$
are discretized, then the values of $X(t,f)$ can be stored in a matrix $X$;
and the more nonstationary the signal $x(t)$, the larger the rank of $X$. A similar
statement can be made on a signal $y(\mathbf{z})$ depending on a spatial
variable $\mathbf{z}$. The discrete values of the space-wavevector transform
$Y(\mathbf{z},\mathbf{w})$ of a field $y(\mathbf{z})$ can be stored in a
matrix $Y$; and the less homogeneous the field $y(\mathbf{z})$, the larger the
rank of $Y$. This is probably the reason why algorithms proposed in
\cite{BeckCAHM12:SP, WeisRHJH09:taipei} permit one to localize and extract dipole
contributions in the brains using a multilinear model, provided that one
has distinct time-frequency or space-wavevector patterns. Nevertheless, such
localization is guaranteed to be successful only under restrictive assumptions.

\section{Further work}

A separate article discussing practical algorithms for the bounded coherence
best rank-$r$ multilinear approximation is under preparation with additional
coauthors. These algorithms follow the general strategy of the greedy
approximations \textsc{woga} and \textsc{oga} discussed in
Sections~\ref{sec:exact} and \ref{sec:greedy} but contain other elements
exploiting the special separable structure of our problem. Extensive numerical
experiments will be provided in the forthcoming article.

\section*{Acknowledgement}

We thank Ignat Domanov for pointing out that in
Theorem~\ref{existence-prop}, the factor on the right hand side may be
improved from $r$ to $r-1$, and that the improvement is sharp. We owe special thanks to Harm Derksen
for pointing out an error in an earlier version of Section~\ref{sec:nuclear} and for very helpful
discussions regarding nuclear norm of tensors.
Sections~\ref{sec:exact} and \ref{sec:greedy} came from an enlightening series
of lectures Vladimir Temlyakov gave at the IMA in Minneapolis and the useful
pointers he graciously provided afterwards. We gratefully acknowledge Tom
Luo, Nikos Sidiropoulos, Yuan Yao, and two anonymous reviewers for their helpful comments.

The work of LHL is partially supported by AFOSR Young Investigator Award FA9550-13-1-0133,
NSF Collaborative Research Grant DMS 1209136, and NSF CAREER Award DMS 1057064. The work of PC is funded
by the European Research Council under the European Community's Seventh
Framework Programme FP7/2007--2013 Grant Agreement no.~320594.

\end{document}